\newcommand{\acli}[1]{\emph{\acl{#1}}}		
\newcommand{\acdef}[1]{\emph{\acl{#1}} \textup{(\acs{#1})}\acused{#1}}		
\colorlet{MyRed}{FireBrick!50!Crimson}
\colorlet{MyBlue}{DodgerBlue!75!black}
\colorlet{MyGreen}{DarkGreen!85!black}
\colorlet{MyViolet}{DarkMagenta}
\colorlet{MyLightBlue}{DodgerBlue!20}
\colorlet{MyLightGreen}{MyGreen!20}
\colorlet{PrimalColor}{MyBlue}
\colorlet{PrimalFill}{MyLightBlue}
\colorlet{DualColor}{MyRed}
\colorlet{RevColor}{DarkMagenta}
\colorlet{LinkColor}{MediumBlue}
\newcommand{\afterhead}{.\;}		
\newcommand{\para}[1]{\smallskip\paragraph{\textbf{#1\afterhead}}}
\newcommand{\EMAIL}[1]{\email{\href{mailto:#1}{#1}}}
\crefname{algo}{Algorithm}{Algorithms}
\crefname{assumption}{Assumption}{Assumptions}
\crefname{case}{Case}{Cases}
\theoremstyle{plain}
\newtheorem{lemma}{Lemma} 
\newtheorem*{corollary*}{Corollary}	
\theoremstyle{definition}
\newtheorem{definition}{Definition}	
\newtheorem{assumption}{Assumption}	
\newtheorem{example}{Example} 
\newtheorem*{definition*}{Definition} 
\newtheorem*{assumption*}{Assumptions} 
\newtheorem*{example*}{Example}	
\theoremstyle{remark}
\newtheorem{remark}{Remark}	
\newtheorem*{remark*}{Remark} 
\newcounter{proofpart}
\numberwithin{example}{section}		
\newcommand{\draft}[1]{#1}	
\newcommand{\revise}[1]{#1}
\newcommand{\newmacro}[2]{\newcommand{#1}{\draft{#2}}} 
\newcommand{\newop}[2]{\DeclareMathOperator{#1}{\draft{#2}}} 
\DeclarePairedDelimiter{\bracks}{[}{]} 
\DeclarePairedDelimiter{\parens}{(}{)} 
\DeclarePairedDelimiter{\abs}{\lvert}{\rvert} 
\DeclarePairedDelimiter{\setof}{\{}{\}} 
\DeclarePairedDelimiterX{\setdef}[2]{\{}{\}}{#1:#2} 
\DeclarePairedDelimiterXPP{\exclude}[1]{\mathopen{}\setminus}{\{}{\}}{}{#1} 
\newcommand{\R}{\mathbb{R}} 
\DeclareMathOperator*{\argmin}{arg\,min} 
\DeclareMathOperator{\bigoh}{\mathcal{O}} 
\DeclareMathOperator{\cl}{cl} 
\DeclareMathOperator{\diam}{diam} 
\DeclareMathOperator{\dist}{dist} 
\DeclareMathOperator{\eig}{eig} 
\newop{\err}{Err} 
\DeclareMathOperator{\grad}{\nabla} 
\DeclareMathOperator{\Jac}{Jac} 
\DeclareMathOperator{\one}{\mathds{1}} 
\DeclareMathOperator{\relint}{ri} 
\newcommand{\cf}{cf.\xspace} 
\newcommand{\eg}{e.g.,\xspace} 
\newcommand{\ie}{i.e.,\xspace} 
\newcommand{\textpar}[1]{\textup(#1\textup)} 
\newcommand{\alt}[1]{#1'} 
\newcommand{\altalt}[1]{#1''} 
\newmacro{\dd}{\:d} 
\newcommand{\eps}{\varepsilon} 
\newcommand{\pd}{\partial} 
\newcommand{\insum}{\sum\nolimits} 
\newmacro{\const}{c} 
\newmacro{\Const}{\rho} 
\newmacro{\coefalt}{\mu} 
\NewDocumentCommand{\coef}{O{$\lambda$}}{\draft{#1}}
\newmacro{\param}{\theta} 
\newmacro{\params}{\Theta} 
\newmacro{\pexp}{p} 
\newmacro{\qexp}{q} 
\newmacro{\rexp}{r} 
\newmacro{\radius}{r}
\newmacro{\beforestart}{0} 
\newmacro{\start}{1} 
\newmacro{\afterstart}{2} 
\newmacro{\running}{\start,\afterstart,\dotsc} 
\newmacro{\halfrunning}{1,3/2,2\dotsc} 
\newmacro{\run}{t} 
\newmacro{\runalt}{s} 
\newmacro{\runaltalt}{\tau} 
\newmacro{\nRuns}{T} 
\newmacro{\runs}{\mathcal{\nRuns}} 
\newmacro{\state}{\theta} 
\newmacro{\statealt}{\alt\state} 
\newmacro{\statealtalt}{\altalt\state}     
\newmacro{\states}{\Theta}    
\newmacro{\nStates}{m}     
\newcommand{\runAt}[3]{\ifthenelse{\equal{#2}{}}{\draft{#1}_{#3}}{\runAt{#1}{}{#2, #3}}}
\NewDocumentCommand{\beforeinit}{O{\state} O{}}{\runAt{#1}{#2}{\beforestart}} 
\NewDocumentCommand{\init}{O{\state} O{}}{\runAt{#1}{#2}{\start}} 
\NewDocumentCommand{\afterinit}{O{\state} O{}}{\runAt{#1}{#2}{\afterstart}} 
\NewDocumentCommand{\beforeiter}{O{\state} O{}}{\runAt{#1}{#2}{\runalt - 1}} 
\NewDocumentCommand{\iter}{O{\state} O{}}{\runAt{#1}{#2}{\runalt}} 
\NewDocumentCommand{\afteriter}{O{\state} O{}}{\runAt{#1}{#2}{\runalt + 1}} 
\NewDocumentCommand{\beforeprev}{O{\state} O{}}{\runAt{#1}{#2}{\run - 2}} 
\NewDocumentCommand{\prev}{O{\state} O{}}{\runAt{#1}{#2}{\run - 1}} 
\NewDocumentCommand{\curr}{O{\state} O{}}{\runAt{#1}{#2}{\run}} 
\RenewDocumentCommand{\next}{O{\state} O{}}{\runAt{#1}{#2}{\run + 1}} 
\NewDocumentCommand{\beforelast}{O{\state} O{}}{\runAt{#1}{#2}{\nRuns - 1}} 
\NewDocumentCommand{\last}{O{\state} O{}}{\runAt{#1}{#2}{\nRuns}} 
\NewDocumentCommand{\afterlast}{O{\state} O{}}{\runAt{#1}{#2}{\nRuns + 1}} 
\newmacro{\tstart}{0} 
\renewcommand{\time}{\draft{t}} 
\newmacro{\timealt}{s} 
\newmacro{\horizon}{T} 
\newmacro{\traj}{x} 
\newmacro{\trajalt}{y} 
\newmacro{\trajaltalt}{z} 
\newmacro{\flowmap}{\Theta} 
\DeclarePairedDelimiterXPP{\flowof}[2]{\flowmap_{#1}}{(}{)}{}{#2} 
\newop{\Nash}{NE} 
\newop{\CE}{CE}	
\newop{\CCE}{CCE} 
\newop{\NI}{NI} 
\newop{\brep}{br} 
\newop{\reg}{Reg} 
\newop{\preg}{\overline{Reg}} 
\newop{\val}{val} 
\newcommand{\eq}{\sol[\state]} 
\newmacro{\play}{i} 
\newmacro{\playalt}{j} 
\newmacro{\playaltlalt}{k} 
\newmacro{\nPlayers}{N} 
\newmacro{\players}{\mathcal{\nPlayers}} 
\newmacro{\pure}{\alpha} 
\newmacro{\purealt}{\beta} 
\newmacro{\purealtalt}{\gamma} 
\newmacro{\nPures}{A} 
\newmacro{\pures}{\mathcal{\nPures}} 
\newmacro{\loss}{\ell} 
\newmacro{\pay}{u} 
\newmacro{\payv}{v} 
\newmacro{\pot}{E} 
\newmacro{\lyap}{\pot} 
\newmacro{\game}{\Gamma} 
\newmacro{\gamefull}{\game(\players, \states, \loss)} 
\newmacro{\fingame}{\Gamma} 
\newmacro{\fingamefull}{\Gamma(\players, \pures, \loss)} 
\newmacro{\stochLoss}{L}
\newmacro{\monotonicLoss}{f} 
\let\mloss\monotonicLoss
\newmacro{\hiddenGame}{\mathcal{G}} 
\newmacro{\hiddenGameFull}{\hiddenGame(\players, \latentStates, \monotonicLoss)} 
\let\hgame\hiddenGame
\newmacro{\latentStates}{\mathcal{X}} 
\newmacro{\nLatentStates}{d} 
\newmacro{\latentState}{x} 
\newmacro{\latentStateAlternative}{\tilde\latentState} 
\let\lstate\latentState
\let\lstatealt\latentStateAlternative
\newmacro{\latentMap}{\chi} 
\let\latemap\latentMap
\newmacro{\indexedState}{\alpha} 
\newmacro{\indexedStateAlternative}{\beta} 
\let\istate\indexedState
\newmacro{\indexedLatentState}{j} 
\newmacro{\indexedLatentStateAlternative}{k} 
\let\ilstate\indexedLatentState
\let\ilstatealt\indexedLatentStateAlternative
\newmacro{\preconditioningMatrix}{\mathbf{P}} 
\let\premat\preconditioningMatrix
\newmacro{\gmat}{g} 
\newmacro{\gdist}{\dist_{\gmat}}
\newmacro{\mfld}{M} 
\newmacro{\form}{\omega} 
\newmacro{\tvec}{z} 
\newmacro{\cvec}{\eta} 
\newmacro{\uvec}{u} 
\newmacro{\ball}{\mathbb{B}} 
\newmacro{\sphere}{\mathbb{S}} 
\newmacro{\graph}{\mathcal{G}}
\newmacro{\vertices}{\mathcal{V}}
\newmacro{\edges}{\mathcal{E}}
\newmacro{\mat}{A} 
\newmacro{\matalt}{B} 
\newmacro{\hmat}{H} 
\newop{\row}{row} 
\newop{\col}{col} 
\newmacro{\ones}{\mathbf{1}} 
\newmacro{\eye}{\mathbf{I}} 
\newmacro{\zer}{\mathbf{0}} 
\DeclarePairedDelimiter{\norm}{\lVert}{\rVert} 
\DeclarePairedDelimiterXPP{\dnorm}[1]{}{\lVert}{\rVert}{_{\ast}}{#1} 
\DeclarePairedDelimiterXPP{\onenorm}[1]{}{\lVert}{\rVert}{_{1}}{#1} 
\DeclarePairedDelimiterXPP{\twonorm}[1]{}{\lVert}{\rVert}{_{2}}{#1} 
\DeclarePairedDelimiterXPP{\supnorm}[1]{}{\lVert}{\rVert}{_{\infty}}{#1} 
\DeclarePairedDelimiterX{\braket}[2]{\langle}{\rangle}{#1, \,#2} 
\newmacro{\vecspace}{\mathcal{V}} 
\newmacro{\subspace}{\mathcal{W}} 
\newmacro{\coord}{i} 
\newmacro{\coordalt}{j} 
\newmacro{\nCoords}{\nLatentStates} 
\newmacro{\dims}{\nCoords} 
\newmacro{\vdim}{\nLatentStates} 
\newmacro{\pvec}{z} 
\newmacro{\pvecalt}{r} 
\newmacro{\bvec}{e} 
\newmacro{\bvecs}{\mathcal{E}} 
\newmacro{\pspace}{\vecspace} 
\newmacro{\dspace}{\vecspace^{\ast}} 
\newmacro{\pzspace}{\mathcal{Z}} 
\newmacro{\dzspace}{\mathcal{Z}^{\ast}} 
\newmacro{\dvec}{\dpoint} 
\newmacro{\dbvec}{\eps} 
\newmacro{\dpoint}{y} 
\newmacro{\dpointalt}{\alt\dpoint} 
\newmacro{\dpointaltalt}{\altalt\dpoint} 
\newmacro{\dpoints}{\mathcal{Y}} 
\newmacro{\dstate}{Y} 
\newmacro{\dbase}{v} 
\newcommand{\defeq}{\coloneqq} 
\newcommand{\from}{\colon} 
\newop{\Opt}{Opt} 
\newop{\Sol}{Sol} 
\newop{\gap}{Gap} 
\newop{\orcl}{Or} 
\newmacro{\tfun}{f} 
\newmacro{\obj}{f} 
\newmacro{\objalt}{g} 
\newmacro{\sobj}{F} 
\newmacro{\gvec}{g} 
\newmacro{\oper}{A} 
\newmacro{\vecfield}{g}
\newmacro{\controlvecfield}{v}
\newcommand{\sol}[1][\point]{#1^{\ast}} 
\newcommand{\sols}{\sol[\points]} 
\newmacro{\solpay}{\eq[\payv]} 
\newcommand{\test}[1][\point]{\hat#1} 
\newmacro{\signal}{V} 
\newmacro{\step}{\gamma} 
\newmacro{\learn}{\eta} 
\newmacro{\vbound}{G} 
\newmacro{\lips}{\beta} 
\newmacro{\strong}{\mu} 
\newmacro{\smooth}{\beta} 
\newop{\tspace}{T} 
\newop{\tcone}{TC} 
\newop{\dcone}{\tcone^{\ast}} 
\newop{\ncone}{NC} 
\newop{\pcone}{PC} 
\newop{\hull}{\Delta} 
\newmacro{\cvx}{\mathcal{C}} 
\newmacro{\subd}{\partial} 
\newmacro{\minmax}{\mathcal{L}} 
\newmacro{\minvar}{\point_{1}} 
\newmacro{\minvaralt}{\alt\minvar} 
\newmacro{\minvars}{\points_{1}} 
\newmacro{\minsol}{\sol[\minvar]} 
\newmacro{\maxvar}{\point_{2}} 
\newmacro{\maxvaaltr}{\alt\maxvar} 
\newmacro{\maxvars}{\points_{2}} 
\newmacro{\maxsol}{\sol[\maxvar]} 
\newop{\Eucl}{\Pi} 
\newop{\logit}{\Lambda} 
\newop{\dkl}{KL} 
\newmacro{\hreg}{h} 
\newmacro{\hconj}{\hreg^{\ast}} 
\newmacro{\breg}{D} 
\newmacro{\mprox}{\Pi} 
\newmacro{\mirror}{Q} 
\newmacro{\fench}{F} 
\newmacro{\depth}{H} 
\newmacro{\hstr}{K} 
\newmacro{\hker}{\theta} 
\newmacro{\proxdom}{\points_{\hreg}} 
\newmacro{\proxdomi}{\points_{\hreg_{\play}}} 
\newmacro{\zone}{\mathbb{D}} 
\DeclarePairedDelimiterXPP{\proxof}[2]{\mprox_{#1}}{(}{)}{}{#2} 
\newmacro{\point}{\latentState} 
\newmacro{\pointalt}{\alt\point} 
\newmacro{\pointaltalt}{\altalt\point} 
\newmacro{\points}{\latentStates} 
\newmacro{\intpoints}{\relint\points} 
\newmacro{\base}{p} 
\newmacro{\basealt}{q} 
\newmacro{\basealtalt}{u} 
\newmacro{\open}{\mathcal{U}} 
\newmacro{\closed}{\mathcal{C}} 
\newmacro{\cpt}{\mathcal{K}} 
\newmacro{\nhd}{\mathcal{U}} 
\newop{\ex}{\mathbb{E}} 
\newop{\prob}{\mathbb{P}} 
\newop{\Var}{Var} 
\newop{\simplex}{\hull} 
\providecommand\given{} 
\DeclarePairedDelimiterXPP{\exof}[1]{\ex}{[}{]}{}{
\renewcommand\given{\nonscript\,\delimsize\vert\nonscript\,\mathopen{}} #1}
\DeclarePairedDelimiterXPP{\probof}[1]{\prob}{(}{)}{}{
\renewcommand\given{\nonscript\:\delimsize\vert\nonscript\:\mathopen{}} #1}
\DeclarePairedDelimiterXPP{\oneof}[1]{\one}{\{}{\}}{}{
\renewcommand\given{\nonscript\,\delimsize\vert\nonscript\,\mathopen{}} #1}
\DeclarePairedDelimiterXPP{\varof}[1]{\Var}{(}{)}{}{
\renewcommand\given{\nonscript\:\delimsize\vert\nonscript\:\mathopen{}} #1}
\newmacro{\sample}{\omega} 
\newmacro{\samples}{\Omega} 
\newmacro{\seed}{\theta} 
\newmacro{\seeds}{\Theta} 
\newmacro{\filter}{\mathcal{F}} 
\newmacro{\probspace}{(\samples,\filter,\prob)} 
\newmacro{\history}{\mathcal{H}} 
\newmacro{\event}{E} 
\newmacro{\eventalt}{H} 
\newmacro{\rand}{\phi} 
\newmacro{\randalt}{\psi} 
\newmacro{\randmar}{S} 
\newmacro{\randsup}{\randmar} 
\newmacro{\randsub}{\randmar} 
\newmacro{\mean}{\mu} 
\newmacro{\sdev}{\sigma} 
\newmacro{\sdevmin}{\sdev_{\min}}
\newmacro{\sdevmax}{\sdev_{\max}}
\newmacro{\sdevalt}{\tilde{\sigma}} 
\newmacro{\proper}{\tau} 
\newmacro{\error}{Z} 
\newmacro{\noise}{U} 
\newmacro{\bias}{b} 
\newmacro{\brown}{W} 
\newmacro{\serror}{\theta} 
\newmacro{\snoise}{\xi} 
\newmacro{\sbias}{\psi} 
\newmacro{\bound}{r} 
\newmacro{\sbound}{M} 
\newmacro{\bbound}{B} 
\newmacro{\noisepar}{\sdev} 
\newmacro{\noisevar}{\variance} 
\newacro{NHGD}{natural hidden gradient dynamics}
\newacro{KKT}{Karush\textendash Kuhn\textendash Tucker}
\newacro{FOS}{first-order stationarity}
\newacro{SOS}{second-order stationarity}
\newacro{SOSC}{second-order sufficient condition}
\newacro{NE}{Nash equilibrium}
\newacro{LNE}{local Nash equilibrium}
\newacro{VI}{variational inequality}
\newacro{SVI}{Stampacchia variational inequality}
\newacro{MVI}{Minty variational inequality}
\newacro{HGD}{hidden gradient dynamics}
\newacro{MLP}{multi-layer perceptron}
\newacro{GD}{gradient descent}
\newacro{LHS}{left-hand side}
\newacro{RHS}{right-hand side}
\newacro{iid}[i.i.d.]{independent and identically distributed}
\newacro{NE}{Nash equilibrium}
\newacro{DGF}{distance-generating function}
\newacro{KKT}{Karush\textendash Kuhn\textendash Tucker}
\newacro{SFO}{stochastic first-order oracle}
\newacro{PHGF}{preconditioned hidden gradient flow}
\newacro{PHGD}{preconditioned hidden gradient descent}
\newcommand{\playA}{1}
\newcommand{\playB}{2}
\newcommand{\data}{\mathcal{D}}
\newcommand{\datum}{s}
\DeclarePairedDelimiter{\inner}{\langle}{\rangle}
\newop{\tgap}{TGap} 
\newmacro{\jmat}{\mathbf{J}}		
\newmacro{\pmat}{\mathbf{P}}		
\newmacro{\trans}{\intercal}		
\newmacro{\control}{\state}
\newmacro{\controlalt}{\alt\control}
\newmacro{\controls}{\states}
\newmacro{\iControl}{\alpha}
\newmacro{\jControl}{\beta}
\newmacro{\dControls}{\nStates}
\newmacro{\ctest}{\test[\state]}
\newmacro{\ctests}{\mathcal{D}}
\newmacro{\latent}{\latentState}
\newmacro{\latentalt}{\alt\latent}
\newmacro{\latents}{\latentStates}
\newmacro{\iLatent}{l}
\newmacro{\jLatent}{r}
\newmacro{\dLatents}{\nLatentStates}
\newmacro{\ltest}{\test[\lstate]}
\newmacro{\ltests}{\mathcal{C}}
\newmacro{\sv}{\sigma}
\newcommand{\svmin}{\sv_{\min}}
\newcommand{\svmax}{\sv_{\max}}
\newcommand{\T}{\intercal} 
\newcommand{\pinv}{+} 
\newcommand{\kron}[2]{\delta_{#1#2}} 
\newcommand{\jac}[1][]{\jmat\ifthenelse{\equal{#1}{}}{}{_{\play}(#1)}} 
\newsavebox{\explanationsavebox}
\let\bras\brackets
\let\pars\parentheses
\newop{\sigmoid}{sigmoid}
\newop{\softmax}{softmax}
\newmacro{\celu}{CeLU}
\begin{document}


\title
[Exploiting hidden structures in non-convex games]
{Exploiting hidden structures in non-convex games\\for convergence to Nash equilibrium}

\author
[I.~Sakos]
{Iosif Sakos$^{\ast}$}
\address{$^{\ast}$\,%
Singapore University of Technology and Design}
\EMAIL{iosif\_sakos@mymail.sutd.edu.sg}
\author
[E.~V.~Vlatakis-Gkaragkounis]
{Emmanouil V. Vlatakis-Gkaragkounis$^{\S}$}
\address{$^{\S}$\,%
UC Berkeley}
\EMAIL{emvlatakis@cs.columbia.edu}
\author
[P.~Mertikopoulos]
{\\Panayotis Mertikopoulos$^{\sharp}$}
\address{$^{\sharp}$\,%
Univ. Grenoble Alpes, CNRS, Inria, Grenoble INP, LIG, 38000 Grenoble, France.}
\EMAIL{panayotis.mertikopoulos@imag.fr}
\author
[G.~Piliouras]
{Georgios Piliouras$^{\ast}$}
\EMAIL{georgios@sutd.edu.sg}

\subjclass[2020]{%
Primary 91A10, 91A26;
secondary 68Q32.}

\keywords{%
Non-convex games;
hidden games;
stochastic algorithms;
Nash equilibrium.}

\newacro{LHS}{left-hand side}
\newacro{RHS}{right-hand side}
\newacro{iid}[i.i.d.]{independent and identically distributed}

\newacro{NE}{Nash equilibrium}
\newacroplural{NE}[NE]{Nash equilibria}

\newacro{DGF}{distance-generating function}
\newacro{KKT}{Karush\textendash Kuhn\textendash Tucker}
\newacro{SFO}{stochastic first-order oracle}

\newacro{GAN}{generative adversarial network}

\begin{abstract}
%
%
A wide array of modern machine learning applications \textendash\ from adversarial models to multi-agent reinforcement learning \textendash\ can be formulated as non-cooperative games whose Nash equilibria represent the system's desired operational states.
Despite having a highly non-convex loss landscape, many cases of interest possess a latent convex structure that could potentially be leveraged to yield convergence to an equilibrium.
Driven by this observation, our paper proposes a flexible first-order method that successfully exploits such ``hidden structures'' and achieves convergence under minimal assumptions for the transformation connecting the players' control variables to the game's latent, convex-structured layer.
The proposed method \textendash\ which we call \acf{PHGD} \textendash\ hinges on a judiciously chosen gradient preconditioning scheme related to natural gradient methods.
Importantly, we make no separability assumptions for the game's hidden structure, and we provide explicit convergence rate guarantees for both deterministic and stochastic environments.
\end{abstract}

\maketitle

\setcounter{footnote}{0}
\addtocontents{toc}{\protect\setcounter{tocdepth}{0}}
\allowdisplaybreaks		
\acresetall		

\section{Introduction}
\label{sec:introduction}

Many powerful AI architectures are based on the idea of combining conceptually straightforward settings coming from game theory with the expressive power of neural nets.
Some prominent examples of this type include
\acp{GAN} \citep{GPAM+14},
robust reinforcement learning \citep{PDSG17},
adversarial training \citep{MMST+18},
multi-agent reinforcement learning in games
\citep{silver2017mastering,perolat2022mastering,VBCM+19},
and even multi-player games that include free-form natural-language communication \citep{bakhtin2022mastering}.
Intuitively, in all these cases, the game-theoretic abstraction serves to provide a palpable, easy-to-understand target, \ie an equilibrium solution with strong axiomatic justification.
However, from a complexity-theoretic standpoint, such targets are excessively ambitious, requiring huge amounts of data to express and compute, even approximately.
Because of this, the agents' policies must be encoded via a universal function approximator (such as a neural net) and training this architecture boils down to iteratively updating these parameters until the process \textendash\ hopefully! \textendash\ converges to the target equilibrium.

Unfortunately, despite the ubiquitousness of these settings,
the design of algorithms with provable convergence guarantees is still relatively lacking.
This deficit is not surprising if one considers that even
the \textendash\ comparatively much simpler \textendash\ problem of equilibrium learning in finite games is hindered by numerous computational hardness \citep{DGP09-acm,DSZ21} as well as dynamic impossibility results \citep{HMC03,HMC06,HMC21,Let21,milionis2023impossibility,MHC23}. 
In this regard, our best hope for designing provably convergent algorithms is to focus on specific classes of games with some \emph{useful structure} to exploit.

One of the most well-established frameworks of this type is the class of \emph{monotone games} whose study goes back at least to Rosen \citep{Ros65}.
As special cases, this setting includes
single-agent convex minimization problems,
two-player convex-concave min-max games,
diagonally convex $\nPlayers$-player games,
etc.
Owing to this connection, there has been a proliferation of strong positive results at the interface of game theory and optimization, see \eg \cite{SFPP10,MZ19} and references therein.
In our case however, the agents do not play this monotone game directly, but can only access it \emph{indirectly} via an encoding layer of \emph{control variables} \textendash\ like the weight parameters of a neural net that outputs a feasible strategy profile for the game in question.
In this sense, from a machine learning perspective, the strategies of the game are \emph{latent variables}, so we can think of each player as being equipped with a smooth mapping from a high-dimensional space of control variables to the strategy space of the game.
\revise{Importantly, in contrast to the control variables, the latent variables are not 
directly accessible 
to the players themselves and should only be viewed as auxiliary variables \textendash\ to all extents and purposes, the goal remains to find an operationally desirable control layer configuration.}
In this sense, the convex structure of the game becomes ``hidden'' \revise{behind} the control layer, which entangles multiple input/control variables into nonlinear manifolds of latent variables.

As a result, the convex structure of the underlying game is effectively destroyed, resulting in highly non-convex end-to-end interactions.
This raises the following central challenge:
\begin{center}
\itshape
Can we design provably convergent algorithms for non-convex games with a hidden structure
in the presence of general couplings between \revise{control and latent variables?}
\end{center}
Prior work in the area has shown that this is a promising and, at the same time, highly challenging question.
In \citep{vlatakis2019poincare}, the setting of hidden bilinear games was introduced and a number of negative results were presented, to the effect that gradient-descent-ascent can exhibit a variety of non-convergent behaviors, even when the game admits a hidden \emph{bilinear} structure.
Subsequently, \citep{gidel2020minimax} provided an approximate minimax theorem for a class of two-agent games where
the players pick neural networks, but did not provide any convergent training algorithm for this class of games.
Instead, the first positive result on the dynamics of hidden games was obtained by \citep{vlatakis2021solving} who established a series of non-local convergence guarantees to the von Neumann max-min solution of the game in the case of two-agent hidden strictly convex-concave games for all initial conditions satisfying a certain genericity assumption.
This approach, however, only applied to \emph{continuous-time dynamics} \textendash\ \emph{not algorithms} \textendash\ 
and it further imposed strong separability assumptions on the representation of the game in the control layer.
More recently, \citep{mladenovic2021generalized} established the first global convergence guarantees in hidden games but, once again, these apply only to \emph{continuous-time dynamics} and a special case of two-agent convex-concave games (akin to playing a convex combination of hidden games with one-dimensional latent spaces).
This paper is the closest antecedent to our work as it introduces a dynamical system, called Generalized Natural Gradient Flow, which makes the $L^{2}$ norm between \revise{the equilibrium in the hidden game} and the current set of latent variables a Lyapunov function for the system.

\para{Our results \& techniques}

Our paper seeks to provide an affirmative answer to the key challenge above under minimal assumptions on the coupling between latent and control variables.
To that effect, we only assume that each agent is able to affect a measurable change along any latent variable
by updating their control variables appropriately;
without this assumption spurious equilibria can emerge due to the deficiency of the control layer architecture.
Importantly however, even though the map from control to latent variables is known to the players,
\emph{we do not assume} that it can be efficiently inverted (\eg to solve for a profile of control variables that realizes a profile of latent variables).
Otherwise, if it could, the entire game could be solved directly in the latent layer and then ported back to the control layer, thus rendering the whole problem moot \textendash\ and, indeed, when working with realistic neural net architectures, this inversion problem is, to all intents and purposes, impossible.

For intuition, we begin by designing a new continuous-time flow, that we call preconditioned hidden gradient dynamics, and which enjoys strong convergence properties in games with a hidden strictly monotone structure (\cref{prop:PHGF-strict}).
Similarly to \citep{mladenovic2021generalized}, this is achieved by using the $L^{2}$ norm in the latent layer as a Lyapunov function in the control layer;
however, the similarities with the existing literature end there.
Our paper does not make any separability or low-dimensionality assumptions, and is otherwise \emph{purely algorithmic:}
specifically, building on the continuous-time intuition, we provide a concrete, implementable algorithm, that we call \acdef{PHGD};
this algorithm is run with \emph{stochastic gradients}, and enjoys a series of strong, global convergence guarantees in hidden games.

First, as a baseline, \cref{thm:PHGD-mon} shows that a certain averaged process achieves an $\bigoh(1/\sqrt{\run})$ convergence rate in all games with a hidden monotone structure and Lipschitz continuous loss functions.
If the hidden structure is strongly monotone, \cref{thm:PHGD-strong} further shows that this rate can be improved to $\bigoh(1/\run)$ for the \emph{actual} trajectory of the players' control variables;
and if the algorithm is run with full, deterministic gradients, the rate becomes \emph{geometric} (\cref{thm:PHGD-perfect}).
To the best of our knowledge, these are the first bona fide algorithmic convergence guarantees for games with a hidden structure.

\begin{figure}[t]
\centering
\noindent
\resizebox{.9\columnwidth}{!}{%
\input{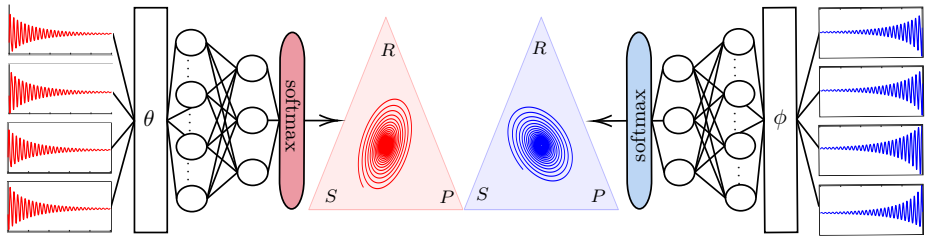}}
\vspace{-1ex}
\caption{
A hidden game of Rock-Paper-Scissors with strategies encoded by two \acfp*{MLP}, whose $4$-dimensional input is dictated by the two players \revise{(cf.~\citep{mladenovic2021generalized})}.
The nonlinearity of the \acs{MLP} representation maps leads to a highly non-convex non-concave zero-sum game.
However, by employing~\ref{eq:PHGD}, both players' \acs*{MLP} control variables accurately identify the $(1/3,1/3,1/3)$ equilibrium in the game's latent space.}
\label{fig:my_label}
\vspace{-1ex}
\end{figure}

\section{Problem setup and preliminaries}
\label{sec:prelims}

Throughout the sequel, we will focus on continuous $\nPlayers$-player games where each player, indexed by $\play \in \players \defeq \setof{1, \ldots, \nPlayers}$, has a convex set of \emph{control variables} $\control_{\play} \in \controls_{\play} \defeq \R^{\dControls_{\play}}$, and a continuously differentiable \emph{loss function} $\loss_{\play}\from \controls \to \R$, where $\controls \defeq \prod_{\play}\controls_{\play}$ denotes the game's \emph{control space}.
For concreteness, we will refer to the tuple $\game \equiv \gamefull$ as the \emph{base game}.

The most relevant solution concept in this setting is that of a \acli{NE}, \ie an action profile $\eq\in\controls$ that discourages unilateral deviations.
Formally, we say that $\eq \in \controls$ is a \acli{NE} of the base game $\game$ if
\begin{equation}
\label{eq:NE}
\tag{NE}
\loss_{\play}(\eq)
	\leq \loss_{\play}(\control_{\play}; \eq_{-\play})
	\quad
	\text{for all $\control_{\play} \in \controls_{\play}$, $\play \in \players$}
\end{equation}
where we employ the standard game-theoretic shorthand $(\control_{\play}; \control_{-\play})$ to distinguish between the action of the $\play$-th player and that of all other players in the game.
Unfortunately,
designing a learning algorithm that provably outputs a \acl{NE} is a very elusive task:
the impossibility results of \citet{HMC03,HMC06} already preclude the existence of uncoupled dynamics that converge to \revise{a} \acl{NE} in all games;
\revise{more recently, \citet{milionis2023impossibility} established a similar impossibility result even for possibly \emph{coupled} dynamics (in both discrete and continuous time)},
while \citet{DGP09-acm,DSZ21} has shown that even the \emph{computation} of an approximate equilibrium can be beyond reach.

In view of this, our work focuses on games with a hidden, \emph{latent} structure that can be exploited to compute its \aclp{NE}.
More precisely, inspired by \cite{vlatakis2019poincare,vlatakis2021solving} we have the following definition.

\begin{definition}
\label{def:hidden}
We say that the game $\game \equiv \gamefull$ admits a \emph{latent} \textendash\ or \emph{hidden} \textendash\ \emph{structure} if:
\begin{enumerate}
\item
Each player's control variables can be mapped faithfully to a closed convex set of \emph{latent variables} $\latent_{\play}\in\latents_{\play} \subseteq \R^{\dLatents_{\play}}$;
formally, we posit that there exists a Lipschitz smooth map $\latemap_{\play} \from \controls_{\play} \to \latents_{\play}$ with no critical points and such that $\cl(\latemap_{\play}(\controls_{\play})) = \latents_{\play}$.
\item
Each player's loss function factors through the game's \emph{latent space} $\latents \defeq \prod_{\play} \latents_{\play}$ as
\begin{equation}
\label{eq:LossFunctionDecomposition}
\loss_{\play}(\control)
	= \mloss_{\play}(\latemap_{1}(\control_{1}),\dotsc,\latemap_{\nPlayers}(\control_{\nPlayers}))
\end{equation}
for some Lipschitz smooth function $\mloss_{\play}\from\latents\to\R$ called the player's \emph{latent loss function}.
\end{enumerate}
For concreteness, we will refer to the product map
$\latemap(\control) = (\latemap_{\play}(\control_{\play}))_{\play\in\players}$ as the game's \emph{representation map}, and the tuple $\hgame \equiv \hiddenGameFull$ will be called
the \emph{hidden\,/\,latent game}.
To simplify notation later on, we also assume that $\latents_{\play}$ has nonempty topological interior, so, in particular, $\dim(\latents_{\play}) = \dLatents_{\play} \leq \dControls_{\play}$.
\end{definition}

We illustrate the above notions in two simple \textendash\ but not simplistic \textendash\ examples below;
for a schematic representation, \cf \cref{fig:HiddenGame}.

\begin{figure}[t]
\centering

\tikzset{every picture/.style={line width=0.75pt}} 

\begin{tikzpicture}[x=0.75pt,y=0.75pt,yscale=-0.65,xscale=0.65]

\draw [color={rgb, 255:red, 74; green, 144; blue, 226 }  ,draw opacity=1 ][fill={rgb, 255:red, 74; green, 144; blue, 226 }  ,fill opacity=1 ][line width=1.5]    (28,176) -- (28.29,153.64) -- (28.33,38.67) ;
\draw [shift={(28.33,35.67)}, rotate = 90.02] [color={rgb, 255:red, 74; green, 144; blue, 226 }  ,draw opacity=1 ][line width=1.5]    (14.21,-4.28) .. controls (9.04,-1.82) and (4.3,-0.39) .. (0,0) .. controls (4.3,0.39) and (9.04,1.82) .. (14.21,4.28)   ;
\draw [color={rgb, 255:red, 74; green, 144; blue, 226 }  ,draw opacity=1 ][fill={rgb, 255:red, 74; green, 144; blue, 226 }  ,fill opacity=1 ][line width=1.5]    (28,176) -- (185.33,176.65) ;
\draw [shift={(188.33,176.67)}, rotate = 180.24] [color={rgb, 255:red, 74; green, 144; blue, 226 }  ,draw opacity=1 ][line width=1.5]    (14.21,-4.28) .. controls (9.04,-1.82) and (4.3,-0.39) .. (0,0) .. controls (4.3,0.39) and (9.04,1.82) .. (14.21,4.28)   ;
\draw  [color={rgb, 255:red, 74; green, 144; blue, 226 }  ,draw opacity=1 ][fill={rgb, 255:red, 74; green, 144; blue, 226 }  ,fill opacity=0.31 ][line width=1.5]  (46.96,136.78) .. controls (41,120.17) and (65.63,96.14) .. (101.97,83.1) .. controls (138.31,70.06) and (172.6,72.95) .. (178.56,89.56) .. controls (184.52,106.17) and (159.89,130.2) .. (123.55,143.24) .. controls (87.21,156.28) and (52.92,153.39) .. (46.96,136.78) -- cycle ;
\draw [color={rgb, 255:red, 0; green, 0; blue, 0 }  ,draw opacity=1 ][fill={rgb, 255:red, 74; green, 144; blue, 226 }  ,fill opacity=1 ][line width=1.5]    (315,193) -- (315.29,170.64) -- (315.33,55.67) ;
\draw [shift={(315.33,52.67)}, rotate = 90.02] [color={rgb, 255:red, 0; green, 0; blue, 0 }  ,draw opacity=1 ][line width=1.5]    (14.21,-4.28) .. controls (9.04,-1.82) and (4.3,-0.39) .. (0,0) .. controls (4.3,0.39) and (9.04,1.82) .. (14.21,4.28)   ;
\draw [color={rgb, 255:red, 0; green, 0; blue, 0 }  ,draw opacity=1 ][fill={rgb, 255:red, 74; green, 144; blue, 226 }  ,fill opacity=1 ][line width=1.5]    (315,193) -- (472.33,193.65) ;
\draw [shift={(475.33,193.67)}, rotate = 180.24] [color={rgb, 255:red, 0; green, 0; blue, 0 }  ,draw opacity=1 ][line width=1.5]    (14.21,-4.28) .. controls (9.04,-1.82) and (4.3,-0.39) .. (0,0) .. controls (4.3,0.39) and (9.04,1.82) .. (14.21,4.28)   ;
\draw  [color={rgb, 255:red, 0; green, 0; blue, 0 }  ,draw opacity=1 ][fill={rgb, 255:red, 0; green, 0; blue, 0 }  ,fill opacity=0.15 ][line width=1.5]  (333.96,153.78) .. controls (328,137.17) and (352.63,113.14) .. (388.97,100.1) .. controls (425.31,87.06) and (459.6,89.95) .. (465.56,106.56) .. controls (471.52,123.17) and (446.89,147.2) .. (410.55,160.24) .. controls (374.21,173.28) and (339.92,170.39) .. (333.96,153.78) -- cycle ;
\draw    (158,97) .. controls (221.64,96.04) and (343.04,126.27) .. (361.86,130.89) ;
\draw [shift={(363.67,131.33)}, rotate = 193.46] [fill={rgb, 255:red, 0; green, 0; blue, 0 }  ][line width=0.08]  [draw opacity=0] (12,-3) -- (0,0) -- (12,3) -- cycle    ;
\draw [color={rgb, 255:red, 0; green, 0; blue, 0 }  ,draw opacity=1 ][fill={rgb, 255:red, 74; green, 144; blue, 226 }  ,fill opacity=1 ][line width=1.5]    (108,219.33) -- (446,220.99) ;
\draw [shift={(449,221)}, rotate = 180.28] [color={rgb, 255:red, 0; green, 0; blue, 0 }  ,draw opacity=1 ][line width=1.5]    (14.21,-4.28) .. controls (9.04,-1.82) and (4.3,-0.39) .. (0,0) .. controls (4.3,0.39) and (9.04,1.82) .. (14.21,4.28)   ;
\draw    (158,117) .. controls (223.63,116.01) and (225.23,159.95) .. (225.25,209.34) ;
\draw [shift={(225.25,210.84)}, rotate = 270] [fill={rgb, 255:red, 0; green, 0; blue, 0 }  ][line width=0.08]  [draw opacity=0] (12,-3) -- (0,0) -- (12,3) -- cycle    ;
\draw    (363.67,131.33) .. controls (331.16,172.79) and (227.05,100.73) .. (225.27,209.19) ;
\draw [shift={(225.25,210.84)}, rotate = 270.39] [fill={rgb, 255:red, 0; green, 0; blue, 0 }  ][line width=0.08]  [draw opacity=0] (12,-3) -- (0,0) -- (12,3) -- cycle    ;

\draw (133,84.4) node [anchor=north west][inner sep=0.75pt]  [font=\small,color={rgb, 255:red, 74; green, 144; blue, 226 }  ,opacity=1 ]  {$\theta $};
\draw (144,85.4) node [anchor=north west][inner sep=0.75pt]  [font=\Huge,color={rgb, 255:red, 74; green, 144; blue, 226 }  ,opacity=1 ]  {$\cdot $};
\draw (40,29) node [anchor=north west][inner sep=0.75pt]  [font=\footnotesize,color={rgb, 255:red, 74; green, 144; blue, 226 }  ,opacity=1 ] [align=left] {{\footnotesize \textit{}}{\footnotesize \textit{Control variables}}};
\draw (367,120) node [anchor=north west][inner sep=0.75pt]  [font=\small,color={rgb, 255:red, 0; green, 0; blue, 0 }  ,opacity=1 ]  {$x=\chi ( \theta )$};
\draw (355,120.4) node [anchor=north west][inner sep=0.75pt]  [font=\Huge,color={rgb, 255:red, 0; green, 0; blue, 0 }  ,opacity=1 ]  {$\cdot $};
\draw (335,46) node [anchor=north west][inner sep=0.75pt]  [font=\footnotesize,color={rgb, 255:red, 0; green, 0; blue, 0 }  ,opacity=1 ] [align=left] {{\footnotesize \textit{}}{\footnotesize \textit{Latent variables}}};
\draw (243,85.4) node [anchor=north west][inner sep=0.75pt]  [font=\small,color={rgb, 255:red, 0; green, 0; blue, 0 }  ,opacity=1 ]  {$\chi $};
\draw (300,29.4) node [anchor=north west][inner sep=0.75pt]  [font=\small,color={rgb, 255:red, 0; green, 0; blue, 0 }  ,opacity=1 ]  {$ \begin{array}{l}
\mathbb{R}^{d}\\
\end{array}$};
\draw (10,14.4) node [anchor=north west][inner sep=0.75pt]  [font=\small,color={rgb, 255:red, 74; green, 144; blue, 226 }  ,opacity=1 ]  {$ \begin{array}{l}
\mathbb{R}^{m}\\
\end{array}$};
\draw (109,195.4) node [anchor=north west][inner sep=0.75pt]  [font=\small,color={rgb, 255:red, 0; green, 0; blue, 0 }  ,opacity=1 ]  {$ \begin{array}{l}
\mathbb{R}\\
\end{array}$};
\draw (264,156.4) node [anchor=north west][inner sep=0.75pt]  [font=\small,color={rgb, 255:red, 0; green, 0; blue, 0 }  ,opacity=1 ]  {$f$};
\draw (140,148.4) node [anchor=north west][inner sep=0.75pt]  [font=\small,color={rgb, 255:red, 0; green, 0; blue, 0 }  ,opacity=1 ]  {$\ell =f\circ \chi $};

\end{tikzpicture}
\caption{Schematic representation of a game with a hidden\,/\,latent structure.}
\label{fig:HiddenGame}
\end{figure}
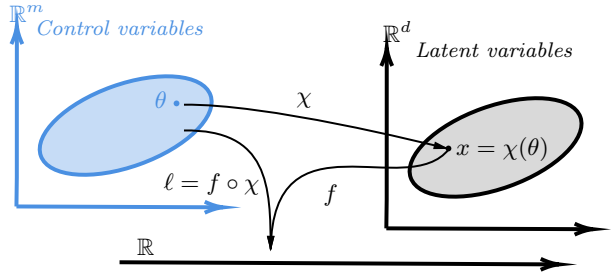

\begin{example}
Consider a single-agent game ($\nPlayers = 1$) where the objective is to minimize the non-convex function $\loss(\control) = \sum_{\istate = 1}^{\abs{\data}}(\sigmoid(\control_\istate) - \datum_\istate)^{2}$ over a dataset $\data = \setof{\datum_\istate}$.
This problem can be recast as a hidden convex problem with $\mloss(\latent) = \sum_{\istate = 1}^{\abs{\data}}(\latent_\istate - \datum_\istate)^{2}$ and $\latemap(\control) = \sigmoid(\control)$.
\end{example}

\begin{example}
A more complex scenario involves non-convex\,/\,non-concave min-max optimization problems of the form $\min_{\control_{\playA}}\max_{\control_{\playB}} \latemap_{\playA}(\control_{\playA})^\T C \latemap_{\playB}(\control_{\playB})$ where $\latemap_{\playA}$ and $\latemap_{\playB}$ are preconfigured \acp{MLP} constructed with smooth activation functions (such as \celu s).
This problem can be reformulated as a hidden bilinear problem by setting $\mloss_{\playA}(\latent_{\playA}, \latent_{\playB}) = \latent_{\playA}^\T C \latent_{\playB} = - \mloss_{\playB}(\latent_{\playA}, \latent_{\playB})$.
\end{example}

Before moving forward, there are some points worth noting regarding the above definitions.
\smallskip

\begin{remark}
First, the requirement $\cl(\latemap(\controls)) = \latents$ means that all latent variable profiles $\latent\in\latents$ can be approximated to arbitrary accuracy in the game's control space.
The reason that we do not make the stronger assumption $\latemap(\controls) = \latents$ is to capture cases like the sigmoid map:
if $\latemap(\control) = [1 + \exp(-\control)]^{-1}$ for $\control \in \R$, the image of $\latemap$ is the interval $(0,1)$, which is convex but not closed.
\end{remark}
\smallskip

\begin{remark}
Second, the requirement that $\latemap$ has no critical points simply means that, for any control variable configuration $\control\in\controls$, the Jacobian $\Jac(\latemap(\control))$ of $\latemap$ at $\control$ has full rank.
By the implicit function theorem \citep{Lee03}, this simply means that $\latemap$ locally looks like a projection (in suitable coordinates around $\control$), so it is possible to affect a measurable change along any feasible latent direction by updating each player's control variables appropriately.
This is no longer true if $\latemap$ has critical points, so this is a minimal requirement to ensure that no spurious equilibria appear in the game's latent space.
\end{remark}

To proceed, we will assume that the latent game is \emph{diagonally convex} in the sense of \citet{Ros65}, a condition more commonly known in the optimization literature as \emph{monotonicity} \citep{FP03}.
Formally, let
\begin{equation}
\label{eq:vecfield}
\gvec_{\play}(\latent)
	= \nabla_{\play} \mloss_{\play}(\latent)
	\defeq \nabla_{\latent_{\play}} \mloss_{\play}(\latent)
\end{equation}
denote the individual gradient of the latent loss function of player $\play\in\players$, and write $\gvec(\latent) = (\gvec_{1}(\latent),\dotsc,\gvec_{\nPlayers}(\latent))$ for the profile thereof.
We then say that the latent game $\hgame$ is:
\smallskip
\begin{subequations}
\begin{itemize}
\item
\emph{Monotone} if
\begin{equation}
\label{eq:monotone}
\braket{\gvec(\latentalt) - \gvec(\latent)}{\latentalt - \latent}
	\geq 0
	\quad
	\text{for all $\latent,\latentalt\in\latents$}.
\end{equation}
\item
\emph{Strictly monotone} if
\begin{equation}
\label{eq:strict}
\braket{\gvec(\latentalt) - \gvec(\latent)}{\latentalt - \latent}
	> 0
	\quad
	\text{for all $\latent,\latentalt\in\latents$, $\latent\neq\latentalt$}.
\end{equation}
\item
\emph{Strongly monotone} if, for some $\strong>0$, we have
\begin{equation}
\label{eq:strong}
\braket{\gvec(\latentalt) - \gvec(\latent)}{\latentalt - \latent}
	\geq \strong \norm{\latentalt - \latent}^{2}
	\quad
	\text{for all $\latent,\latentalt\in\latents$}.
\end{equation}
\end{itemize}
\end{subequations}
Clearly, strong monotonicity implies strict monotonicity, which in turn implies monotonicity;
on the other hand, when we want to distinguish between problems that are monotone but not strictly monotone, we will say that $\gvec$ is \emph{merely monotone}.
Finally, extending the above to the base game, we will say that $\game$ admits a \emph{hidden monotone structure} when the latent game $\hgame$ is monotone as above (and likewise for strictly\,/\,strongly monotone structures).

Examples of monotone games include neural Kelly auctions and Cournot oligopolies \cite{KMT98,beltratti1996neural,LiuYZZRLHWCXWCZ21,BichlerFHKS21},
covariance matrix optimization problems and power control \cite{SFPP10,MM16,MBNS17},
certain classes of congestion games \cite{ORShi93},
etc.
In particular, in the case of convex minimization problems, monotonicity (resp. strict\,/\,strong monotonicity) corresponds to convexity (resp. strict\,/\,strong convexity) of the problem's objective function.
In all cases, it is straightforward to check that the image $\sol = \latemap(\eq)$ of a \acl{NE} $\eq \in \controls$ of the base game satisfies a Stampacchia variational inequality of the form
\begin{equation}
\label{eq:SVI}
\tag{SVI}
\braket{\gvec(\sol)}{\latent - \sol}
	\geq 0
	\quad
	\text{for all $\latent \in \latents$}.
\end{equation}
In turn, by monotonicity, this characterization is equivalent to the Minty variational inequality
\begin{equation}
\label{eq:MVI}
\tag{MVI}
\braket{\gvec(\latent)}{\latent - \sol}
	\geq 0
	\quad
	\text{for all $\latent \in \latents$}.
\end{equation}
To avoid trivialities, we will assume throughout that the solution set $\sols$ of \eqref{eq:SVI}/\eqref{eq:MVI} is nonempty.
This is a standard assumption without which the problem is not well-posed \textendash\ and hence, meaningless from an algorithmic perspective.

\para{Notation}
To streamline notation (and unless explicitly mentioned otherwise), we will denote control variables by $\control_{\play}$ and $\control$, and we will write $\latent_{\play} = \latemap_{\play}(\control_{\play})$ and $\latent = \latemap(\control)$ for the induced latent variables.
We will also write $\dControls \defeq \sum_{\play} \dControls_{\play}$ for the dimensionality of the game's control space $\controls$
and
$\dLatents \defeq \sum_{\play} \dLatents_{\play}$ for the dimensionality of the latent space $\latents$ \revise{(so, in general, $\dControls \geq \dLatents$)}.
Finally, when the representation map $\latemap$ is clear from the context, we will write $\jmat_{\play}(\control_{\play}) \defeq \Jac(\latemap_{\play}(\control_{\play})) \in \R^{\dLatents_{\play}\times\dControls_{\play}}$ for the Jacobian matrix of $\latemap_{\play}$ at $\control_{\play} \in \controls_{\play}$, and $\jmat(\control) = \bigoplus_{\play} \jmat_{\play}(\control_{\play})$ for the associated block diagonal sum.

\section{Hidden gradients and preconditioning}
\label{sec:PHGD}

We are now in a position to present our main algorithmic scheme for equilibrium learning in games with a hidden monotone structure.
In this regard, our aim will be to overcome the following limitations in the existing literature on learning in hidden games:
\begin{enumerate*}
[(\itshape i\hspace*{1pt}\upshape)]
\item
the literature so far has focused exclusively on continuous-time dynamics, with no discrete-time algorithms proven to efficiently converge to a solution;
\item
the number of players is typically limited to two;
and
\item
\revise{the representation maps are \emph{separable} in the sense that the control variables are partitioned into subsets and each subset affects exactly one latent variable.}%
\footnote{The separability assumption also rules out the logit representation $\exp(\control_{\play\iControl}) \big/ \sum_{\jControl} \exp(\control_{\play\jControl})$ that is standard when the latent structure expresses mixed strategies in a finite game.}
\end{enumerate*}

We start by addressing the last two challenges first.
Specifically, we begin by introducing a gradient preconditioning scheme that allows us to design a convergent \emph{continuous-time} dynamical system for hidden monotone games with an \emph{arbitrary} number of players and \emph{no separability} restrictions for its representation maps.
Subsequently, we propose a bona fide algorithmic scheme \textendash\ which we call \acdef{PHGD} \textendash\ by discretizing the said dynamics, and we analyze the algorithm's long-run behavior in \cref{sec:results}.

\subsection{Continuous-time dynamics for hidden games}
\label{sec:continuous}

To connect our approach with previous works in the literature, our starting point will be a simple setting already captured within the model of  \citep{mladenovic2021generalized},
min-max games with a hidden convex-concave structure and unconstrained one-dimensional control and latent spaces per player, \ie $\controls_{\play} = \R = \latents_{\play}$ for $\play = 1,2$.
We should of course note that this specific setting is fairly restrictive and not commonly found in deep neural network practice:
in real-world deep learning applications, neural nets generally comprise multiple interconnected layers with distinct activation functions, so the input-output relations are markedly more complex and intertwined.
Nevertheless, the setting's simplicity makes the connection with natural gradient methods particularly clear, so as in \citep{mladenovic2021generalized}, it will serve as an excellent starting point.%

Concretely, \citet{mladenovic2021generalized} analyze the \acli{NHGD}
\begin{equation}
\label{eq:NHGD}
\tag{NHGD}
\dot\control_{\play}
	= -\frac{1}{\abs{\latemap_{\play}'(\control_{\play})}^{2}} \frac{\pd\loss_{\play}}{\pd\control_{\play}}
	\quad
	\text{for all $\play\in\players$}.
\end{equation}
The reason for this terminology \textendash\ and the driving force behind the above definition \textendash\ is the observation that, in one-dimensional settings, a direct application of the chain rule to the defining relation $\loss_{\play} = \mloss_{\play} \circ \latemap$ of the game's latent loss functions yields $\pd\loss_{\play}/\pd\control_{\play} = \latemap_{\play}'(\control_{\play}) \cdot \pd\mloss_{\play}/\pd\latent_{\play}$ so, in turn, \eqref{eq:NHGD} becomes
\begin{equation}
\label{eq:NHGD-hidden}
\dot\control_{\play}
	= -\frac{1}{\latemap_{\play}'(\control_{\play})}
		\frac{\pd\mloss_{\play}}{\pd\latent_{\play}}
\end{equation}
where, for simplicity, we are tacitly assuming that $\latemap_{\play}(\control_{\play})' > 0$.

This expression brings two important points to light.
First, even though \eqref{eq:NHGD} is defined in terms of the actual, control-layer gradients $\pd\loss_{\play}/\pd\control_{\play}$ of $\game$, \cref{eq:NHGD-hidden} shows that the dynamics are actually driven by the latent-layer, ``hidden gradients'' $\gvec_{\play}(\latent) = \pd\mloss_{\play} / \pd\latent_{\play}$ of $\hgame$.
Second, the preconditioner $\abs{\latemap_{\play}'(\control_{\play})}^{-2}$ in \eqref{eq:NHGD} can be seen as a Riemannian metric on $\controls$:
instead of defining gradients relative to the standard Euclidean metric of $\controls \equiv \R^{\nPlayers}$, \eqref{eq:NHGD} can be seen as a gradient flow relative to the Riemannian metric $\gmat_{\play\playalt}(\control) = \delta_{\play\playalt} \latemap_{\play}'(\control_{\play})$, which captures the ``natural'' geometry induced by the representation map $\latemap$.%
\footnote{We do not attempt to provide here a primer on Riemannian geometry;
for a masterful introduction, see \cite{Lee97}.}

From an operational standpoint, the key property of \eqref{eq:NHGD} that enabled the analysis of \cite{mladenovic2021generalized} is the observation that the $L^{2}$ energy function
\begin{equation}
\label{eq:energy}
\lyap(\control)
	= \tfrac{1}{2} \norm{\latemap(\control) - \sol}^{2}
\end{equation}
between the latent representation $\latent = \latemap(\control)$ of $\control$ and a solution $\sol$ of \eqref{eq:SVI} / \eqref{eq:MVI} is a Lyapunov function for \eqref{eq:NHGD}.
Indeed, a straightforward differentiation yields
\begin{equation}
\label{eq:Lyapunov}
\dot\lyap(\control)
	= \frac{1}{2} \frac{d}{d\time} \norm{\latent - \sol}^{2}
	= \insum_{\play} \dot\latent_{\play} \cdot (\latent_{\play} - \sol_{\play})
	= -\bracks{\gvec(\latemap(\control))}^{\trans} (\latemap(\control) - \sol)
	\leq 0
\end{equation}
with the penultimate step following from \eqref{eq:NHGD-hidden} and the last one from \eqref{eq:MVI}.
It is then immediate to see that the latent orbits $\latent(\time) = \latemap(\control(\time))$ of \eqref{eq:NHGD} converge to equilibrium in strictly monotone games.

The approach in the example above depends crucially on the separability assumption which, among others, trivializes the problem.
Indeed, if there is no coupling between control variables in the game's latent space, the equations $\latent_{\play} = \latemap_{\play}(\control_{\play})$ can be backsolved easily for $\control$ (\eg via  binary search), so it is possible to move back-and-forth between the latent and control layers, ultimately solving the game in the latent layer and subsequently extracting a solution configuration in the control layer.
Unfortunately however, extending the construction of \eqref{eq:NHGD} to a non-separable setting is not clear, so it is likewise unclear how to exploit the hidden structure of the game beyond the separable case.

To that end, our point of departure is the observation that the Lyapunov property \eqref{eq:Lyapunov} of the energy function $\lyap(\control)$ is precisely the key feature that enables convergence of \eqref{eq:NHGD}.
Thus, assuming that control variables are mapped to latent variables via a general \textendash\ though possibly highly coupled \textendash\ representation map $\latemap\from\controls\to\latents$, we will consider an abstract preconditioning scheme of the form
\begin{equation}
\label{eq:PGD}
\dot\control_{\play}
	= -\pmat_{\play}(\control_{\play}) \controlvecfield_{\play}(\control)
\end{equation}
where
\begin{equation}
\controlvecfield_{\play}(\control)
	\defeq \nabla_{\control_{\play}}\loss_{\play}(\control)
\end{equation}
denotes the individual loss gradient of player $\play$,
while
the preconditioning matrix $\pmat_{\play} (\control_{\play}) \in \R^{\dControls_{\play}\times\dControls_{\play}}$ is to be designed so that \eqref{eq:Lyapunov} still holds under \eqref{eq:PGD}.
In this regard, a straightforward calculation (which we prove in \cref{app:continuous}) yields the following:

\begin{restatable}{lemma}{Lyapunov}
\label{lem:PHGF-Lyap}
Under the dynamics \eqref{eq:PGD}, we have
\begin{equation}
\label{eq:PHGF-Lyap}
\dot\lyap(\control)
	= -\insum_{\play\in\players}
		\bracks{\gvec_{\play}(\latemap(\control))}^{\trans}
		\jmat_{\play}(\control_{\play})
		\pmat_{\play}(\control_{\play})
		\bracks{\jmat_{\play}(\control_{\play})}^{\trans}
		(\latemap_{\play}(\control_{\play}) - \sol_{\play})
\end{equation}
where, as per \cref{sec:prelims}, $\jmat_{\play}(\control_{\play}) \in \R^{\dLatents_{\play}\times\dControls_{\play}}$ denotes the Jacobian matrix of the map $\latemap_{\play}\from\controls_{\play}\to\latents_{\play}$.
More compactly, letting $\pmat \defeq \bigoplus_{\play} \pmat_{\play}$ denote the block-diagonal ensemble of the players' individual preconditioning matrices $\pmat_{\play}$ \textpar{and suppressing control variable arguments for concision}, we have
\begin{equation}
\label{eq:PHGF-Lyap-all}
\dot\lyap
	= - \gvec(\latent)^{\trans}
		\cdot \jmat \pmat \jmat^{\trans}
		\cdot (\latent - \sol)
\end{equation}
\end{restatable}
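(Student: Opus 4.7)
The proof is a direct chain-rule calculation, organized around three short steps.

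First, I would expand the Lyapunov function as a sum over players, $\lyap(\control) = \tfrac{1}{2}\sum_{\play} \norm{\latemap_{\play}(\control_{\play}) - \sol_{\play}}^{2}$, and differentiate in time. Since the $\play$-th summand depends only on $\control_{\play}$, the ordinary chain rule gives $\tfrac{d}{dt}\latemap_{\play}(\control_{\play}) = \jmat_{\play}(\control_{\play})\,\dot\control_{\play}$, so that
\[
\dot\lyap(\control)
    = \insum_{\play\in\players} (\latemap_{\play}(\control_{\play}) - \sol_{\play})^{\trans}\,\jmat_{\play}(\control_{\play})\,\dot\control_{\play}.
\]

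Second, I would substitute the dynamics \eqref{eq:PGD} to replace $\dot\control_{\play}$ with $-\pmat_{\play}(\control_{\play})\,\controlvecfield_{\play}(\control)$, and then unfold the control-layer gradient $\controlvecfield_{\play}(\control) = \nabla_{\control_{\play}}\loss_{\play}(\control)$ through the hidden decomposition \eqref{eq:LossFunctionDecomposition}. Because $\loss_{\play} = \mloss_{\play}\circ\latemap$ and $\latemap$ depends on $\control_{\play}$ only through its $\play$-th component $\latemap_{\play}(\control_{\play})$, the chain rule yields the key identity $\controlvecfield_{\play}(\control) = \jmat_{\play}(\control_{\play})^{\trans}\,\gvec_{\play}(\latemap(\control))$, which is precisely the identity that justifies the whole preconditioning philosophy: the gradient the player actually sees in control space is the Jacobian-transformed image of the hidden gradient in latent space.

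Third, I would chain these identities together to get
\[
\dot\lyap(\control)
    = -\insum_{\play\in\players}
        (\latemap_{\play}(\control_{\play}) - \sol_{\play})^{\trans}\,
        \jmat_{\play}(\control_{\play})\,\pmat_{\play}(\control_{\play})\,\jmat_{\play}(\control_{\play})^{\trans}\,
        \gvec_{\play}(\latemap(\control)),
\]
and then take the scalar transpose of the right-hand side. Under the (implicit) assumption that each $\pmat_{\play}$ is symmetric \textendash\ as is natural for a preconditioner \textendash\ this reproduces \eqref{eq:PHGF-Lyap} verbatim. The compact form \eqref{eq:PHGF-Lyap-all} then follows by stacking the players' gradients, Jacobians, and preconditioners into the block-diagonal objects $\jmat = \bigoplus_{\play}\jmat_{\play}$ and $\pmat = \bigoplus_{\play}\pmat_{\play}$, and observing that the block-diagonal structure makes the cross-player terms vanish automatically.

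There is no real conceptual obstacle here; the whole argument is essentially bookkeeping. If there is any subtlety at all, it lies in the second step: one must be careful to recognize that although each latent loss $\mloss_{\play}$ depends on \emph{all} latent coordinates $\latent_{1},\ldots,\latent_{\nPlayers}$, only the $\play$-th block $\latent_{\play} = \latemap_{\play}(\control_{\play})$ varies with $\control_{\play}$, which is exactly why the formula involves the single-player Jacobian $\jmat_{\play}^{\trans}$ rather than the full block-diagonal $\jmat^{\trans}$ applied to $\gvec$.
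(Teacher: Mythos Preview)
Your proposal is correct and follows essentially the same chain-rule bookkeeping as the paper's proof; the only difference is cosmetic, in that the paper carries out the computation coordinate-by-coordinate (writing out $\dot\control_{\play\iControl}$, $\pd\lyap/\pd\control_{\play\iControl}$, etc.) whereas you work directly in matrix/vector notation. Your remark about needing $\pmat_{\play}$ symmetric for the final transpose is well-taken and is left implicit in the paper.
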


In view of \cref{lem:PHGF-Lyap}, a direct way to achieve the target Lyapunov property \eqref{eq:Lyapunov} would be to find a preconditioning matrix $\pmat$ such that $\jmat\pmat\jmat^{\trans} = \eye$.
However, since $\jmat$ is surjective (by the faithfulness assumption for $\latemap$), the Moore-Penrose inverse $\jmat^{\pinv}$ of $\jmat$ will be a right inverse to $\jmat$, \ie $\jmat\jmat^{\pinv} = \eye$.
Hence, letting $\pmat = (\jmat^{\trans}\jmat)^{\pinv} = \jmat^{\pinv} \bracks{\jmat^{\pinv}}^{\trans}$, we obtain:
\begin{equation}
\jmat\pmat\jmat^{\trans}
	= \jmat (\jmat^{\trans}\jmat)^{\pinv} \jmat^{\trans}
	= \jmat \jmat^{\pinv} (\jmat^{\trans})^{\pinv} \jmat^{\trans}
	= \eye\cdot\eye
	= \eye
\end{equation}
In this way, unraveling the above, we obtain the \acli{PHGF}
\begin{equation}
\label{eq:PHGF}
\tag{PHGF}
\dot\control_{\play}
	= -\pmat_{\play}(\control_{\play}) \nabla_{\play} \loss_{\play}(\control_{\play})
	\quad
	\text{with}
	\quad
\pmat_{\play}(\control_{\play})
	= \bracks{\jmat_{\play}(\control_{\play})^{\trans} \jmat_{\play}(\control_{\play})}^{\pinv}
\end{equation}

By virtue of design, the following property of \eqref{eq:PHGF} is then immediate:

\begin{restatable}{proposition}{PHGF}
\label{prop:PHGF-strict}
Suppose that $\game$ admits a latent strictly monotone structure in the sense of \eqref{eq:strict}.
Then the energy function \eqref{eq:energy} is a strict Lyapunov function for \eqref{eq:PHGF}, and
every limit point $\eq$ of $\control(\time)$ is a \acl{NE} of $\game$.
\end{restatable}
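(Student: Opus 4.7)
The plan is to combine the Lyapunov identity of \cref{lem:PHGF-Lyap} with the construction of the preconditioning matrix in \eqref{eq:PHGF} to reduce $\dot\lyap$ to the canonical Minty pairing, and then run a standard LaSalle-style argument to identify the $\omega$-limit set.

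\smallskip
\textbf{Step 1: Reduction of $\dot\lyap$ along \eqref{eq:PHGF}.}
By virtue of $\pmat_{\play} = [\jmat_{\play}^{\trans}\jmat_{\play}]^{\pinv}$ and the surjectivity of $\jmat_{\play}$ (which follows from the no-critical-points assumption on $\latemap$), a direct computation \textendash\ exactly the one already displayed after \cref{lem:PHGF-Lyap} \textendash\ shows that $\jmat\pmat\jmat^{\trans} = \eye$. Substituting into \eqref{eq:PHGF-Lyap-all} gives
\begin{equation*}
\dot\lyap(\control)
    = -\braket{\gvec(\latemap(\control))}{\latemap(\control) - \sol}.
\end{equation*}

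\smallskip
\textbf{Step 2: Strict decrease.}
By the Minty characterization \eqref{eq:MVI}, $\dot\lyap(\control)\le 0$. For strictness, write
\begin{equation*}
\braket{\gvec(\latent)}{\latent-\sol}
    = \braket{\gvec(\sol)}{\latent-\sol}
        + \braket{\gvec(\latent)-\gvec(\sol)}{\latent-\sol}.
\end{equation*}
The first term is $\ge 0$ by \eqref{eq:SVI}, and the second is $>0$ whenever $\latent\neq\sol$ by strict monotonicity \eqref{eq:strict}. Hence $\dot\lyap(\control)<0$ whenever $\latemap(\control)\neq\sol$, i.e., $\lyap$ is a strict Lyapunov function for \eqref{eq:PHGF}. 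Note incidentally that strict monotonicity also forces the \eqref{eq:SVI} solution $\sol$ to be unique, so $\lyap$ is well-defined.

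\smallskip
\textbf{Step 3: Limit points of $\control(\time)$.}
Since $\lyap$ is non-increasing along orbits and bounded below by $0$, the limit $\lyap^{\ast}\defeq\lim_{\time\to\infty}\lyap(\control(\time))$ exists. Let $\eq$ be any limit point of $\control(\time)$ and take a sequence $\time_n\to\infty$ with $\control(\time_n)\to\eq$. By continuity of $\latemap$ and $\lyap$, $\lyap(\eq)=\lyap^{\ast}$. If $\latemap(\eq)\neq\sol$, then $\dot\lyap(\eq)<0$ by Step 2, and by continuous dependence of the flow on initial conditions the trajectory emanating from $\eq$ strictly decreases $\lyap$ on some interval $[0,\delta]$. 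Pulling this back to the original orbit starting from $\control(\time_n)$ for $n$ large contradicts $\lyap(\control(\time_n+\delta))\ge\lyap^{\ast}$. Therefore $\latemap(\eq)=\sol$.

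\smallskip
\textbf{Step 4: From $\latemap(\eq)=\sol$ to Nash.}
For any fixed $\latent_{-\play}$, monotonicity of $\gvec$ applied to profiles that differ only in the $\play$-th coordinate shows that $\nabla_{\latent_{\play}}\mloss_{\play}(\cdot\,;\latent_{-\play})$ is a monotone operator, so $\mloss_{\play}(\cdot\,;\latent_{-\play})$ is convex. Combined with the \eqref{eq:SVI} property of $\sol$, this gives $\mloss_{\play}(\sol)\le \mloss_{\play}(\latent_{\play};\sol_{-\play})$ for every $\latent_{\play}\in\latents_{\play}$. Specializing $\latent_{\play}=\latemap_{\play}(\control_{\play})$ for arbitrary $\control_{\play}\in\controls_{\play}$ and using \eqref{eq:LossFunctionDecomposition} together with $\latemap(\eq)=\sol$ yields $\loss_{\play}(\eq)\le \loss_{\play}(\control_{\play};\eq_{-\play})$ for all $\play\in\players$, i.e., $\eq$ satisfies \eqref{eq:NE}.

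\smallskip
The main technical care is concentrated in Step 3: the strict Lyapunov inequality only gives $\lyap$ decreasing on the control layer, while the conclusion about limit points is a statement about $\latemap(\eq)$. The LaSalle-type continuous-dependence argument above is what bridges the two, and it is the one place where one must be careful that $\control(\time)$ itself need not converge even though $\latemap(\control(\time))\to\sol$.
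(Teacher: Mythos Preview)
Your proof is correct and follows essentially the same approach as the paper: both use \cref{lem:PHGF-Lyap} together with the identity $\jmat\pmat\jmat^{\trans}=\eye$ to reduce $\dot\lyap$ to the Minty pairing, and then invoke strict monotonicity to obtain the strict Lyapunov property. Your Steps~3--4 (the LaSalle-type argument and the passage from $\latemap(\eq)=\sol$ to \eqref{eq:NE}) are in fact more detailed than the paper, which states the strict Lyapunov conclusion and leaves the limit-point and Nash-equilibrium claims implicit.
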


\Cref{prop:PHGF-strict} validates our design choices for the preconditioning matrix $\pmat$ as it illustrates that the dynamics \eqref{eq:PHGF} converge to \acl{NE}, without any separability requirements or dimensionality restrictions;
in this regard, \cref{prop:PHGF-strict} already provides a marked improvement over the corresponding convergence result of \citep{mladenovic2021generalized} for \eqref{eq:NHGD}.
To streamline our presentation, we defer the proof of \cref{lem:PHGF-Lyap,prop:PHGF-strict} to \cref{app:continuous}, and instead proceed directly to provide a bona fide, algorithmic implementation of \eqref{eq:PHGF}.

\begin{figure}
\input{Tikz/continuous}
\caption{%
Exploiting a hidden convex structure in the control layer.
On the left, we present the dynamics \eqref{eq:PHGF} in the example of minimizing the simple function $\mloss(\control) = \dkl(\mathrm{logit}(\control_1, \control_2) \| (1/2, 1/3, 1/6))$ over $\controls \equiv \R^2$.
In the subfigure to the right, we illustrate the hidden convex structure of the energy landscape, from the non-convex sublevel sets of $\controls$ to the latent space $\latentStates \equiv \setdef{(\latent_1, \latent_2) \in \R^2_{\geq 0}}{\latent_1 + \latent_2 \leq 1}$.}
\label{fig:side_caption}
\end{figure}

\subsection{\Acl{PHGD}}
\label{sec:discrete}

In realistic machine learning problems, any algorithmic scheme based on \eqref{eq:PHGF} will have to be run in an iterative, discrete-time environment;
moreover, due to the challenges posed by applications with large datasets and optimization objectives driven by the minimization of empirical risk, we will need to assume that players only have access to a stochastic version of their full, deterministic gradients.
As such, we will make the following blanket assumptions that are standard in the stochastic optimization literature \citep{Haz12,Lan20}:

\begin{assumption}
\label{asm:loss}
Each agent's loss function is an expectation of a random function $\stochLoss_{\play}\from\controls\times\samples \to \R$ over a complete probability space $(\samples,\filter,\prob)$, \ie $\loss_{\play}(\control) = \exof{\stochLoss_{\play}(\control;\sample)}$.
We further assume that:
\begin{enumerate}
\item
$\stochLoss_{\play}(\control;\sample)$ is measurable in $\sample$ and $\lips$-Lipschitz smooth in $\control$ (for all $\control\in\controls$ and all $\sample\in\samples$ respectively).
\item
The gradients of $\stochLoss_{\play}$ have bounded second moments, \ie
\begin{equation}
\sup_{\control\in\controls}\exof{\norm{\nabla\stochLoss_{\play}(\control;\sample)}^{2}}
	\leq \sbound^{2}.
\end{equation}
\end{enumerate}
\end{assumption}

Taken together, the two components of \cref{asm:loss} imply that each $\loss_{\play}$ is differentiable and Lipschitz continuous
(indeed, by Jensen's inequality, we have $\norm{\exof{\nabla\stochLoss_{\play}(\control;\sample)}}^{2} \leq \exof{\norm{\nabla\stochLoss_{\play}(\control;\sample)}^{2}} \leq \sbound^{2}$).
Moreover, by standard dominated convergence arguments, $\nabla\stochLoss_{\play}(\control;\sample)$ can be seen as an unbiased estimator of $\nabla\loss_{\play}(\control)$, that is, $\loss_{\play}(\control) = \exof{\nabla\stochLoss_{\play}(\control;\sample)}$ for all $\control\in\controls$.
In view of this, we will refer to the individual loss gradients $\nabla_{\play} \stochLoss_{\play}(\control;\sample)$ of player $\play\in\players$ as the player's individual \emph{stochastic gradients}.

With all this in mind, the \acli{PHGD} algorithm is defined as the stochastic first-order recursion
\begin{equation}
\label{eq:PHGD}
\tag{PHGD}
\next[\control][\play]
	= \curr[\control][\play]
		- \curr[\step] \curr[\pmat][\play] \curr[\signal][\play]
\end{equation}
where:
\begin{enumerate}
\item
$\curr[\control][\play]$ denotes the control variable configuration of player $\play\in\players$ at each stage $\run=\running$
\item
$\curr[\step] > 0$ is a variable step-size sequence, typically of the form $\curr[\step] \propto 1/\run^{\pexp}$ for some $\pexp\in[0,1]$.
\item
$\curr[\pmat][\play] \defeq \pmat_{\play}(\curr[\control][\play]) = \bracks{\jmat_{\play}(\curr[\control][\play])^{\trans} \jmat_{\play}(\curr[\control][\play])}^{\pinv}$ is the preconditioner of player $\play\in\players$ at the $\run$-th epoch.
\item
$\curr[\signal][\play] \defeq \nabla_{\play} \stochLoss(\curr[\control][\play];\curr[\sample])$ is an individual stochastic loss gradient generated at the control variable configuration $\curr[\control]$ by an \acs{iid} sample sequence $\curr[\sample]\in\samples$, $\run=\running$
\end{enumerate}
The basic recursion \eqref{eq:PHGD} can be seen as a noisy Euler discretization of the continuous flow \eqref{eq:PHGF}, in the same way that ordinary stochastic gradient descent can be seen as a noisy discretization of gradient flows.
In this way, \eqref{eq:PHGD} is subject to the same difficulties underlying the analysis of stochastic gradient descent methods;
we address these challenges in the next section.

\section{Convergence analysis and results}
\label{sec:results}

In this section, we present our main results regarding the convergence of \eqref{eq:PHGD} in hidden monotone games.
Because we are interested not only on the asymptotic convergence of the method but also on its \emph{rate} of convergence, we begin this section by defining a suitable merit function for each class of hidden monotone games under consideration;
subsequently, we present our main results in \cref{sec:convergence}, where we also discuss the main technical tools enabling our analysis.
To streamline our presentation, all proofs are deferred to \cref{app:discrete}.

\subsection{Merit functions and convergence metrics}
\label{sec:meritFunctions}

In the variational context of \eqref{eq:SVI}\,/\,\eqref{eq:MVI},
the quality of a candidate solution $\ltest$ is typically evaluated by means of the \emph{restricted merit function}
\begin{equation}
\label{eq:gap-latent}
\gap\nolimits_{\ltests}(\ltest)
	\defeq \sup\nolimits_{\latent\in\ltests} \braket{\gvec(\latent)}{\ltest - \latent}
\end{equation}
where the ``test domain'' $\ltests$ is a relatively open subset of $\latents$ \cite{AT05}.
The \emph{raison d'être} of this definition is that, if $\gvec$ is monotone and $\sol\in\ltests$ is a solution of \eqref{eq:SVI}\,/\,\eqref{eq:MVI}, we have
\begin{equation}
\braket{\gvec(\latent)}{\sol - \latent}
	\leq \braket{\gvec(\sol)}{\sol - \latent}
	\leq 0
	\quad
	\text{for all $\latent\in\ltests$},
\end{equation}
so the supremum in \eqref{eq:gap-latent} cannot be too positive if $\ltest$ is an approximate solution of \eqref{eq:SVI}\,/\,\eqref{eq:MVI}.
This is encoded in the following lemma, which, among others, justifies the terminology ``merit function'':

\begin{restatable}{lemma}{lemGapLatent}
\label{lem:gap-latent}
Suppose that $\gvec$ is monotone.
If $\ltest$ is a solution of \eqref{eq:SVI}\,/\,\eqref{eq:MVI}, we have $\gap_{\ltests}(\ltest) = 0$ whenever $\ltest\in\ltests$.
Conversely, if $\gap_{\ltests}(\ltest) = 0$ and $\ltests$ is a neighborhood of $\ltest$ in $\latents$, then $\ltest$ is a solution of \eqref{eq:SVI}\,/\,\eqref{eq:MVI}.
\end{restatable}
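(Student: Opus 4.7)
The plan is to treat the two implications separately, with the forward direction being a direct unpacking of definitions and the converse relying on a standard ``small perturbation'' argument combined with monotonicity.

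For the forward direction, I would assume that $\ltest$ solves the (MVI) characterization. Then by definition, $\braket{\gvec(\latent)}{\latent - \ltest} \geq 0$ for every $\latent \in \latents$, which rewrites as $\braket{\gvec(\latent)}{\ltest - \latent} \leq 0$. Taking the supremum over $\latent \in \ltests \subseteq \latents$ immediately yields $\gap_{\ltests}(\ltest) \leq 0$. For the reverse inequality, I use that $\ltest \in \ltests$ is a valid test point, and substituting $\latent = \ltest$ into the defining supremum gives the value $0$. Combining, $\gap_{\ltests}(\ltest) = 0$.

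For the converse, assume $\gap_{\ltests}(\ltest) = 0$ with $\ltests$ a relatively open neighborhood of $\ltest$ in $\latents$. By definition of the supremum, $\braket{\gvec(\latent)}{\ltest - \latent} \leq 0$ for every $\latent \in \ltests$. The central step is to upgrade this \emph{local} inequality to the \emph{global} (SVI). To that end, pick an arbitrary $\latent \in \latents$ and consider the convex combination $\latent_{t} = (1-t)\ltest + t\latent$ for $t \in (0,1]$. Since $\latents$ is convex, $\latent_{t} \in \latents$; and since $\ltests$ is a relative neighborhood of $\ltest$ in $\latents$, we have $\latent_{t} \in \ltests$ for all sufficiently small $t > 0$. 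For such $t$, applying the local inequality at $\latent_{t}$ and using $\ltest - \latent_{t} = -t(\latent - \ltest)$ gives $\braket{\gvec(\latent_{t})}{\latent - \ltest} \geq 0$. Letting $t \to 0^{+}$ and invoking continuity of $\gvec$ (which follows from the Lipschitz smoothness of each $\mloss_{\play}$ in \cref{def:hidden}) yields $\braket{\gvec(\ltest)}{\latent - \ltest} \geq 0$. Since $\latent \in \latents$ was arbitrary, this is exactly the \eqref{eq:SVI} characterization at $\ltest$. Finally, to recover \eqref{eq:MVI}, I would write
\begin{equation*}
\braket{\gvec(\latent)}{\latent - \ltest}
    = \braket{\gvec(\latent) - \gvec(\ltest)}{\latent - \ltest}
    + \braket{\gvec(\ltest)}{\latent - \ltest}
    \geq 0,
\end{equation*}
where the first bracket is nonnegative by monotonicity of $\gvec$ and the second by the (SVI) just established.

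I do not expect a serious obstacle: the only subtlety is ensuring that the line segment $\latent_{t}$ remains in $\ltests$, which is why the hypothesis ``$\ltests$ is a neighborhood of $\ltest$ in $\latents$'' matters in the converse. Without this, one can only get the restricted (MVI) inequality on $\ltests$, not the full (SVI)/(MVI) at $\ltest$.
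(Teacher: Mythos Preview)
Your proposal is correct and follows essentially the same approach as the paper: both the forward direction and the converse hinge on the line-segment/perturbation argument combined with continuity of $\gvec$ to upgrade the local inequality on $\ltests$ to the global \eqref{eq:SVI}. The only cosmetic difference is that the paper phrases the converse by contradiction (assuming \eqref{eq:SVI} fails at some $z_{1}$ and deriving a violation of the local Minty inequality), whereas you argue directly via the limit $t\to 0^{+}$ along $\latent_{t}=(1-t)\ltest+t\latent$; the underlying mechanism is identical.
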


\Cref{lem:gap-latent} extends similar statements by \citet{AT05} and \citet{Nes07};
the precise variant that we state above can be found in \cite{ABM19}, but, for completeness, we provide a proof in \cref{app:discrete}.

Now, since a latent variable profile $\sol = \latemap(\eq)$ solves \eqref{eq:SVI} if and only if the control variable configuration $\eq$ is \acl{NE} of the base game, the quality of a candidate solution $\ctest\in\controls$ with $\ltest = \latemap(\ctest)$ can be assessed by the induced gap function
\begin{equation}
\label{eq:gap-induced}
\gap(\ctest)
	\defeq \gap_{\latents}(\latemap(\ctest))
	= \sup\nolimits_{\latent\in\latents} \braket{\gvec(\latent)}{\ltest - \latent}.
\end{equation}
Indeed, since $\ltest = \latemap(\ctest) \in \latents$, \cref{lem:gap-latent} shows that $\gap(\ctest) \geq 0$ with equality if and only if $\ltest$ is a \acl{NE} of the base game $\game$.
In this regard, \cref{eq:gap-induced} provides a valid equilibrium convergence metric for $\game$, so we will use it freely in the sequel as such;
for a discussion of alternative convergence metrics, we refer the reader to \cref{app:discrete}.

\subsection{Convergence results}
\label{sec:convergence}

We are now in a position to state our main results regarding the equilibrium convergence properties of \eqref{eq:PHGD}.
To streamline our presentation, we present our results from coarser to finer, starting with games that admit a hidden \emph{merely} monotone structure and \emph{stochastic} gradient feedback, and refining our analysis progressively to games with a hidden \emph{strongly} monotone structure and \emph{full} gradient feedback.%
\footnote{As expected, convergence rates improve along the way.}
In addition, to quantify this distortion between the game's latent and control layers, we will require a technical regularity assumption for the game's representation map $\latemap\from\controls\to\latents$.
\begin{assumption}
\label{asm:map}
The singular values of the Jacobian $\jmat(\control)$ of the representation map $\latemap$ are bounded as
\begin{equation}
\label{eq:eigens}
\svmin^{2}
	\leq \eig(\jmat(\control)\jmat(\control)^{\trans})
	\leq \svmax^{2}
\end{equation}
for some $\svmin,\svmax\in(0,\infty)$ and for all $\control\in\controls$.
\end{assumption}

With all this in hand, we begin by studying the behavior of \eqref{eq:PHGD} in games with a hidden monotone structure.

\begin{restatable}[\ac{PHGD} in hidden monotone games]{theorem}{thmPHGDHiddenMerely}
\label{thm:PHGD-mon}
Suppose that players run \eqref{eq:PHGD} in a hidden monotone game with learning rate $\curr[\step] \propto 1/\run^{1/2}$.
Then, under \cref{asm:loss,asm:map}, the averaged process $\bar\curr \in \latemap^{-1}\parens[\big]{\run^{-1} \sum_{\runalt=\start}^{\run} \iter[\latent]}$ enjoys the equilibrium convergence rate
\begin{equation}
\label{eq:rate-monotone}
\exof{\gap(\bar\curr)}
	= \bigoh(\log\run/\sqrt{\run}).
	\medskip
\end{equation}
\end{restatable}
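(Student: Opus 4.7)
The overall strategy is to carry the analysis of \eqref{eq:PHGD} from the control layer into the \emph{latent} layer, where the problem is monotone and a Lyapunov-style telescope is available. This is natural because the merit function $\gap$ in \eqref{eq:gap-induced} depends on $\curr[\control]$ only through $\curr[\latent] = \latemap(\curr[\control])$, and the preconditioner $\curr[\pmat][\play] = [\jmat_{\play}^{\trans}\jmat_{\play}]^{\pinv}$ was designed precisely to yield the identity $\jmat_{\play}\pmat_{\play}\jmat_{\play}^{\trans} = \eye$ used in the continuous-time argument of \cref{lem:PHGF-Lyap}.

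The first step is to track how one iteration of \eqref{eq:PHGD} acts on the latent profile. By Taylor expanding $\latemap_{\play}$ and invoking $\lips$-smoothness from \cref{asm:loss}, one obtains $\next[\latent][\play] = \iter[\latent][\play] + \jmat_{\play}(\iter[\control][\play])(\next[\control][\play] - \iter[\control][\play]) + \iter[R][\play]$ with Taylor remainder $\|\iter[R][\play]\| = \bigoh(\iter[\step]^{2}\|\iter[\signal][\play]\|^{2})$; the singular value bounds in \cref{asm:map} ensure $\|\pmat_{\play}\|$ and $\|\jmat_{\play}\pmat_{\play}\|$ are uniformly bounded, which together with the moment bound $\sbound^{2}$ gives $\exof{\|\iter[R][\play]\|}=\bigoh(\iter[\step]^{2})$. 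The crucial point is that the chain rule writes $\iter[\signal][\play] = \jmat_{\play}^{\trans}\iter[\est{\gvec}][\play]$ for a stochastic hidden gradient $\iter[\est{\gvec}][\play]$ with $\exof{\iter[\est{\gvec}][\play]\mid\filter_{\run}} = \gvec_{\play}(\iter[\latent])$, so $\jmat_{\play}\pmat_{\play}\iter[\signal][\play] = \iter[\est{\gvec}][\play]$. Substituting yields the clean latent update
\begin{equation*}
\next[\latent]
    = \iter[\latent] - \iter[\step]\iter[\est{\gvec}] + \iter[R],
\end{equation*}
i.e.\ \eqref{eq:PHGD} is exactly noisy projected-free stochastic gradient descent in the latent monotone game, up to a second-order remainder.

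The second step is to run the standard stochastic \ac{VI} telescope on $\tfrac{1}{2}\|\iter[\latent] - \ltest\|^{2}$ for a fixed test point $\ltest\in\ltests$. Expanding the square with the update above gives
\begin{equation*}
\iter[\step]\braket{\iter[\est{\gvec}]}{\iter[\latent] - \ltest}
    \leq \tfrac{1}{2}\|\iter[\latent]-\ltest\|^{2}
    - \tfrac{1}{2}\|\next[\latent]-\ltest\|^{2}
    + \bigoh(\iter[\step]^{2}\|\iter[\est{\gvec}]\|^{2})
    + \braket{\iter[R]}{\iter[\latent]-\ltest}.
\end{equation*}
Summing over $\runalt=1,\dotsc,\run$, invoking monotonicity in the form $\braket{\gvec(\iter[\latent][\runalt])}{\iter[\latent][\runalt] - \ltest} \geq \braket{\gvec(\ltest)}{\iter[\latent][\runalt] - \ltest}$, and decomposing $\iter[\est{\gvec}][\runalt] = \gvec(\iter[\latent][\runalt]) + \iter[\snoise][\runalt]$ into its conditional mean and a martingale-difference noise $\iter[\snoise][\runalt]$, one arrives at
\begin{equation*}
\insum_{\runalt=1}^{\run}\iter[\step][\runalt]\braket{\gvec(\ltest)}{\iter[\latent][\runalt]-\ltest}
    \leq \tfrac{1}{2}\|\init[\latent]-\ltest\|^{2}
    + \bigoh\parens[\big]{\txs\insum_{\runalt}\iter[\step][\runalt]^{2}}
    + \insum_{\runalt}\iter[\step][\runalt]\braket{\iter[\snoise][\runalt]}{\iter[\latent][\runalt]-\ltest}.
\end{equation*}

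The third step is to convert this bound into a bound on $\exof{\gap(\bar\curr)}$ and tune the step-size. Dividing by $\sum_{\runalt}\iter[\step][\runalt]$ and recognising the left-hand side as $\braket{\gvec(\ltest)}{\bar\curr[\latent]-\ltest}$ for the $\step$-weighted latent average $\bar\curr[\latent]$ yields a pointwise bound; taking the supremum over $\ltest$ in the test domain and then expectation exchanges via the standard ``ghost-iterate'' argument, using that $\iter[\snoise][\runalt]$ is a martingale difference against $\filter_{\run}$ and has bounded second moment by \cref{asm:loss,asm:map}. With $\iter[\step] = c/\sqrt{\run}$ one has $\sum_{\runalt\leq\run}\iter[\step][\runalt] = \Theta(\sqrt{\run})$ and $\sum_{\runalt\leq\run}\iter[\step][\runalt]^{2} = \Theta(\log\run)$, producing the advertised rate $\bigoh(\log\run/\sqrt{\run})$. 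The step from the $\step$-weighted to the uniform average is standard for decreasing $\iter[\step]$ of this type and only changes absolute constants.

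The main obstacle is neither the telescope nor the step-size calibration but the \emph{interface between the two layers}: one has to show that the Taylor remainder and the mismatch between $\iter[\signal]$ and $\iter[\est\gvec]$ never accumulate faster than $\sum_{\runalt}\iter[\step][\runalt]^{2}$, which is exactly where \cref{asm:map} is used (to uniformly bound $\|\pmat_{\play}\|$, $\|\jmat_{\play}\pmat_{\play}\|$, and the second moment of $\iter[\est{\gvec}][\play] = \jmat_{\play}\pmat_{\play}\iter[\signal][\play]$). Once this uniform control is in place, the monotone-\ac{VI} machinery of \cref{lem:gap-latent} and the ghost-iterate argument apply essentially verbatim in the latent space.
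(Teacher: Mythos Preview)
Your plan is correct and matches the paper's proof essentially step for step: the paper packages your Taylor expansion as the ``template inequality'' of \cref{lem:template}, telescopes $\tfrac{1}{2}\|\curr[\latent]-\ltest\|^{2}$, applies monotonicity, and then controls the noise-vs-supremum via the same ghost-iterate sequence in $\latents$ that you invoke. One small imprecision to fix when you write it out: \cref{asm:loss} does \emph{not} assume the stochastic loss $\stochLoss_{\play}$ factors through $\latemap$ (only its mean $\loss_{\play}$ does), so you cannot obtain $\curr[\signal][\play] = \jmat_{\play}^{\trans}\curr[\est\gvec][\play]$ by the chain rule; instead define $\curr[\est\gvec] \defeq \jmat(\curr)\pmat(\curr)\,\curr[\signal]$ directly and use $\exof{\curr[\signal]\mid\curr[\filter]} = \controlvecfield(\curr) = \jmat(\curr)^{\trans}\gvec(\curr[\latent])$ together with $\jmat\pmat\jmat^{\trans}=\eye$ to get the unbiasedness and second-moment bounds you need.
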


As far as we are aware, \cref{thm:PHGD-mon} is the first result of its kind in the hidden games literature \textendash\ that is, describing the long-run behavior of a discrete-time algorithm with stochastic gradient input.
At the same time, it is subject to two important limitations:
the first is that the averaged state $\bar\curr$ cannot be efficiently computed for general representation maps;
second, even if it could, the $\bigoh(\log\run/\sqrt{\run})$ convergence rate is relatively slow.
The two results that follow show that both limitations can be overcome in games with a hidden \emph{strongly monotone} structure.
In this case \eqref{eq:SVI}/\eqref{eq:MVI} admits a (necessarily) unique solution $\sol$ in the game's control space, so we will measure convergence in terms of the latent equilibrium distance
\begin{equation}
\label{eq:err-eq}
\err(\ctest)
	\defeq \tfrac{1}{2} \norm{\latemap(\ctest) - \sol}^{2}.
\end{equation}
With this in mind, we have the following convergence results:

\begin{restatable}[\ac{PHGD} in hidden strongly monotone games]{theorem}{thmPHGDHiddenStrongly}
\label{thm:PHGD-strong}
Suppose that players run \eqref{eq:PHGD} in a hidden $\strong$-strongly monotone game with $\curr[\step] = \step/\run$ for some $\step > \strong$.
Then, under \cref{asm:loss,asm:map}, the induced sequence of play $\curr\in\controls$, $\run=\running$, enjoys the equilibrium convergence rate
\begin{equation}
\label{eq:rate-strong}
\exof{\err(\curr)}
	= \bigoh(1/\run).
\end{equation}
\end{restatable}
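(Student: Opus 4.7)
My plan is to treat the latent distance $\err(\curr) = \tfrac{1}{2}\|\curr[\latent] - \sol\|^{2}$ directly as a stochastic Lyapunov function for \eqref{eq:PHGD} and establish a one-step contraction of the form
\begin{equation*}
\exof{\err(\next) \given \filter_{\run}} \leq (1 - 2\strong\,\curr[\step])\,\err(\curr) + \const\,\curr[\step]^{2},
\end{equation*}
with $\const$ depending only on $\sbound$, $\svmin$, $\svmax$ and the smoothness moduli from \cref{asm:loss,asm:map}. Taking total expectations and substituting $\curr[\step] = \step/\run$ with $\step$ as in the statement, a classical Chung-type lemma for recursions of the form $a_{\run+1} \leq (1 - c/\run)\,a_{\run} + c'/\run^{2}$ with $c > 1$ then delivers $\exof{\err(\curr)} = \bigoh(1/\run)$.

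To derive the contraction, I would first relate the latent-layer step to the control-layer step through a Taylor expansion of $\latemap$:
\begin{equation*}
\next[\latent] - \curr[\latent] = -\curr[\step]\,\jmat(\curr)\,\curr[\pmat]\,\curr[\signal] + R_{\run},
\end{equation*}
where $\|R_{\run}\| = \bigoh(\curr[\step]^{2}\,\|\curr[\pmat]\curr[\signal]\|^{2})$ follows from the Lipschitz smoothness of $\latemap$. By the chain rule, the stochastic gradient admits the decomposition $\curr[\signal] = \jmat(\curr)^{\trans}\gvec(\curr[\latent]) + \curr[\noise]$ with $\curr[\noise]$ conditionally zero-mean under \cref{asm:loss}, and the defining identity $\jmat\pmat\jmat^{\trans} = \eye$ from \eqref{eq:PHGF} collapses the deterministic part of $\jmat\pmat\curr[\signal]$ to exactly $\gvec(\curr[\latent])$. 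Hence
\begin{equation*}
\next[\latent] - \sol = (\curr[\latent] - \sol) - \curr[\step]\,\gvec(\curr[\latent]) - \curr[\step]\,\jmat\pmat\curr[\noise] + R_{\run}.
\end{equation*}

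Squaring this identity, taking conditional expectation (so that the cross term with $\curr[\noise]$ vanishes by unbiasedness), and bounding the noise variance $\exof{\|\jmat\pmat\curr[\noise]\|^{2} \given \filter_{\run}}$ via \cref{asm:loss,asm:map}, yields
\begin{equation*}
\exof{\err(\next) \given \filter_{\run}} \leq \err(\curr) - \curr[\step]\braket{\gvec(\curr[\latent])}{\curr[\latent] - \sol} + \const\,\curr[\step]^{2}.
\end{equation*}
Finally, $\strong$-strong monotonicity combined with \eqref{eq:SVI} at $\sol$ yields $\braket{\gvec(\curr[\latent])}{\curr[\latent] - \sol} \geq \strong \|\curr[\latent] - \sol\|^{2} = 2\strong\,\err(\curr)$, which closes the recursion.

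The main obstacle is the remainder $R_{\run}$ generated by the Taylor expansion: unlike plain SGD for a strongly monotone variational inequality, the latent iterate $\next[\latent]$ is a \emph{nonlinear} image of the control-layer update, so an intrinsic curvature bias of order $\curr[\step]^{2}$ inevitably appears and must be dominated by the contraction $(1 - 2\strong\curr[\step])$. Since $\curr[\pmat]$ is uniformly bounded by $1/\svmin^{2}$ under \cref{asm:map} and $\exof{\|\curr[\signal]\|^{2}} \leq \sbound^{2}$ under \cref{asm:loss}, the cross term $-\curr[\step]\braket{\curr[\latent] - \sol}{R_{\run}}$ can be absorbed into the recursion via Young's inequality together with a preliminary inductive bound on $\exof{\err(\curr)}$; this absorption is precisely what forces the step-size scale $\step$ to lie above the threshold indicated in the statement.
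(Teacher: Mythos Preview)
Your proposal is correct and follows essentially the same route as the paper: derive a one-step recursion $\exof{\err(\next)\given\filter_{\run}} \leq (1-2\strong\curr[\step])\err(\curr) + \const\curr[\step]^{2}$ via a Taylor expansion of $\latemap$ along the \eqref{eq:PHGD} step, use $\jmat\pmat\jmat^{\trans}=\eye$ to collapse the drift term to $\gvec(\curr[\latent])$, kill the noise cross term by conditional unbiasedness, invoke strong monotonicity, and close with Chung's lemma. The paper simply packages the first three of these steps into a separate ``template inequality'' (\cref{lem:template}) rather than doing them inline.

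One minor point: your last paragraph overcomplicates the handling of the Taylor remainder. You do not need Young's inequality plus an inductive bound on $\exof{\err(\curr)}$ to absorb the cross term $\braket{\curr[\latent]-\sol}{R_{\run}}$. The paper instead uses the uniform bound $\norm{\curr[\latent]-\sol}\leq\diam(\latents)$ on the latent space (implicit in \cref{asm:map} via $\svmax<\infty$ and used explicitly in the proof of \cref{lem:template}), so the remainder contributes a clean $\bigoh(\curr[\step]^{2})$ term with a constant depending only on $\sbound$, $\svmin$, $\lips$, $\dLatents$ and $\diam(\latents)$. With that simplification the step-size threshold comes directly from Chung's condition $2\strong\step>1$, not from any absorption argument.
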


This rate is tight, even for standard strongly monotone games.
To improve it further, we will need to assume that \eqref{eq:PHGD} is run with full, \emph{deterministic} gradients, \ie $\curr[\signal] = \gvec(\curr)$.
In this case, we obtain the following refinement of \cref{thm:PHGD-strong}.

\begin{restatable}[\ac{PHGD} with full gradient feedback in hidden strongly monotone games]{theorem}{thmPHGDHiddenStrongPerfect}
\label{thm:PHGD-perfect}
Suppose that players run \eqref{eq:PHGD} in a hidden $\strong$-strongly monotone game with full gradient feedback, and a suffciently small learning rate $\step > 0$.
Then, under \cref{asm:loss,asm:map}, the induced sequence of play $\curr\in\controls$, $\run=\running$, converges to equilibrium at a geometric rate, \ie
\begin{equation}
\label{eq:rate-perfect}
\err(\curr)
	= \bigoh(\rho^{\run})
\end{equation}
for some constant $\rho \in (0,1)$ that depends only on the primitives of $\fingame$ and the representation map $\latemap$.
\end{restatable}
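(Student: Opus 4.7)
The overall plan is to exploit the fact that the preconditioner $\pmat_\play = (\jmat_\play^\trans \jmat_\play)^\pinv$ is designed precisely so that, when pushed through the representation map $\latemap$, the \eqref{eq:PHGD} update reduces to a plain gradient descent step on the strongly monotone operator $\gvec$ in the latent space. Specifically, in the deterministic case $\curr[\signal][\play] = \jmat_\play(\curr)^\trans \gvec_\play(\curr[\latent])$, and since $(\jmat_\play^\trans \jmat_\play)^\pinv \jmat_\play^\trans = \jmat_\play^\pinv$, the control-layer update collapses to $\next = \curr - \step\, \jmat(\curr)^\pinv \gvec(\curr[\latent])$. A second-order Taylor expansion of $\latemap$ about $\curr$, combined with the identity $\jmat \jmat^\pinv = \eye$ (valid since the full row rank of $\jmat$ is guaranteed by \cref{asm:map}), then yields
\begin{equation*}
\next[\latent] = \curr[\latent] - \step\, \gvec(\curr[\latent]) + R_t, \qquad \|R_t\| \leq C_1 \step^2 \|\gvec(\curr[\latent])\|^2,
\end{equation*}
for a constant $C_1$ depending only on the Lipschitz constant of $\jmat$ and on $\svmin,\svmax$ in \cref{asm:map}.

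With this effective latent-space recursion in hand, geometric convergence reduces to the classical contraction argument for gradient descent on a strongly monotone operator. Writing $\err_t \defeq \err(\curr) = \tfrac{1}{2}\|\curr[\latent] - \sol\|^2$ and expanding $\err_{t+1}$, three ingredients deliver the contraction: $(i)$ strong monotonicity together with the \eqref{eq:SVI} bound $\braket{\gvec(\sol)}{\curr[\latent] - \sol} \geq 0$ gives $\braket{\gvec(\curr[\latent])}{\curr[\latent] - \sol} \geq 2\strong\,\err_t$; $(ii)$ Lipschitz continuity of $\gvec$ together with the vanishing of $\gvec(\sol)$---which follows from the stationarity condition $\jmat(\eq)^\trans \gvec(\sol) = 0$ at any base-game \acl{NE} $\eq$ with $\latemap(\eq) = \sol$, extended to all \eqref{eq:SVI} solutions by the density $\cl(\latemap(\controls)) = \latents$ and the full row rank of $\jmat$---gives $\|\gvec(\curr[\latent])\|^2 \leq 2 L_g^2\, \err_t$; and $(iii)$ the remainder bound on $R_t$ is absorbed into an overall $O(\step^2 \err_t)$ correction. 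Putting these together produces the one-step contraction
\begin{equation*}
\err_{t+1} \leq \bigl(1 - 2\strong\step + C_2 \step^2\bigr)\, \err_t
\end{equation*}
for an explicit $C_2 > 0$. Choosing $\step < 2\strong/C_2$ makes $\rho \defeq 1 - 2\strong\step + C_2 \step^2 \in (0,1)$, and unrolling the recursion delivers the claimed rate $\err(\curr) = \bigoh(\rho^\run)$.

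The main obstacle will be the clean handling of the Taylor remainder $R_t$: its $\step^2 \|\gvec\|^2$ dependence must be bootstrapped into a genuine $\step^2 \err_t$ bound, and this in turn requires the vanishing of $\gvec(\sol)$. Here \cref{asm:map} plays a double role---it keeps the preconditioner $\pmat$ and the pseudoinverse $\jmat^\pinv$ uniformly well-conditioned (so that $C_1$ above is finite), and, through the rank argument, forces every \eqref{eq:SVI} solution to be a zero of $\gvec$, thereby eliminating a $\step^2 \|\gvec(\sol)\|^2$ offset that would otherwise reduce geometric contraction to mere convergence to a neighborhood. Verifying that these ingredients combine into a bona fide geometric rate, rather than only linear convergence to a small ball around $\sol$, is the crux of the argument.
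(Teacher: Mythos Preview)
Your proposal is correct and follows essentially the same route as the paper. Both arguments Taylor-expand $\latemap$ around $\curr$ to obtain a latent-space recursion of the form $\next[\latent] \approx \curr[\latent] - \step\,\gvec(\curr[\latent])$ with a second-order remainder (the paper packages this as the template inequality, \cref{lem:template}), then use strong monotonicity for the first-order term and Lipschitz continuity together with the vanishing of the gradient at equilibrium for the second-order term to obtain a one-step contraction $\err_{t+1} \leq (1 - c_1\step + c_2\step^2)\,\err_t$.

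The only notable presentational difference is where the second-order term is controlled: you bound $\|\gvec(\curr[\latent])\|^2$ directly in the latent space via $\gvec(\sol) = 0$ and Lipschitz continuity of $\gvec$, whereas the paper bounds $\|\controlvecfield(\curr)\|^2$ in the control space via $\controlvecfield(\sol[\state]) = 0$ and Lipschitz continuity of $\controlvecfield$, and then passes back to latent-space distances using \cref{asm:map}. Your version is arguably cleaner, since it stays entirely in the latent layer and avoids the paper's step $\|\curr[\state] - \sol[\state]\| \leq \svmin^{-1}\|\curr[\latent] - \sol[\latent]\|$, which is not obviously justified when $\dControls > \dLatents$. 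You are also more explicit than the paper about \emph{why} the gradient vanishes at $\sol$ (the paper silently writes $\|\curr[\controlvecfield]\|^2 = \|\curr[\controlvecfield] - \sol[\controlvecfield]\|^2$); your rank argument via $\jmat^\trans$ injective is the right way to see this.
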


Importantly, up to logarithmic factors, the convergence rates of \crefrange{thm:PHGD-mon}{thm:PHGD-perfect} mirror the corresponding rates for learning in monotone games. 
This take-away is particularly important as it shows that,
\emph{when it exists, a hidden convex structure can be exploited to the greatest possible degree, without any loss of speed in convergence relative to standard, non-hidden convex problems.}

The proofs of \crefrange{thm:PHGD-mon}{thm:PHGD-perfect} are quite involved, so we defer the details to \cref{app:discrete}.
That said, to give an idea of the technical steps involved, we provide below (without proof) two lemmas that play a pivotal role in our analysis.
The first one hinges on a transformation of the problem's defining vector field $\controlvecfield(\control) = (\nabla_{\play} \loss_\play(\control))_{\play\in\players}$ which, coupled with the specific choice of preconditioner $\premat_\play(\control_\play) = \bracks{\jmat_{\play}(\control_{\play})^{\trans} \jmat_{\play}(\control_{\play})}^{\pinv}$ in \eqref{eq:PHGD} allows us to effectively couple the latent and control layers of the problem in a ``covariant'' manner:

\begin{restatable}{lemma}{lemCovariantPreconditioning}
\label{lem:CovariantPreconditioning}
Fix some $\ltest\in\latents$,
and consider the energy function $\lyap(\control;\ltest) = (1/2) \norm{\latemap(\control) - \ltest}^{2}$.
Then, for all $\control\in\controls$,
we have
\begin{equation}
\label{eq:CovariantPreconditioning}
\jmat(\control) \pmat(\control) \nabla_{\control}\lyap(\control;\ltest)
	= \latemap(\control) - \ltest.
\end{equation}
\end{restatable}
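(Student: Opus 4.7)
My plan is to prove \cref{lem:CovariantPreconditioning} by a direct computation in three steps: expand $\nabla_{\control}\lyap$ via the chain rule, substitute into the preconditioner $\pmat$, and then invoke a standard Moore--Penrose pseudoinverse identity to collapse the resulting block expression.

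First, I would apply the chain rule to $\lyap(\control;\ltest) = \tfrac{1}{2}\norm{\latemap(\control) - \ltest}^{2}$. Since the representation map $\latemap$ is block-separable, $\latemap(\control) = (\latemap_{\play}(\control_{\play}))_{\play\in\players}$, the control-gradient decomposes player-wise as
\begin{equation*}
\nabla_{\control_{\play}}\lyap(\control;\ltest)
    = \jmat_{\play}(\control_{\play})^{\trans}\bracks{\latemap_{\play}(\control_{\play}) - \ltest_{\play}}.
\end{equation*}
Stacking over $\play$ this reads $\nabla_{\control}\lyap(\control;\ltest) = \jmat(\control)^{\trans}(\latemap(\control) - \ltest)$, where $\jmat = \bigoplus_{\play}\jmat_{\play}$ is the block-diagonal Jacobian introduced in \cref{sec:prelims}.

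Next, by the definition of the preconditioner $\pmat_{\play}(\control_{\play}) = [\jmat_{\play}(\control_{\play})^{\trans}\jmat_{\play}(\control_{\play})]^{\pinv}$ and the corresponding block-diagonal $\pmat$, I can write
\begin{equation*}
\jmat(\control)\pmat(\control)\nabla_{\control}\lyap(\control;\ltest)
    = \jmat(\control)\bracks{\jmat(\control)^{\trans}\jmat(\control)}^{\pinv}\jmat(\control)^{\trans}(\latemap(\control) - \ltest).
\end{equation*}
The key algebraic step is then the Moore--Penrose identity $\jmat^{\pinv} = (\jmat^{\trans}\jmat)^{\pinv}\jmat^{\trans}$, valid for any real matrix. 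Hence the middle factor equals $\jmat \jmat^{\pinv}$, which is the orthogonal projector onto the range of $\jmat$.

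Finally, I would invoke \cref{def:hidden}: since $\latemap_{\play}$ has no critical points and $\dLatents_{\play}\le\dControls_{\play}$, each $\jmat_{\play}(\control_{\play})$ has full row rank $\dLatents_{\play}$, hence so does the block-diagonal $\jmat(\control)$. Thus $\jmat\jmat^{\pinv} = \eye_{\dLatents}$ on the latent space, and the displayed equality collapses to $\latemap(\control)-\ltest$, which is precisely \eqref{eq:CovariantPreconditioning}. There is no real obstacle here; the only subtlety worth being explicit about is verifying that the Moore--Penrose identity applies blockwise \textendash\ which is immediate because $\jmat$ and $\pmat$ are both block-diagonal with matching block structure, so the computation reduces to the player-wise identity $\jmat_{\play}\pmat_{\play}\jmat_{\play}^{\trans} = \eye_{\dLatents_{\play}}$.
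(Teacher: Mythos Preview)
Your proposal is correct and follows essentially the same approach as the paper: both compute $\nabla_{\control}\lyap = \jmat^{\trans}(\latemap(\control)-\ltest)$ via the chain rule and then collapse $\jmat(\jmat^{\trans}\jmat)^{\pinv}\jmat^{\trans}$ to the identity using the full-rank assumption on $\jmat$ and standard Moore--Penrose identities. The only cosmetic difference is that the paper carries out the computation player-by-player via the factorization $\pmat_{\play} = \jmat_{\play}^{\pinv}(\jmat_{\play}^{\pinv})^{\trans}$, whereas you work globally with the block-diagonal $\jmat$ and invoke $\jmat^{\pinv} = (\jmat^{\trans}\jmat)^{\pinv}\jmat^{\trans}$ directly.
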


Our second intermediate result builds on \cref{lem:CovariantPreconditioning} and provides a ``template inequality'' for the energy function $\lyap(\control;\ltest)$ in the spirit of \cite{BLM18,HIMM19,DMSV23}.

\begin{restatable}[Template inequality]{lemma}{lemTemplateInequality}
\label{lem:template}
Suppose that \cref{asm:loss,asm:map} hold.
Then, with notation as in \cref{lem:CovariantPreconditioning}, the sequence $\curr[\lyap] \defeq (1/2)\norm{\latemap(\curr) - \ltest}^{2}$, $\run=\running$, satisfies
\begin{equation}
\label{eq:PHGDEnergyBound}
\next[\lyap]
	\leq \curr[\lyap]
		- \curr[\step] \gvec(\curr[\latent])^{\trans} (\curr[\latent] - \ltest)
		+ \curr[\step] \curr[\rand]
		+ \curr[\step]^2 \curr[\randalt],
\end{equation}
where
$\curr[\latent] \defeq \latemap(\curr)$,
$\curr[\rand] \defeq (\jmat(\curr)\pmat(\curr)\curr[\signal] - \gvec(\curr[\latent]))^{\trans} (\curr[\latent] - \ltest)$
and
$\curr[\randalt]$ is a random error sequence with $\sup_{\run} \exof{\curr[\randalt]} < \infty$.
\end{restatable}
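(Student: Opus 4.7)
The overall strategy is a second-order Taylor expansion of the latent trajectory $\curr[\latent] = \latemap(\curr)$ along one step of \eqref{eq:PHGD}, combined with the key preconditioning identity $\jmat(\control)\pmat(\control)\jmat(\control)^{\trans} = \eye$ that underlies \cref{lem:CovariantPreconditioning}. The linear-order term in this expansion will produce the ``descent'' contribution $-\curr[\step]\gvec(\curr[\latent])^{\trans}(\curr[\latent]-\ltest)$ together with the zero-mean fluctuation $\curr[\rand]$; the quadratic-order leftovers will be gathered into the $\curr[\step]^{2}\curr[\randalt]$ term.

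Concretely, I would proceed as follows. First, by the Lipschitz-smoothness of $\latemap$ (\cref{def:hidden}), Taylor's theorem gives $\latemap(\next) = \latemap(\curr) + \jmat(\curr)(\next-\curr) + \curr[R]$ with $\norm{\curr[R]}\leq (L_{\latemap}/2)\norm{\next-\curr}^{2}$. Plugging in the update rule $\next - \curr = -\curr[\step]\pmat(\curr)\curr[\signal]$ and expanding the square in $\next[\lyap] = (1/2)\norm{\latemap(\next)-\ltest}^{2}$ yields
\[
\next[\lyap]
    = \curr[\lyap]
        - \curr[\step](\curr[\latent]-\ltest)^{\trans}\jmat(\curr)\pmat(\curr)\curr[\signal]
        + \inner{\curr[\latent]-\ltest}{\curr[R]}
        + \tfrac{1}{2}\norm{\jmat(\curr)(\next-\curr)+\curr[R]}^{2}.
\]
Next, I would decompose the linear term through the mean/fluctuation split $\jmat(\curr)\pmat(\curr)\curr[\signal] = \gvec(\curr[\latent]) + \bracks{\jmat(\curr)\pmat(\curr)\curr[\signal] - \gvec(\curr[\latent])}$. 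The crucial fact here, furnished by \cref{lem:CovariantPreconditioning} together with the chain-rule relation $\controlvecfield(\control) = \jmat(\control)^{\trans}\gvec(\latemap(\control))$ and the unbiasedness of $\curr[\signal]$, is that $\jmat(\curr)\pmat(\curr)$ applied to the \emph{mean} control-layer gradient returns the latent gradient $\gvec(\curr[\latent])$; the residual is precisely (up to sign) the quantity $\curr[\rand]$.

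Finally, I would collect the second-order leftovers --- namely $\inner{\curr[\latent]-\ltest}{\curr[R]}$ and $\tfrac{1}{2}\norm{\jmat(\curr)(\next-\curr)+\curr[R]}^{2}$ --- into $\curr[\step]^{2}\curr[\randalt]$. Each of these is $O(\curr[\step]^{2})$ because \cref{asm:map} gives $\norm{\pmat(\curr)}\leq 1/\svmin^{2}$ and $\norm{\jmat(\curr)}\leq\svmax$, while \cref{asm:loss} gives $\exof{\norm{\curr[\signal]}^{2}}\leq\sbound^{2}$, so $\exof{\norm{\next-\curr}^{2}}\leq \curr[\step]^{2}\sbound^{2}/\svmin^{4}$. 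The main obstacle I expect is the cross term $\inner{\curr[\latent]-\ltest}{\curr[R]}$: it is only $O(\curr[\step]^{2}\,\norm{\curr[\latent]-\ltest})$, so to obtain a \emph{uniform} bound $\sup_{\run}\exof{\curr[\randalt]}<\infty$ one needs to control the growth of $\norm{\curr[\latent]-\ltest}$. Either a Young's-inequality splitting (absorbing a small multiple of $\curr[\lyap]$ into an enlarged $\curr[\step]^{2}$ constant) or a standard a~priori boundedness argument for the iterates of \eqref{eq:PHGD} in hidden monotone games should close this gap and yield the stated template inequality.
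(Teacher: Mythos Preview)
Your proposal is correct and follows essentially the same route as the paper: a second-order Taylor expansion of $\latemap$ along one \eqref{eq:PHGD} step, the identity $\jmat\pmat\jmat^{\trans}=\eye$ to extract $\gvec(\curr[\latent])^{\trans}(\curr[\latent]-\ltest)$ from the first-order term, and a collection of quadratic leftovers into $\curr[\step]^{2}\curr[\randalt]$ bounded via \cref{asm:loss,asm:map}.

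One remark on the obstacle you flag. The paper does not resolve the cross term $\inner{\curr[\latent]-\ltest}{\curr[R]}$ via Young's inequality or an a~priori boundedness argument for the iterates; instead it simply invokes $D=\diam(\latemap)$ and bounds $\abs{\curr[\lstate][\play\iLatent]-\ltest_{\play\iLatent]}}$ by $D$ directly (consistent with the sigmoid/softmax-type representation maps in the running examples, whose images are bounded). So the ``standard a~priori boundedness'' you mention is in fact assumed at the level of the latent space rather than derived, which makes this step shorter than you anticipate.
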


This inequality plays a pivotal role in our analysis because it allows us to couple the restricted merit function \eqref{eq:gap-control} in the game's control space with the evolution of the algorithm's quasi-Lyapunov function in the game's latent space.
We provide the relevant details and calculations in \cref{app:discrete}.

\section{Experiments}
\label{sec:experiments}

This section demonstrates our method's applicability in a couple of different and insightful setups.
Technical details of those setups, as well as additional experimental results are deferred to the supplementary material.
We start with a regularized version of Matching Pennies zero-sum game where the players' strategies are controlled by two individual preconfigured differentiable \acp{MLP}.
Each \ac{MLP} acts as the player's representation map $\latemap_\play$, which for each input $\state_\play$ outputs a one-dimensional latent variable $\lstate_\play = \latemap_\play(\state_\play)$ guaranteed to lie in $\latentStates_\play \equiv [0, 1]$; the player's latent space.

\begin{figure}[t]
\includegraphics[width=0.4\textwidth,page=2]{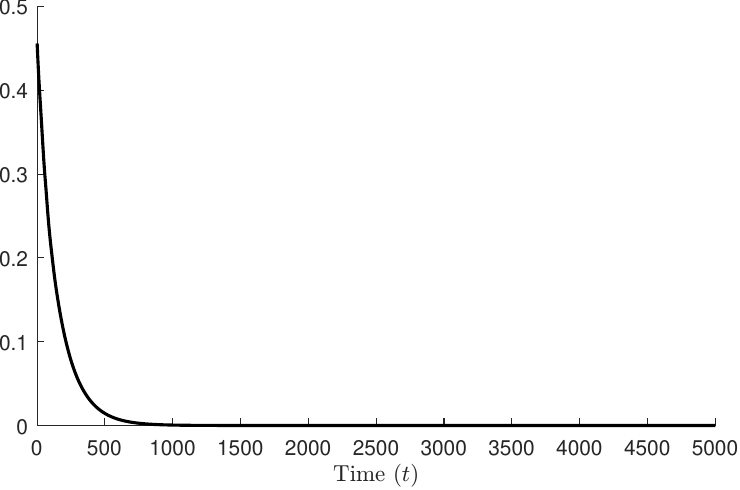}
\caption{A solution trajectory of \eqref{eq:PHGD} in a hidden Matching Pennies game over the sub-level sets of the energy function in \eqref{eq:energy}.}
\label{fig:PreconditionedGradientDescentInMatchingPennies}
\end{figure}


\Cref{fig:PreconditionedGradientDescentInMatchingPennies} illustrates the trajectory of~\eqref{eq:PHGD}, represented by the black curve. The algorithm employs a constant step-size of $0.01$ and is initialized at the arbitrary state $(1.25, 2.25)$ in the control variables' space.
The color map in the figure serves as a visual representation of the level sets associated with the proposed energy function \eqref{eq:energy}.
Notably, the trajectory of the algorithm intersects each of the energy function's level sets at most once, indicating its non-cycling behavior, which is an issue that shows up often in the equilibration task.
Due to the design of our hidden maps, the stabilization at the point $(0,0)$ corresponds to the $(\latemap_1(0), \latemap_2(0)) = (\frac{1}{2}, \frac{1}{2})$ \revise{the unique} equilibrium of the game. 

In the second, more complex, example, we consider a strongly-monotone regularized modification of an (atomic) El Farol Bar congestion game among $\nPlayers = 30$ players~\cite{whitehead2008farol,arthur1995complexity}. 
In this setup, we let the control space of each player $\play$ be multi-dimensional, namely, $\states_\play \equiv \R^{5}$, $\play \in \players$, and, as in the previous example, the representation map of each player is instantiated by some preconfigured differentiable \ac{MLP} whose output $\lstate_\play \defeq \latemap_\play(\state_\play)$ is guaranteed to lie in $[0, 1]$.
The \ac{MLP}'s output is going to be the probability with which player $\play$ visits the El Farol bar.
For the interested reader, further details, including technical specification of the \acp{MLP}, and loss functions of the \revise{games}, can be found in the supplementary material.
\begin{figure}[htbp]
\centering
    \begin{subfigure}[b]{0.45\textwidth}
        \includegraphics[width=0.9\textwidth, page=1]{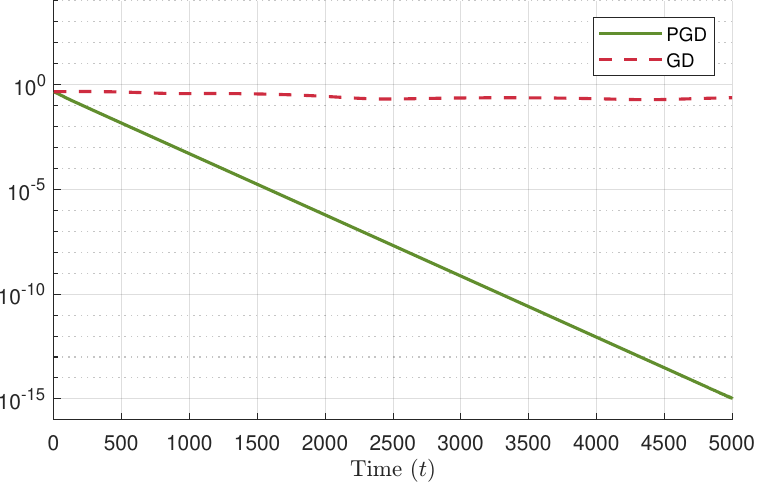}
    \end{subfigure}
    \begin{subfigure}[b]{0.45\textwidth}
        \includegraphics[width=0.9\textwidth, page=1]{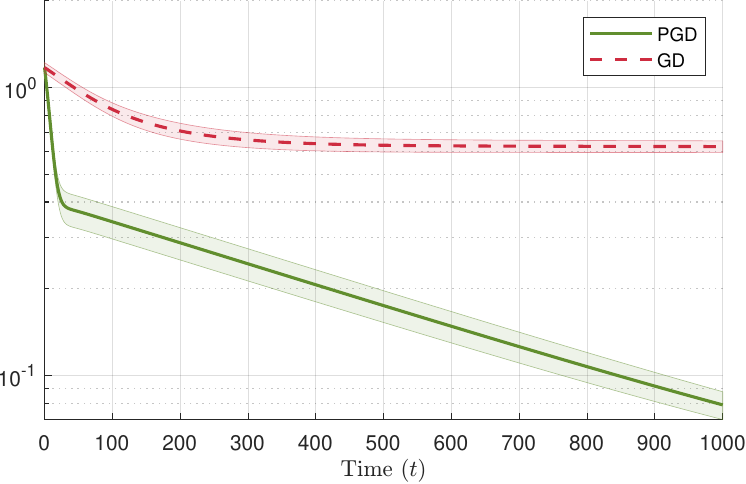}
    \end{subfigure}
    \caption{%
        The evolution of the $L^{2}$ error function $\err_{\latentStates}(\latemap(\state))$ of~\eqref{eq:PHGD} and \ac{GD} with constant step-size $0.01$ in the regularized games of Matching Pennies and El Farol Bar as depicted in a semi-logarithmic scale.
        In the Matching Pennies game~(left) we depict a single trajectory initialized at $(1.25, 2.25)$, while in the El Farol Bar game~(right) game we depict the mean and confidence bounds of $100$ random trajectories.
    }
\label{fig:ComparisonGDAndPGD}
\end{figure}
\cref{fig:ComparisonGDAndPGD} provides a comparative analysis of the \revise{performance between~\eqref{eq:PHGD}, and the standard \ac{GD}} method in the aforementioned two game scenarios.
In \cref{fig:ComparisonGDAndPGD}~(left) we explore the Matching Pennies game, \revise{where} \ac{GD} exhibits slightly erratic behavior.
Despite eventually converging to the game's equilibrium point, \revise{\ac{GD}'s convergence rate, in this case,} can be described as linear at best.
In \cref{fig:ComparisonGDAndPGD}~(right), we examine the El Farol Bar game.
Interestingly, in this highly complex setup, \ac{GD} fails to reach the equilibrium point entirely.
In contrast, \eqref{eq:PHGD} not only manages to converge in both of these setups, but it also consistently maintains an exponential rate of convergence.
This stark difference underscores the efficacy and robustness of \revise{our} algorithm.

\section{Conclusion}
\label{sec:discussion}

This paper proposed a new algorithmic framework with strong formal convergence guarantees in a general class non-convex games with a latent monotone structure.
Our algorithmic method \textendash\ which we call \acl{PHGD} \textendash\ relies on an appropriately chosen gradient preconditioning scheme akin to  natural gradient ideas. Our class of games combines the useful structure of monotone operators as well as the notion of latent/hidden variables that arise in neural networks and can thus model numerous AI applications.
Our results indicate the possibility of deep novel algorithmic ideas emerging at the intersection of game theory, non-convex optimization and ML and offers exciting directions for future work.

\section*{Acknowledgments}
\begingroup
\small
%
%
This research was supported in part by 
the National Research Foundation, Singapore and DSO National Laboratories under its AI Singapore Program (AISG Award No: AISG2-RP-2020-016),
grant PIESGP-AI-2020-01, AME Programmatic Fund (Grant No.A20H6b0151) from A*STAR and Provost’s Chair Professorship grant RGEPPV2101.
PM is also a member of the Archimedes Unit, Athena RC, Department of Mathematics, University of Athens,
and is grateful for financial support by
the French National Research Agency (ANR) in the framework of
the ``Investissements d'avenir'' program (ANR-15-IDEX-02),
the LabEx PERSYVAL (ANR-11-LABX-0025-01),
MIAI@Grenoble Alpes (ANR-19-P3IA-0003),
and
project MIS 5154714 of the National Recovery and Resilience Plan Greece 2.0 funded by the European Union under the NextGenerationEU Program.
\endgroup

\appendix
\numberwithin{equation}{section}		
\numberwithin{lemma}{section}		
\numberwithin{proposition}{section}		
\numberwithin{theorem}{section}		
\numberwithin{corollary}{section}		

\section{Omitted proofs from \cref{sec:PHGD}}
\label{app:continuous}

In this first appendix, we provide the technical proofs of our results for the continuous-time dynamics \eqref{eq:PHGF}, namely \cref{lem:PHGF-Lyap,prop:PHGF-strict}.
For convenience, we restate the relevant results as needed.

\addcontentsline{toc}{subsection}{Proof of \cref{lem:PHGF-Lyap}}
\Lyapunov*
\begin{proof}
    Observe that if we expand the dynamical system of equations in \eqref{eq:PHGF} with respect to each individual coordinate, we have that for each player $\play \in \players$, for all profiles $\control \in \controls$, and for all coordinates $\iControl \in \setof{1, \ldots, \nStates_{\play}}$:
    \begin{equation}
        \dot \control_{\play\iControl}
            = \sum_{\jControl = 1}^{\nStates_{\play}} \pmat_{\play\iControl\jControl}(\control_{\play}) \frac{\pd \loss_{\play}(\control)}{\pd \control_{\play\jControl}}.
    \end{equation}
    Now, since $\game$ has a hidden monotone structure, and due to the decomposition $\loss_{\play}(\control) = \mloss_{\play}(\latent)$, we can expand the above as
    \begin{equation}
        \dot \control_{\play\iControl}
            = \sum_{\jControl = 1}^{\nStates_{\play}} \pmat_{\play\iControl\jControl}(\control_{\play}) \sum_{\jLatent = 1}^{\dLatents_{\play}} \frac{\pd \mloss_{\play}(\latent)}{\pd \latent_{\play\jLatent}} \frac{\pd \latent_{\play\jLatent}}{\pd \control_{\play\jControl}}.
    \end{equation}
    Furthermore, if we expand the \ac{LHS} of \eqref{eq:PHGF-Lyap} we also get
    \begin{equation}
        \dot \lyap(\control)
            = \sum_{\play = 1}^{\nPlayers} \sum_{\iControl = 1}^{\nStates_{\play}} \frac{\pd \lyap(\control)}{\pd \control_{\play\iControl}} \dot \control_{\play\iControl},
    \end{equation}
    where each of the summands of the above expression can also be expanded further to
    \begin{equation}
        \frac{\pd \lyap(\control)}{\pd \control_{\play\iControl}} \dot \control_{\play\iControl} 
            = - \sum_{\iLatent = 1}^{\dLatents_{\play}} \parens{\latent_{\play\iLatent} - \sol[\latent_{\play\iLatent}]} \frac{\pd \latent_{\play\iLatent}}{\pd \control_{\play\iControl}} \dot \control_{\play\iControl}.
    \end{equation}
    Putting everything together \eqref{eq:PHGF-Lyap} follows by some trivial substitutions.
\end{proof}

\addcontentsline{toc}{subsection}{Proof of \cref{prop:PHGF-strict}}
\PHGF*
\begin{proof}
    First, observe that for any player $\play \in \players$, and $\control_{\play} \in \controls_{\play}$, since the player's representation map $\latemap_{\play}$ is faithful, \ie $\jmat_\play(\control_{\play})$ is maximal rank, it holds:
    \begin{equation}
    \begin{split}
        \jmat_\play(\control_{\play}) \pmat_{\play}(\control_{\play}) \jmat_\play(\control_{\play})^{\trans}
            &= \jmat_\play(\control_{\play}) \jmat_\play(\control_{\play})^\pinv \bracks[\big]{\jmat_\play(\control_{\play})^\pinv}^{\trans} \jmat_\play(\control_{\play})^{\trans} \\
            &= \bracks[\big]{\jmat_\play(\control_{\play}) \jmat_\play(\control_{\play})^\pinv}^{\trans} \\
            &= \eye.
    \end{split}
    \end{equation}
    That is, $\sum_{\iControl = 1}^{\nStates_{\play}} \sum_{\jControl = 1}^{\nStates_{\play}} \frac{\pd \latent_{\play\iLatent}}{\pd \control_{\play\iControl}} \frac{\pd \latent_{\play\jLatent}}{\pd \control{\play\jControl}} \pmat_{\play\iControl\jControl}(\control_{\play}) = \kron{\iLatent}{\jLatent}$ for all $\iLatent, \jLatent \in \setof{1, \ldots, \dLatents_{\play}}$, where $\kron{\iLatent}{\jLatent}$ is the Kronecker delta.

    Now, by \cref{lem:PHGF-Lyap} we have that:
    \begin{equation}
    \begin{split}
        \dot \lyap(\control)
            &= - \sum_{\play = 1}^{\nPlayers} \sum_{\iLatent = 1}^{\dLatents_{\play}} \sum_{\jLatent = 1}^{\dLatents_{\play}} \kron{\iLatent}{\jLatent} \parens{\latent_{\play\iLatent} - \sol[\latent_{\play\iLatent}]} \frac{\pd \mloss_{\play}(\latent)}{\pd \latent_{\play\jLatent}} \\
            &= - \sum_{\play = 1}^{\nPlayers} \sum_{\iLatent = 1}^{\dLatents_{\play}} \parens{\latent_{\play\iLatent} - \sol[\latent_{\play\iLatent}]} \frac{\pd \mloss_{\play}(\latent)}{\pd \latent_{\play\iLatent}} \\
            &= - \braket[\big]{\vecfield(\latent)}{\latent - \sol[\latent]}.
    \end{split}
    \end{equation}
    which is negative, since $\sol[\latent]$ is an optimizer of $\mloss$, \ie it satisfies the \eqref{eq:MVI} due to monotonicity of $\vecfield$.
    Additionally, since $\vecfield$ is strictly monotone, $\sol[\latent]$ is the unique optimizer of $\mloss$, and, therefore, the above condition holds with equality if and only if $\latent = \sol[\latent]$.
\end{proof}

\section{Auxiliary results from \cref{sec:results}}
\label{app:discrete}

\subsection{Convergence metrics and merit functions}

In this appendix, we provide the technical scaffolding required for the analysis of\eqref{eq:PHGD}, namely \cref{lem:gap-latent,lem:CovariantPreconditioning,lem:template} in \cref{sec:discrete}.
As before, we restate the relevant results as needed.

\lemGapLatent*

\begin{proof}[Proof of \cref{lem:gap-latent}]
Let $\ltest\in \points$ be a solution of \eqref{eq:SVI}/\eqref{eq:MVI} so $\braket{\gvec(\ltest)}{\point-\ltest}
\geq 0$ for all $\point\in \points$.
Then, by the monotonicity of $\gvec$, we get:
\begin{align}
\braket{\gvec(\point)}{\ltest-\point}
	&\leq \braket{\gvec(\point)-\gvec(\ltest)}{\ltest-\point}
		+\braket{\gvec(\ltest)}{\ltest-\point}
	\notag\\
	&= -\braket{\gvec(\ltest)-\gvec(\point)}{\ltest-\point}
		-\braket{\gvec(\ltest)}{\point-\ltest}
	\leq 0,
\end{align}
so $\gap_{\ltests}(\ltest) \leq 0$.
On the other hand, if $\ltest\in\ltests$, we also get $\gap(\ltest) \geq \braket{\gvec(\ltest)}{\ltest - \ltest} = 0$, so we conclude that $\gap_{\ltests}(\ltest) = 0$.

For the converse statement, assume that $\gap_{\ltests}(\ltest) = 0$ for some $\ltest\in\ltests$ and suppose that $
\ltests$ contains a neighborhood of $\ltest$ in $\points$.
We then claim that $\ltest$ is a solution of \eqref{eq:MVI} over $\ltests$, \ie
\begin{equation}
\label{eq:localMVI}
\braket{\gvec(\point)}{\point-\ltest}
	\geq 0
	\quad
	\text{for all $\point\in \ltests$}.
\end{equation}
To see this, assume to the contrary that there exists some $\point_{1}\in \ltests$ such that
\begin{equation}
\braket{\gvec(\point_{1})}{\point_{1}-\ltest}
	< 0
\end{equation}
so, in turn, we get
\begin{equation}
0
	= \gap_{\ltests}(\ltest)
	\geq \braket{\gvec(\point_{1})}{\ltest-\point_{1}}
	> 0,
\end{equation}
a contradiction.

With this intermediate ``local'' result in hand, we are now in a position to prove that $\ltest$ solves \eqref{eq:SVI}.
Indeed, if we suppose to the contrary that there exists some $z_{1}\in \points$ such that $\braket{\gvec(\ltest)}{z_{1}-\ltest}< 0$, then, by the continuity of $\gvec$, there exists a neighborhood $\nhd'$ of $\ltest$ in $\points$ such that
\begin{equation}
\braket{\gvec(\point)}{z_{1}-\point}
	< 0
	\quad
	\text{for all $\point \in \nhd'$}.
\end{equation} 
Hence, assuming without loss of generality that $\nhd'\subset \nhd \subset\ltests$ (the latter assumption due to the assumption that $\ltests$ contains a neighborhood of $\ltest$),
and taking $\lambda>0$ sufficiently small so that $\point=\ltest+\lambda(z_{1}-\ltest)\in \nhd'$, we get that
$\braket{\gvec(\point)}{\point-\ltest}=\lambda\braket{\gvec(\point)}{z_{1}-\ltest}<0$, in contradiction to \eqref{eq:localMVI}.
We thus conclude that $\ltest$ is a solution of \eqref{eq:SVI} \textendash\ and hence, by monotonicity, also of \eqref{eq:MVI}.
\end{proof}

For intuition, we discuss below some other merit functions that could be considered as valid convergence metrics.
In this regard, the first thing to note is that the definition of $\gap(\ctest)$ effectively goes through the game's \emph{latent space}, so it is natural to ask whether a similar merit function can be defined directly on the game's \emph{control space}.
To do so, it will be more convenient to start with a linearized variant of $\gap_{\latents}$ defined over the cone of tangent directions at $\ltest$, namely the so-called ``tangent residual gap''
\begin{equation}
\label{eq:gap-latent}
\tgap_{\latents}(\ltest)
	\defeq - \min\nolimits_{\tvec\in \tcone_{\latents}(\ltest),\norm{\tvec} \leq 1}
		\braket{\vecfield(\ltest)}{\tvec}
\end{equation}
\ie the maximum ``ascent'' step $\braket{-\vecfield(\latent)}{\tvec}$ over all admissible displacement directions $\tvec$ from $\ltest$.%
\footnote{In the above, $\tcone_{\latents}(\ltest)$ denotes the tangent cone to $\latents$ at $\ltest$, that is, the closure of the set of all rays emanating from $\ltest$ and intersecting $\latents$ in at least one other point.}
Just like $\gap_{\ltests}$, this linearized merit function correctly identifies solutions of \eqref{eq:SVI}/\eqref{eq:MVI}:

\begin{lemma}
\label{lem:gap-TC}
For all $\ltest\in\latents$, we have $\tgap_{\latents}(\ltest)\geq 0 $ with equality if and only if $\ltest$ solves \eqref{eq:SVI}/\eqref{eq:MVI}.
\end{lemma}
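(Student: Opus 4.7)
Plan of proof for \cref{lem:gap-TC}. The strategy is to interpret $\tgap_{\latents}(\ltest)$ geometrically as a distance-like quantity measuring how far $\vecfield(\ltest)$ is from lying in the polar of the tangent cone $\tcone_{\latents}(\ltest)$, and then identify that polar condition with \eqref{eq:SVI}. Throughout I use the elementary fact that $0 \in \tcone_{\latents}(\ltest)$ (so taking $\tvec = 0$ is always feasible) and that $\tcone_{\latents}(\ltest)$ is a (closed) cone, so one can rescale any $\tvec$ to lie in the closed unit ball without changing the sign of $\braket{\vecfield(\ltest)}{\tvec}$.

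\emph{Nonnegativity.} Since $\tvec = 0$ is admissible in the defining minimization of $\tgap_{\latents}(\ltest)$, the minimum value of $\braket{\vecfield(\ltest)}{\tvec}$ is at most $0$, so $\tgap_{\latents}(\ltest) \geq 0$.

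\emph{Equivalence with \eqref{eq:SVI}.} I first claim that $\tgap_{\latents}(\ltest) = 0$ holds if and only if $\braket{\vecfield(\ltest)}{\tvec} \geq 0$ for every $\tvec \in \tcone_{\latents}(\ltest)$. The ``if'' direction is immediate: restricting to the unit ball shows that the minimum in \eqref{eq:gap-latent} is at least $0$, and combined with nonnegativity we get equality. For ``only if'', suppose $\tgap_{\latents}(\ltest) = 0$. Then $\braket{\vecfield(\ltest)}{\tvec} \geq 0$ for every $\tvec \in \tcone_{\latents}(\ltest)$ with $\norm{\tvec} \leq 1$; for an arbitrary nonzero $\tvec \in \tcone_{\latents}(\ltest)$, homogeneity of the cone yields $\tvec/\norm{\tvec} \in \tcone_{\latents}(\ltest)$, and the inequality extends by rescaling.

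Having reduced the problem to the polar-cone characterization, I close the loop with \eqref{eq:SVI}. For every $\latent \in \latents$ the ray $\lambda(\latent - \ltest)$, $\lambda \geq 0$, meets $\latents$ at $\latent$, so $\latent - \ltest \in \tcone_{\latents}(\ltest)$ by definition of the tangent cone. Hence if $\vecfield(\ltest) \in \dcone_{\latents}(\ltest)$ (the polar of the tangent cone), then in particular $\braket{\vecfield(\ltest)}{\latent - \ltest} \geq 0$ for every $\latent \in \latents$, which is \eqref{eq:SVI}. Conversely, if $\ltest$ solves \eqref{eq:SVI}, then $\braket{\vecfield(\ltest)}{\tvec} \geq 0$ for $\tvec$ in the set $\setof{\lambda(\latent - \ltest) : \lambda \geq 0, \latent \in \latents}$, and passing to its closure extends the inequality to all of $\tcone_{\latents}(\ltest)$ by continuity. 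Combining these two implications with the claim above gives $\tgap_{\latents}(\ltest) = 0 \iff \ltest \text{ solves } \eqref{eq:SVI}$, and the equivalence with \eqref{eq:MVI} is the content of the argument already used in the proof of \cref{lem:gap-latent} under monotonicity.

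The one subtle point (and thus the main obstacle) is the rigorous identification of the tangent cone with the closure of the cone generated by $\latents - \ltest$: for a closed convex set this identification is standard, but I would state it explicitly to avoid confusion and to justify the continuity argument that extends \eqref{eq:SVI} from feasible displacements $\latent - \ltest$ to all tangent vectors.
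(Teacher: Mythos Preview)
Your proposal is correct and follows essentially the same approach as the paper: both argue nonnegativity via $0\in\tcone_{\latents}(\ltest)$, reduce $\tgap_{\latents}(\ltest)=0$ to the polar-cone condition $\braket{\vecfield(\ltest)}{\tvec}\geq 0$ for all $\tvec\in\tcone_{\latents}(\ltest)$ via the cone's homogeneity, and then identify that condition with \eqref{eq:SVI} through the feasible-direction description of the tangent cone. Your treatment is in fact slightly more careful than the paper's, since you explicitly flag the closure step needed to pass from $\latents-\ltest$ to $\tcone_{\latents}(\ltest)$.
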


\begin{proof}
    Let $\ltest \in \latents$ be some arbitrary profile of latent variables.
    By definition, we have that $\tgap_{\latents}(\ltest) = - \min\nolimits_{\tvec \in \tcone_{\latents}(\ltest), \norm{\tvec} \leq 1}\braket{\gvec(\ltest)}{\tvec}$, where $\tcone_{\latents}(\ltest)$ is the tangent cone to $\latents$ at $\ltest$.
    Observe that since $\tcone_{\latents}(\ltest)$ is a cone, $0 \in \tcone_{\latents}(\ltest)$,
    so $\tgap_{\latents}(\ltest) \ge - \braket{\gvec(\ltest)}{0} = 0$.
    Now assume that $\tgap_{\latents}(\ltest) = 0$.
    Then, 
    the following are equivalent:
    \begin{equation}
    \begin{split}
        \tgap_{\latents}(\ltest) = 0
            &\iff - \min\nolimits_{\tvec \in \tcone_{\latents}(\ltest), \norm{\tvec} \leq 1}\braket{\gvec(\ltest)}{\tvec} = 0 \\
            &\iff \braket{\gvec(\ltest)}{\tvec}\geq 0
                \quad\text{for all $\tvec \in \tcone_{\latents}(\ltest)$, $\norm{\tvec} \leq 1$} \\
            &\iff \braket{\gvec(\ltest)}{\tvec}\geq 0
                \quad\text{for all $\tvec \in \tcone_{\latents}(\ltest)$}, 
    \end{split}
    \end{equation}
    \revise{where the last equivalence follows because $\tcone_{\latents}(\ltest)$ is a cone.} Rearranging the terms, we may equivalently write $\braket{- \gvec(\ltest)}{\tvec} \le 0$ for all $\tvec \in \tcone_{\latents}(\ltest)$.
    Notice that, by the definition of the tangent cone, the latter is equivalent to $\braket{- \gvec(\ltest)}{\ltest - \tvec} \le 0$ for all $\tvec \in \latents$.
    Therefore, $\tgap_{\latents}(\ltest) = 0$ if and only if $\ltest$ satisfies \eqref{eq:SVI}.
\end{proof}

Now, if $\eq$ is a \acl{NE} of the base game, the latent variable configuration $\sol = \latemap(\eq)$ solves \eqref{eq:SVI}/\eqref{eq:MVI}, so $\tgap_{\latents}$ is a valid equilibrium convergence metric for $\game$.
However, since $\tgap_{\latents}$ is still defined on the game's latent space, it does not give a straightforward way of definining the quality of a candidate solution directly on the game's control space.
To that end, one could consider the gap function
\begin{equation}
\label{eq:gap-control}
\tgap_{\controls}(\ctest)
	\defeq - \min\nolimits_{\cvec\in\R^{\nStates},\norm{\cvec} \leq 1} \braket{\controlvecfield(\ctest)}{\cvec}
	= \dnorm{\controlvecfield(\ctest)}
\end{equation}
where $\controlvecfield(\control) \defeq (\nabla_{\play} \loss_\play(\control))_{\play\in\players}$ collects the players' loss gradients relative to their control variables, and $\dnorm{\cdot}$ denotes its dual norm (so $\tgap_{\controls}(\ctest)$ is evaluated \emph{directly} on the game's control space).
Of course, as control variables are mapped to latent variables, $\latemap$ introduces a certaint distortion (due to nonlinearities).
This distortion can be quantified by the following lemma (see also \cref{fig:distortion} above):

\begin{figure}[t]
\scalebox{1}{\tikzset{every picture/.style={line width=0.75pt}} 

\begin{tikzpicture}[x=0.75pt,y=0.75pt,yscale=-1,xscale=1]

\draw  [fill={rgb, 255:red, 0; green, 0; blue, 0 }  ,fill opacity=0.09 ] (116,127) .. controls (116,106.01) and (133.01,89) .. (154,89) .. controls (174.99,89) and (192,106.01) .. (192,127) .. controls (192,147.99) and (174.99,165) .. (154,165) .. controls (133.01,165) and (116,147.99) .. (116,127) -- cycle ;
\draw    (166,101) .. controls (204.61,70.31) and (218.72,70.98) .. (262.66,100.11) ;
\draw [shift={(264,101)}, rotate = 213.69] [color={rgb, 255:red, 0; green, 0; blue, 0 }  ][line width=0.75]    (10.93,-3.29) .. controls (6.95,-1.4) and (3.31,-0.3) .. (0,0) .. controls (3.31,0.3) and (6.95,1.4) .. (10.93,3.29)   ;
\draw    (154,127) -- (154,91) ;
\draw [shift={(154,89)}, rotate = 90] [color={rgb, 255:red, 0; green, 0; blue, 0 }  ][line width=0.75]    (10.93,-3.29) .. controls (6.95,-1.4) and (3.31,-0.3) .. (0,0) .. controls (3.31,0.3) and (6.95,1.4) .. (10.93,3.29)   ;
\draw [shift={(154,127)}, rotate = 270] [color={rgb, 255:red, 0; green, 0; blue, 0 }  ][fill={rgb, 255:red, 0; green, 0; blue, 0 }  ][line width=0.75]      (0, 0) circle [x radius= 3.35, y radius= 3.35]   ;
\draw  [color={rgb, 255:red, 65; green, 117; blue, 5 }  ,draw opacity=1 ][fill={rgb, 255:red, 184; green, 233; blue, 134 }  ,fill opacity=0.27 ][line width=1.5]  (207.19,126.33) .. controls (207.19,86.93) and (239.13,55) .. (278.52,55) .. controls (317.92,55) and (349.85,86.93) .. (349.85,126.33) .. controls (349.85,165.72) and (317.92,197.66) .. (278.52,197.66) .. controls (239.13,197.66) and (207.19,165.72) .. (207.19,126.33) -- cycle ;
\draw  [color={rgb, 255:red, 74; green, 144; blue, 226 }  ,draw opacity=1 ][fill={rgb, 255:red, 74; green, 144; blue, 226 }  ,fill opacity=0.23 ][line width=1.5]  (219.35,164.02) .. controls (212.71,153.6) and (233.82,128.27) .. (266.49,107.45) .. controls (299.17,86.63) and (331.05,78.21) .. (337.69,88.63) .. controls (344.34,99.06) and (323.23,124.39) .. (290.55,145.21) .. controls (257.87,166.03) and (226,174.45) .. (219.35,164.02) -- cycle ;
\draw  [color={rgb, 255:red, 208; green, 2; blue, 27 }  ,draw opacity=1 ][fill={rgb, 255:red, 208; green, 2; blue, 27 }  ,fill opacity=0.06 ][line width=1.5]  (259.27,141.6) .. controls (251.63,131.97) and (253.24,117.96) .. (262.88,110.32) .. controls (272.51,102.68) and (286.52,104.29) .. (294.16,113.93) .. controls (301.8,123.56) and (300.19,137.57) .. (290.55,145.21) .. controls (280.92,152.85) and (266.91,151.24) .. (259.27,141.6) -- cycle ;
\draw [color={rgb, 255:red, 74; green, 144; blue, 226 }  ,draw opacity=1 ]   (276.71,127.76) -- (289.31,143.64) ;
\draw [shift={(290.55,145.21)}, rotate = 231.58] [color={rgb, 255:red, 74; green, 144; blue, 226 }  ,draw opacity=1 ][line width=0.75]    (10.93,-3.29) .. controls (6.95,-1.4) and (3.31,-0.3) .. (0,0) .. controls (3.31,0.3) and (6.95,1.4) .. (10.93,3.29)   ;
\draw [color={rgb, 255:red, 74; green, 144; blue, 226 }  ,draw opacity=1 ]   (276.71,127.76) -- (336.01,89.71) ;
\draw [shift={(337.69,88.63)}, rotate = 147.31] [color={rgb, 255:red, 74; green, 144; blue, 226 }  ,draw opacity=1 ][line width=0.75]    (10.93,-3.29) .. controls (6.95,-1.4) and (3.31,-0.3) .. (0,0) .. controls (3.31,0.3) and (6.95,1.4) .. (10.93,3.29)   ;
\draw [shift={(276.71,127.76)}, rotate = 327.31] [color={rgb, 255:red, 74; green, 144; blue, 226 }  ,draw opacity=1 ][fill={rgb, 255:red, 74; green, 144; blue, 226 }  ,fill opacity=1 ][line width=0.75]      (0, 0) circle [x radius= 3.35, y radius= 3.35]   ;
\draw [color={rgb, 255:red, 65; green, 117; blue, 5 }  ,draw opacity=1 ]   (301.87,184.11) .. controls (312.26,203.38) and (324.86,203.1) .. (338.95,196.84) ;
\draw [shift={(300.42,181.26)}, rotate = 64.29] [fill={rgb, 255:red, 65; green, 117; blue, 5 }  ,fill opacity=1 ][line width=0.08]  [draw opacity=0] (10.72,-5.15) -- (0,0) -- (10.72,5.15) -- (7.12,0) -- cycle    ;
\draw [color={rgb, 255:red, 208; green, 2; blue, 27 }  ,draw opacity=1 ]   (228.67,195.43) .. controls (255.46,183.21) and (260.8,174.33) .. (271.69,141.22) ;
\draw [shift={(272.54,138.63)}, rotate = 108.03] [fill={rgb, 255:red, 208; green, 2; blue, 27 }  ,fill opacity=1 ][line width=0.08]  [draw opacity=0] (10.72,-5.15) -- (0,0) -- (10.72,5.15) -- (7.12,0) -- cycle    ;
\draw [color={rgb, 255:red, 74; green, 144; blue, 226 }  ,draw opacity=1 ]   (280.74,132.89) .. controls (313.53,108.29) and (287.3,177.98) .. (320.09,153.38) ;
\draw [color={rgb, 255:red, 74; green, 144; blue, 226 }  ,draw opacity=1 ]   (293.04,84.52) .. controls (287.3,97.63) and (290.58,103.37) .. (307.2,108.2) ;

\draw (201,55.4) node [anchor=north west][inner sep=0.75pt]  [font=\footnotesize]  {$\jac[\state]$};
\draw (305.92,137.32) node [anchor=north west][inner sep=0.75pt]  [font=\footnotesize,color={rgb, 255:red, 74; green, 144; blue, 226 }  ,opacity=1 ]  {$\sigma _{\min}( \state )$};
\draw (275.92,71.41) node [anchor=north west][inner sep=0.75pt]  [font=\footnotesize,color={rgb, 255:red, 74; green, 144; blue, 226 }  ,opacity=1 ]  {$\sigma _{\max}( \state )$};
\draw (165.2,187.13) node [anchor=north west][inner sep=0.75pt]  [font=\footnotesize,color={rgb, 255:red, 208; green, 2; blue, 27 }  ,opacity=1 ]  {$\norm{\tvec} \leq \sigma _{\min}$};
\draw (336.78,178.54) node [anchor=north west][inner sep=0.75pt]  [font=\footnotesize,color={rgb, 255:red, 65; green, 117; blue, 5 }  ,opacity=1 ]  {$\norm{\tvec} \leq \sigma _{\max}$};
\draw (135.2,141.13) node [anchor=north west][inner sep=0.75pt]  [font=\footnotesize,color={rgb, 255:red, 0; green, 0; blue, 0 }  ,opacity=1 ]  {$\norm{\cvec} \leq 1$};

\end{tikzpicture}}
\caption{The distortion of a ball by $\jmat(\control)$.}
\label{fig:distortion}
\end{figure}

\begin{restatable}{lemma}{lemGapEquivalence}
\label{lem:gap-control}
Let $\ltest = \latemap(\ctest)$ for some $\ctest\in\controls$.
We then have
\begin{equation}
\label{eq:gap-equiv}
\svmin\tgap_{\latents}(\ltest)
	\leq \tgap_{\controls}(\ctest)
	\leq \sigma_{\max} \tgap_{\latents}(\ltest)
\end{equation}
    In particular, $\tgap_{\controls}(\ctest) \geq 0$ for all $\ctest\in\controls$, with equality if and only if $\ctest$ is a \acl{NE} of $\fingame$.
\end{restatable}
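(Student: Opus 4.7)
The plan is to push the control-space merit function through the Jacobian of the representation map and then control the resulting distortion using the singular-value bounds of \cref{asm:map}. The starting observation is that by the chain rule $\controlvecfield_{\play}(\ctest) = \jmat_{\play}(\ctest_{\play})^{\trans}\gvec_{\play}(\ltest)$, so collectively $\controlvecfield(\ctest) = \jmat(\ctest)^{\trans}\gvec(\ltest)$ and
\begin{equation*}
\tgap_{\controls}(\ctest) = \dnorm{\controlvecfield(\ctest)} = \max_{\cvec\in\R^{\nStates},\,\norm{\cvec}\leq 1} \braket{-\gvec(\ltest)}{\jmat(\ctest)\cvec}.
\end{equation*}
Thus everything reduces to understanding how the image $\jmat(\ctest)(B_{\controls})$ of the unit control ball sits inside the set $\tcone_{\latents}(\ltest)\cap B_{\latents}$ that appears in the definition of $\tgap_{\latents}(\ltest)$.

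First I would establish three geometric inclusions: (i) $\jmat(\ctest)(B_{\controls})\subseteq \svmax B_{\latents}$, which is immediate from $\eig(\jmat\jmat^{\trans})\leq \svmax^{2}$; (ii) $\svmin B_{\latents}\subseteq \jmat(\ctest)(B_{\controls})$, proved by explicitly exhibiting, for any $\tvec$ with $\norm{\tvec}\leq\svmin$, the preimage $\cvec = \jmat(\ctest)^{\trans}(\jmat(\ctest)\jmat(\ctest)^{\trans})^{-1}\tvec$, whose squared norm equals $\tvec^{\trans}(\jmat\jmat^{\trans})^{-1}\tvec\leq \norm{\tvec}^{2}/\svmin^{2}\leq 1$ by \cref{asm:map} (and the fact that $\jmat\jmat^{\trans}$ is invertible since $\latemap$ has no critical points); and (iii) $\jmat(\ctest)(B_{\controls})\subseteq \tcone_{\latents}(\ltest)$, because for each $\cvec\in B_{\controls}$ the curve $t\mapsto \latemap(\ctest+t\cvec)$ stays inside $\latents$ and has velocity $\jmat(\ctest)\cvec$ at $t=0$.

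Combining (i) and (iii), every $\tvec=\jmat(\ctest)\cvec$ with $\cvec\in B_{\controls}$ lies in $\tcone_{\latents}(\ltest)$ with $\norm{\tvec}\leq\svmax$, so a change of variables in the displayed maximum above yields $\tgap_{\controls}(\ctest)\leq \svmax\tgap_{\latents}(\ltest)$. For the reverse direction, pick an optimizer $\tvec^{*}\in\tcone_{\latents}(\ltest)\cap B_{\latents}$ of the maximization defining $\tgap_{\latents}(\ltest)$ (which exists by compactness, or in case of a zero value there is nothing to prove). By (ii), $\svmin\tvec^{*}\in\jmat(\ctest)(B_{\controls})$, so there is $\cvec\in B_{\controls}$ with $\jmat(\ctest)\cvec=\svmin\tvec^{*}$, and
\begin{equation*}
\tgap_{\controls}(\ctest)\geq \braket{-\gvec(\ltest)}{\jmat(\ctest)\cvec}=\svmin\braket{-\gvec(\ltest)}{\tvec^{*}}=\svmin\tgap_{\latents}(\ltest).
\end{equation*}

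For the \emph{in particular} claim, nonnegativity is immediate from $\tgap_{\controls}(\ctest)=\dnorm{\controlvecfield(\ctest)}$. Since $\svmin>0$, the sandwich inequality gives $\tgap_{\controls}(\ctest)=0$ iff $\tgap_{\latents}(\ltest)=0$; by \cref{lem:gap-TC} this is equivalent to $\ltest=\latemap(\ctest)$ solving \eqref{eq:SVI}, which by the \acl{NE} $\Leftrightarrow$ \eqref{eq:SVI} correspondence recalled in \cref{sec:prelims} is equivalent to $\ctest$ being a \acl{NE} of $\fingame$. The main conceptual obstacle is inclusion (iii): without ensuring that $\jmat(\ctest)(B_{\controls})$ lies inside $\tcone_{\latents}(\ltest)$, the upper bound would only relate $\tgap_{\controls}$ to an unconstrained gap that could be strictly larger than $\tgap_{\latents}$ at boundary latent profiles; this is precisely where the assumption that $\latemap_{\play}$ has no critical points (and maps into $\latents$) buys us both the pseudoinverse in (ii) and the tangent-cone containment in (iii).
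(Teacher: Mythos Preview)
Your proposal is correct and follows essentially the same route as the paper: write $\controlvecfield(\ctest)=\jmat(\ctest)^{\trans}\gvec(\ltest)$ via the chain rule, then use the singular-value sandwich $\svmin B_{\latents}\subseteq\jmat(\ctest)(B_{\controls})\subseteq\svmax B_{\latents}$ from \cref{asm:map} to bound $\tgap_{\controls}(\ctest)$ between $\svmin\tgap_{\latents}(\ltest)$ and $\svmax\tgap_{\latents}(\ltest)$. The only noteworthy difference is in how the tangent-cone constraint is handled: the paper disposes of it by observing that $\ltest=\latemap(\ctest)$ always lies in the interior of $\latents$ (since $\latemap$ is a submersion from the open set $\controls$), so that $\tcone_{\latents}(\ltest)=\R^{\dLatents}$ and $\tgap_{\latents}(\ltest)=\norm{\gvec(\ltest)}$ outright; you instead prove the inclusion $\jmat(\ctest)(B_{\controls})\subseteq\tcone_{\latents}(\ltest)$ directly via the curve $t\mapsto\latemap(\ctest+t\cvec)$. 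Both arguments are valid, and yours has the minor advantage of not relying on the interiority observation, though in this setting that observation is available and makes the endgame slightly shorter.
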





\begin{proof}
    To begin with, let us define the balls $\ball_{\controls} \equiv \setdef{\cvec \in \R^{\dControls}}{\norm{\cvec} \le 1}$, and $\ball_{\latents} \equiv \setdef{\tvec \in \R^{\dLatents}}{\norm{\tvec} \le 1}$.
    Now, let $\ctest \in \controls$ be some arbitrary profile of control variables.
    By \cref{asm:map}, we have that the singular values of the Jacobian $\jmat(\ctest)$ of the representation map $\latemap$ are bounded from above and below by $\sdevmax < \infty$ and $\sdevmin > 0$ respectively.
    That is, $\sdevmin \norm{\cvec} \le \norm{\jmat(\ctest) \cvec} \le \sdevmax \norm{\cvec}$ for all $\cvec \in \R^{\dControls}$, which implies that $\sdevmin \ball_{\latents} \subseteq \jmat(\ctest)\ball_{\controls} \subseteq \sdevmax \ball_{\latents}$.
    By definition, we also have that, $\controlvecfield(\ctest) = \jmat(\ctest)^{\trans} \gvec\pars[\big]{\latemap(\ctest)} = \jmat(\ctest)^{\trans} \gvec(\ltest)$, and, hence 
    \begin{equation}
        \tgap_{\controls}(\ctest)
            = - \min_{\cvec \in \ball_{\controls}} \braket{\controlvecfield(\ctest)}{\cvec}
            = - \min_{\tvec \in \jmat(\ctest)\ball_{\controls}} \braket{\gvec(\ltest)}{\tvec}
    \end{equation}

    Now, recall that the set of Nash equilibria $\sol[\controls] \subset \controls$ is non-empty.
    In particular, since the $\controls \equiv \R^{\dControls}$ is an open set, the solution set lies in an open set, which, in conjunction with the fact that the representation maps are faithful representations of $\controls$ to $\latents$, also implies that $\latemap(\sol[\controls])$ lies in the interior of $\latents$.
    \revise{Finally,  since $\latemap$ is smooth, and in conjunction with the previous observations, it follows that}
    $\tgap_{\controls}(\ctest)= \|\controlvecfield(\ctest)\|$
    and $\tgap_{\latents}(\ltest)= \|\gvec(\ltest)\|$,
    hence it is easy to see from the above discussion that
\begin{equation}
\sigma_{\min}\tgap_{\latents}(\latemap(\ctest))
        	\leq \tgap_{\controls}(\ctest)
        	\leq \sigma_{\max} \tgap_{\latents}(\latemap(\ctest))
\end{equation}
and our proof is complete.
\end{proof}

In view of the above discussion, the reader may wonder why not use the tangent gap $\tgap_{\controls}$ directly as a performance metric.
The reason for this is that, if the solutions $\sol$ of \eqref{eq:SVI}/\eqref{eq:MVI} do not belong to the image of $\latemap$, the dual norm of $\controlvecfield(\ctest)$ may be too stringent as a performance metric as it does not vanish near an equilibrium (\eg think of the operator $\gvec(\latent) = \latent$ for $\latent$ between $0$ and $1$).
In this case, equilibria can be approximated to arbitrary accurarcy but never attained, so the latent gap functions $\gap$ and $\err$ are more appropriate as performance measures.

\subsection{Convergence analysis and template inequalities}

\lemCovariantPreconditioning*
\begin{proof}
    Let $\play \in \players$ be some arbitrary player, let $\control \in \controls$ be some arbitrary profile of control variables, and let $\cvec \in \R^{\dLatents_\play}$ be some arbitrary vector.
    Then for each coordinate $\istate \in \setof{1, \ldots, \dControls_\play}$, and latent profile $\ltest \in \latents$, we have that
    \begin{equation}
        \frac{\pd \lyap(\control; \ltest)}{\pd \control_{\play\istate}}
            = \sum_{\ilstate = 1}^{\dLatents_\play} (\lstate_{\play\ilstate} - \ltest_{\play\ilstate}) \frac{\pd \lstate_{\play\ilstate}}{\pd \control_{\play\istate}}.
    \end{equation}
    That is $\grad_{\control_\play} \lyap(\dpoint; \ltest) = \jmat_\play(\control_\play)^{\trans} (\lstate_\play - \ltest_\play)$.
    Now, we can multiply both sides of the above expression with $\pmat_\play(\control_\play)$ to get
    \begin{equation}
    \begin{split}
        \pmat_\play(\control_\play) \grad_{\control_\play} \lyap(\control; \ltest)
            &= \pmat_\play(\control_\play) \jmat_\play(\control_\play)^{\trans} (\lstate_\play - \ltest_\play) \\
            &= \jmat_\play(\control_\play)^\pinv \pars[\big]{\jmat_\play(\control_\play)^\pinv}^{\trans} \jmat_\play(\control_\play)^{\trans} (\lstate_\play - \ltest_\play) \\
            &= \jmat_\play(\control_\play)^\pinv (\lstate_\play - \ltest_\play),
    \end{split}
    \end{equation}
    where in the last equality we used the fact that the \revise{$\play$-th player's representation map} $\latemap_\play$ is a faithful representation, \ie $\jmat_\play(\control_\play)$ is maximal rank.
    Finally, by expanding the \ac{LHS} of \eqref{eq:CovariantPreconditioning}, and applying the above simplification the result follows.
\end{proof}

\lemTemplateInequality*
\begin{proof}
For each $\run = \running$ we expand $\next[\lstate]$ using its second-order Taylor approximation at $\curr$; \ie for each player~$\play = 1, \dots, \nPlayers$ and each coordinate $\iLatent = 1, \dots, \dLatents_\play$:
\begin{equation}
\begin{split}
    \next[\lstate][\play \iLatent]
        &= \latemap_{\play \iLatent}(\next) \\
        &= \latemap_{\play \iLatent}(\curr[\control][\play]) + \curr[\step] \inner{\grad \latemap_{\play \iLatent}(\curr[\control][\play]), \next[\control][\play] - \curr[\control][\play]} \\
        &\qquad + \curr[\step]^2 (\next[\control][\play] - \curr[\control][\play])^\T \hmat_{\play \iLatent}(\curr[\tilde \control][\play]) (\next[\control][\play] - \curr[\control][\play]),
\end{split}
\end{equation}
where $\hmat_{\play \iLatent}(\curr[\tilde \control][\play])$ is the Hessian of the latent map $\latemap_{\play \iLatent}$ at some convex combination~$\curr[\tilde \control][\play]$ of $\curr[\control][\play]$ and $\next[\control][\play]$.
Then, further expanding $\next$, we also get
\begin{equation}
\begin{split}
    \next[\lstate][\play \iLatent]
        &= \curr[\lstate][\play \iLatent] - \curr[\step] \inner{\grad \latemap_{\play \iLatent}(\curr[\control][\play]), \curr[\pmat][\play] \curr[\signal][\play]} + \curr[\step]^2 (\curr[\pmat][\play] \curr[\signal][\play])^\T \hmat_{\play \iLatent}(\curr[\tilde \control][\play]) (\curr[\pmat][\play] \curr[\signal][\play]).
\end{split}
\end{equation}
Plugging the above expansion to $\lyap(\next; \ltest)$ for arbitrary state $\ltest$ we get
\begin{equation}\label{eq:LyapExpansion}
\begin{split}
    \next[\lyap]
        &= \lyap(\next; \ltest) \\
        &= \frac{1}{2} \sum_{\play = 1}^{\nPlayers} \sum_{\iLatent = 1}^{\dLatents_\play} (\next[\lstate] - \ltest)^2 \\
        &= \frac{1}{2} \sum_{\play = 1}^{\nPlayers} \sum_{\iLatent = 1}^{\dLatents_\play} [\curr[\lstate][\play \iLatent] - \ltest_{\play \iLatent} - \curr[\step] \inner{\grad \latemap_{\play \iLatent}(\curr[\control][\play]), \curr[\pmat][\play] \curr[\signal][\play]} \\
        &\qquad + \curr[\step]^2 (\curr[\pmat][\play] \curr[\signal][\play])^\T \hmat_{\play \iLatent}(\curr[\tilde \control][\play]) (\curr[\pmat][\play] \curr[\signal][\play])]^2 \\
        &= \curr[\lyap] - \curr[\step] \sum_{\play = 1}^{\nPlayers} \sum_{\iLatent = 1}^{\dLatents_\play} \inner{\grad \latemap_{\play \iLatent}(\curr[\control][\play]), \curr[\pmat][\play] \curr[\signal][\play]}(\curr[\lstate][\play \iLatent] - \ltest_{\play \iLatent}) + \curr[\step]^2 \curr[\randalt] \\
        &= \curr[\lyap] - \curr[\step] (\curr[\jmat] \curr[\pmat] \curr[\signal])^\T (\curr[\lstate] - \ltest) + \curr[\step]^2 \curr[\randalt],
\end{split}
\end{equation}
where the second-order term~$\curr[\randalt]$ is given by the formula
\begin{equation}
\begin{split}
    \curr[\randalt]
        &= \sum_{\play = 1}^{\nPlayers} \sum_{\iLatent = 1}^{\dLatents_\play} \inner{\grad \latemap_{\play \iLatent}(\curr[\control][\play]), \curr[\pmat][\play] \curr[\signal][\play]}^2 \\
        &\qquad + \sum_{\play = 1}^{\nPlayers} \sum_{\iLatent = 1}^{\dLatents_\play} (\curr[\pmat][\play] \curr[\signal][\play])^\T \hmat_{\play \iLatent}(\curr[\tilde \control][\play]) (\curr[\pmat][\play] \curr[\signal][\play])(\curr[\lstate][\play \iLatent] - \ltest_{\play \iLatent}) \\
        &\qquad - \curr[\step] \sum_{\play = 1}^{\nPlayers} \sum_{\iLatent = 1}^{\dLatents_\play}  \inner{\grad \latemap_{\play \iLatent}(\curr[\control][\play]), \curr[\pmat][\play] \curr[\signal][\play]} (\curr[\pmat][\play] \curr[\signal][\play])^\T \hmat_{\play \iLatent}(\curr[\tilde \control][\play]) (\curr[\pmat][\play] \curr[\signal][\play]) \\
        &\qquad + \curr[\step]^2 \sum_{\play = 1}^{\nPlayers} \sum_{\iLatent = 1}^{\dLatents_\play} [(\curr[\pmat][\play] \curr[\signal][\play])^\T \hmat_{\play \iLatent}(\curr[\tilde \control][\play]) (\curr[\pmat][\play] \curr[\signal][\play])]^2.
\end{split}
\end{equation}

For the second-order term, observe that by \autoref{asm:map}, we have that $\bracks{\curr[\jmat]^\T \curr[\jmat]}^\pinv \preceq \frac{1}{\svmin^2} \eye$.
Using that fact, it follows that
\begin{equation}
\begin{split}
    \sum_{\play = 1}^{\nPlayers} \sum_{\iLatent = 1}^{\dLatents_\play} \inner{\grad \latemap_{\play \iLatent}(\curr[\control][\play]), \curr[\pmat][\play] \curr[\signal][\play]}^2
        &= (\curr[\jmat] \curr[\pmat] \curr[\signal])^\T \curr[\jmat] \curr[\pmat] \curr[\signal] \\
        &= \curr[\signal]^\T \bracks{\curr[\jmat]^\T \curr[\jmat]}^\pinv \curr[\jmat]^\T \curr[\jmat] \bracks{\curr[\jmat]^\T \curr[\jmat]}^\pinv \curr[\signal] \\
        &= \curr[\signal]^\T \bracks{\curr[\jmat]^\T \curr[\jmat]}^\pinv \curr[\signal] \\
        &\le \frac{1}{\svmin^2} \norm{\curr[\signal]}^2,
\end{split}
\end{equation}

Furthermore, since the representation maps~$\latemap_\play$ are $\lips$-Lipschitz smooth for some Lipschitz modulus $\lips$, we have that $\hmat_{\play, \iLatent}(\curr[\bar \control][\play]) \preceq \beta \eye$ for each player~$\play$ and coordinate~$\iLatent$.
Consequently, we have that
\begin{equation}
    (\curr[\pmat][\play] \curr[\signal][\play])^\T \hmat_{\play \iLatent}(\curr[\tilde \control][\play]) (\curr[\pmat][\play] \curr[\signal][\play])
        \le \lips \curr[\signal][\play]^\T \curr[\pmat][\play]^2 \curr[\signal][\play] \\
        \le \frac{\beta}{\svmin^{4}} \norm{\curr[\signal][\play]}^2
        \le \frac{\beta}{\svmin^{4}} \norm{\curr[\signal]}^2.
\end{equation}
Let $D = \diam(\latemap)$, then, by applying the Cauchy-Schwarz inequality, we also get that
\begin{equation}
\begin{split}
    \curr[\randalt]
        \le \frac{1}{\svmin^2} \norm{\curr[\signal]}^2 + \frac{\dLatents D \sqrt{\beta}}{\svmin^{2}} \norm{\curr[\signal]} + \step \frac{\sqrt{\beta}}{\svmin^3} \norm{\curr[\signal]}^2 + \step^2 \frac{\dLatents \lips^2}{\svmin^8} \norm{\curr[\signal]}^2,
\end{split}
\end{equation}
where $\step = \sup_{\run = \running} \curr[\step]$, and $\dLatents = \sum_{\play = 1}^{\nPlayers} \dLatents_\play$.

Finally, let us consider the first-order term of \eqref{eq:LyapExpansion}.
Let $\curr[\controlvecfield] = (\nabla_{\play} \loss_\play(\curr))_{\play\in\players}$, and $\curr[\gvec] = \gvec(\curr[\latent])$.
Then, by definition, we also have that $\curr[\controlvecfield] = \curr[\jmat]^\T \curr[\gvec]$.
Moreover, recall that, by construction, $\curr[\jmat] \curr[\pmat] \curr[\jmat]^\T = \eye$.
Consequently, we can write
\begin{equation}
\begin{split}
    (\curr[\jmat] \curr[\pmat] \curr[\signal])^\T (\curr[\lstate] - \ltest)
        &= (\curr[\jmat] \curr[\pmat] \curr[\controlvecfield])^\T (\curr[\lstate] - \ltest) + [\curr[\jmat] \curr[\pmat] (\curr[\signal] - \curr[\controlvecfield])]^\T (\curr[\lstate] - \ltest) \\
        &= (\curr[\jmat] \curr[\pmat] \curr[\jmat]^\T \curr[\gvec])^\T (\curr[\lstate] - \ltest) + [\curr[\jmat] \curr[\pmat] (\curr[\signal] - \curr[\jmat]^\T \curr[\gvec])]^\T (\curr[\lstate] - \ltest) \\
        &= \curr[\gvec]^\T (\curr[\lstate] - \ltest) + (\curr[\jmat] \curr[\pmat] \curr[\signal] - \curr[\gvec])^\T (\curr[\lstate] - \ltest),
\end{split}
\end{equation}
which concludes our proof.
\end{proof}

\section{Proofs of \cref{thm:PHGD-mon,thm:PHGD-strong,thm:PHGD-perfect}}
\label{app:theorems}

We are now in a position to prove \cref{thm:PHGD-mon,thm:PHGD-strong,thm:PHGD-perfect} regarding the convergence properties of \eqref{eq:PHGD}.
We proceed sequentially, restating the relevant results as needed.

\addcontentsline{toc}{subsection}{Proof of \cref{thm:PHGD-mon}}
\thmPHGDHiddenMerely*

\begin{proof}
    Let $\init[\state] \in \states$ be some arbitrary initialization of the algorithm, and let $\lstatealt \in \latentStates$ be some arbitrary profile of latent variables.
    Next, by \cref{lem:template}, we have that for all $\run \ge \start$:
    \begin{equation}
    \begin{aligned}
        \next[\lyap]
        	&\leq \curr[\lyap]
        		- \curr[\step] \braket{\gvec(\curr[\latent])}{\curr[\latent] - \ltest}
        		+ \curr[\step] \curr[\rand]
        		+ \curr[\step]^2 \curr[\randalt] \\
        \curr[\rand] 
            &= (\jmat(\curr)\pmat(\curr)\curr[\signal] - \gvec(\curr[\latent]))^{\trans} (\curr[\latent] - \ltest) \\
        \curr[\randalt]
            &= \kappa \norm{\curr[\signal]}^{2}, \quad\text{for some $\kappa > 0$}.
    \end{aligned}
    \end{equation}
    %
    Rearranging the terms, the above is equivalent to
    \begin{equation}
        \curr[\step]\braket[\big]{\vecfield(\curr[\lstate])}{\curr[\lstate] - \lstatealt}
            \le \curr[\lyap]
                - \next[\lyap]
                + \curr[\step] \curr[\rand]
                + \curr[\step]^2 \curr[\randalt].
    \end{equation}
    Furthermore, by the monotonicity of $\vecfield$, we also have that $\braket[\big]{\vecfield(\curr[\lstate]) - \vecfield(\lstatealt)}{\curr[\lstate] - \lstatealt} \ge 0$, and
    by combining the two, we get that for all $\run \ge \start$:
    \begin{equation}
        \curr[\step]\braket[\big]{\vecfield(\lstatealt)}{\curr[\lstate] - \lstatealt}
            \le \curr[\step]\braket[\big]{\vecfield(\curr[\lstate])}{\curr[\lstate] - \lstatealt}
            \le \curr[\lyap]
                - \next[\lyap]
                + \curr[\step] \curr[\rand]
                + \curr[\step]^2 \curr[\randalt].
    \end{equation}
    Summing up the above terms in both sides of the inequality, we also get that
    \begin{equation}
    \begin{split}
        \sum_{\runalt = \start}^{\run} \iter[\step]\braket[\big]{\vecfield(\lstatealt)}{\iter[\lstate] - \lstatealt}
            &\le \sum_{\runalt = \start}^{\run} \bras[\big]{
                \iter[\lyap]
                    - \afteriter[\lyap]
                    + \iter[\step] \iter[\rand]
                    + \iter[\step]^2 \iter[\randalt]
            } \\
            &= \init[\lyap]
                - \next[\lyap]
                + \sum_{\runalt = \start}^{\run} \iter[\step] \iter[\rand]
                + \sum_{\runalt = \start}^{\run} \iter[\step]^2 \iter[\randalt] \\  
            &\le \init[\lyap]
                + \sum_{\runalt = \start}^{\run} \iter[\step] \iter[\rand]
                + \sum_{\runalt = \start}^{\run} \iter[\step]^2 \iter[\randalt]. 
    \end{split}
    \end{equation}
    Dividing all terms by $\curr[\tilde\step] \defeq \sum_{\runalt = \start}^{\run} \iter[\step]$, also yields
    \begin{equation}
    \begin{split}
        \braket[\big]{\vecfield(\lstatealt)}{\curr[\bar\lstate] - \lstatealt}
            &=  \braket[\Big]{\vecfield(\lstatealt)}{\sum_{\runalt = 1}^{\run} \frac{\iter[\step]}{\curr[\tilde\step]} \iter[\lstate] - \lstatealt} \\
            &= \sum_{\runalt = 1}^{\run} \frac{\iter[\step]}{\curr[\tilde\step]} \braket[\big]{\vecfield(\lstatealt)}{ \iter[\lstate] - \lstatealt} \\
            &\le \frac{\init[\lyap]}{\curr[\tilde\step]}
                + \frac{\sum_{\runalt = \start}^{\run} \iter[\step] \iter[\rand]}{\curr[\tilde\step]}
                + \frac{\sum_{\runalt = \start}^{\run} \iter[\step]^2 \iter[\randalt]}{\curr[\tilde\step]},
    \end{split}
    \end{equation}
    where $\curr[\bar\lstate] \defeq \sum_{\runalt = \start}^{\run} \frac{\iter[\step]}{\curr[\tilde\step]} \iter[\lstate]$ is the time-average.

    Next, let us define the following auxiliary process that will assist us in further bounding the above expression:
    \begin{equation}
    \begin{aligned}
        \init[\dpoint]
            &= \init[\lstate] \\
        \next[\dpoint]
            &= \argmin_{\lstate \in \latentStates} \norm{\curr[\dpoint] - \curr[\step] \curr[\cvec] - \lstate}
                \quad\text{for all $\run \ge \afterstart$},
    \end{aligned}
    \end{equation}
    where $\curr[\cvec] \defeq \jmat(\curr)\pmat(\curr)\curr[\signal] - \gvec(\curr[\latent])$ for all $\play \in \players$.
    Observe that
    \begin{equation}
        \curr[\step] \curr[\rand]
            = \curr[\step] \braket{\curr[\cvec]}{\curr[\lstate] - \lstatealt}
            = \curr[\step] \braket{\curr[\cvec]}{\curr[\dpoint] - \lstatealt} + \curr[\step] \braket{\curr[\cvec]}{\curr[\lstate] - \curr[\dpoint]} \quad\text{for all $\run \ge \start$}.
    \end{equation}
    Let us, first, focus on the former of the two terms.
    Notice that for all $\run \ge \start$, we can write
    \begin{equation}
        \curr[\step] \braket{\curr[\cvec]}{\curr[\dpoint] - \lstatealt}
            = \curr[\step] \braket{\curr[\cvec]}{\next[\dpoint] - \lstatealt} + \curr[\step] \braket{\curr[\cvec]}{\curr[\dpoint] - \next[\dpoint]}.
    \end{equation}
    Furthermore, by the optimality of $\next[\dpoint]$, $\run \ge \afterstart$, we have also have that
    \begin{equation}
        \braket{\next[\dpoint] - \curr[\dpoint] + \curr[\step] \curr[\cvec]}{\next[\dpoint] - \lstatealt}
            \le 0.
    \end{equation}
    That is, $\curr[\step] \braket{\curr[\cvec]}{\next[\dpoint] - \lstatealt} \le \braket{\curr[\dpoint] - \next[\dpoint]}{\next[\dpoint] - \lstatealt}$.
    
    Let us also recall a couple of useful quadratic identities, namely, we have that $\norm{\curr[\dpoint] - \lstatealt}^{2} = \norm{\curr[\dpoint] - \next[\dpoint] + \next[\dpoint] - \lstatealt}^{2} = \norm{\curr[\dpoint] - \next[\dpoint]}^{2} + 2 \braket{\curr[\dpoint] - \next[\dpoint]}{\next[\dpoint] - \lstatealt} + \norm{\next[\dpoint] - \lstatealt}^{2}$, and $\norm{\curr[\step] \curr[\cvec] - (\curr[\dpoint] - \next[\dpoint])}^{2} = \curr[\step]^{2} \norm{\curr[\cvec]}^{2} - 2 \curr[\step] \braket{\curr[\cvec]}{\curr[\dpoint] - \next[\dpoint]} + \norm{\curr[\dpoint] - \next[\dpoint]}^{2}$.
    Then, in conjunction with the above, we get that for all $\run \ge \start$:
    \begin{equation}
    \begin{split}
        \curr[\step] \braket{\curr[\cvec]}{\curr[\dpoint] - \lstatealt}
            &= \curr[\step] \braket{\curr[\cvec]}{\next[\dpoint] - \lstatealt} + \curr[\step] \braket{\curr[\cvec]}{\curr[\dpoint] - \next[\dpoint]} \\
            &\le \braket{\curr[\dpoint] - \next[\dpoint]}{\next[\dpoint] - \lstatealt} + \curr[\step] \braket{\curr[\cvec]}{\curr[\dpoint] - \next[\dpoint]} \\
            &= \frac{1}{2} \norm{\curr[\dpoint] - \lstatealt}^{2} - \frac{1}{2} \norm{\next[\dpoint] - \lstatealt}^{2} - \frac{1}{2} \norm{\curr[\step] \curr[\cvec] - (\curr[\dpoint] - \next[\dpoint])}^{2} + \frac{\curr[\step]^{2}}{2} \norm{\curr[\cvec]}^{2} \\
            &\le \frac{1}{2} \norm{\curr[\dpoint] - \lstatealt}^{2} - \frac{1}{2} \norm{\next[\dpoint] - \lstatealt}^{2} + \frac{\curr[\step]^{2}}{2} \norm{\curr[\cvec]}^{2}
    \end{split}
    \end{equation}
    Finally, summing both sides of the above inequalities, we also get that
    \begin{equation}
    \begin{split}
       \sum_{\runalt = \start}^{\run} \iter[\step] \braket{\iter[\cvec]}{\iter[\dpoint] - \lstatealt}
            &\le \frac{1}{2} \sum_{\runalt = \start}^{\run} \norm{\iter[\dpoint] - \lstatealt}^{2} - \frac{1}{2} \sum_{\runalt = \start}^{\run} \norm{\afteriter[\dpoint] - \lstatealt}^{2} + \frac{1}{2} \sum_{\runalt = \start}^{\run} \iter[\step]^{2} \norm{\iter[\cvec]}^{2} \\
            &= \frac{1}{2} \norm{\init[\dpoint] - \lstatealt}^{2} - \frac{1}{2} \norm{\next[\dpoint] - \lstatealt}^{2} + \frac{1}{2} \sum_{\runalt = \start}^{\run} \iter[\step]^{2} \norm{\iter[\cvec]}^{2} \\
            &\le \frac{1}{2} \norm{\init[\dpoint] - \lstatealt}^{2} + \frac{1}{2} \sum_{\runalt = \start}^{\run} \iter[\step]^{2} \norm{\iter[\cvec]}^{2} \\
            &= \frac{1}{2} \norm{\init[\lstate] - \lstatealt}^{2} + \frac{1}{2} \sum_{\runalt = \start}^{\run} \iter[\step]^{2} \norm{\iter[\cvec]}^{2} \\
            &= \init[\lyap] + \frac{1}{2} \sum_{\runalt = \start}^{\run} \iter[\step]^{2} \norm{\iter[\cvec]}^{2}
    \end{split}
    \end{equation}

    With the above established, let us reconsider the quantity $\braket[\big]{\vecfield(\lstatealt)}{\curr[\bar\lstate] - \lstatealt}$.
    Specifically, due to the above derivations, we have that for all $\run \ge \start$:
    \begin{equation*}
    \begin{split}
        \braket[\big]{\vecfield(\lstatealt)}{\curr[\bar\lstate] - \lstatealt}
            &\le \frac{\init[\lyap]}{\curr[\tilde\step]} + \frac{\sum_{\runalt = \start}^{\run} \iter[\step] \iter[\rand]}{\curr[\tilde\step]} + \frac{\sum_{\runalt = \start}^{\run} \iter[\step]^2 \iter[\randalt]}{\curr[\tilde\step]} \\
            &= \frac{\init[\lyap]}{\curr[\tilde\step]} + \frac{1}{\curr[\tilde\step]} \sum_{\runalt = \start}^{\run} \curr[\step] \braket{\curr[\cvec]}{\curr[\dpoint] - \lstatealt} + \frac{1}{\curr[\tilde\step]} \sum_{\runalt = \start}^{\run} \curr[\step] \braket{\curr[\cvec]}{\curr[\lstate] - \curr[\dpoint]} + \frac{\sum_{\runalt = \start}^{\run} \iter[\step]^2 \iter[\randalt]}{\curr[\tilde\step]} \\
            &\le \frac{3 \init[\lyap]}{2 \curr[\tilde\step]}
                + \frac{1}{2\curr[\tilde\step]} \sum_{\runalt = \start}^{\run} \iter[\step]^{2} \norm{\iter[\cvec]}^{2} + \frac{1}{\curr[\tilde\step]} \sum_{\runalt = \start}^{\run} \curr[\step] \braket{\curr[\cvec]}{\curr[\lstate] - \curr[\dpoint]} + \frac{\kappa}{\curr[\tilde\step]} \sum_{\runalt = \start}^{\run} \iter[\step]^2 \norm{\iter[\signal]}^2.
    \end{split}
    \end{equation*}
    Considering the mean of supremum over $\lstatealt \in \latentStates$ of the \ac{LHS}, we end up with
    \begin{equation}
    \begin{split}
        \exof[\bigg]{\sup_{\lstatealt \in \latentStates} \braket[\big]{\vecfield(\lstatealt)}{\curr[\bar\lstate] - \lstatealt}}
            &\le \frac{3 \init[\lyap]}{2 \curr[\tilde\step]} + \frac{1}{2\curr[\tilde\step]} \sum_{\runalt = \start}^{\run} \iter[\step]^{2} \exof{\norm{\iter[\cvec]}^{2}} + \frac{\kappa}{\curr[\tilde\step]} \sum_{\runalt = \start}^{\run} \iter[\step]^2 \exof{\norm{\iter[\signal]}^2} \\
            &\le \frac{3 \init[\lyap]}{2 \curr[\tilde\step]} + \frac{\sbound^2}{\curr[\tilde\step]} \sum_{\runalt = \start}^{\run} \iter[\step]^{2} + \frac{\kappa \sbound^2}{\curr[\tilde\step]} \sum_{\runalt = \start}^{\run} \iter[\step]^2 \\
            &= \frac{3 \init[\lyap]}{2 \curr[\tilde\step]} + \pars{1 + \kappa}\frac{\sbound^2}{\curr[\tilde\step]} \sum_{\runalt = \start}^{\run} \iter[\step]^{2} \\
            &= \bigoh\pars[\Big]{\sum_{\runalt = \start}^{\run} \iter[\step]^{2 }/ \curr[\tilde\step]} \\
            &= \bigoh(\log \run / \sqrt{\run}),
    \end{split}
    \end{equation}
    where the last inequality is direct consequence of the bounded second moment of the stochastic gradient in \cref{asm:loss}
\end{proof}

\addcontentsline{toc}{subsection}{Proof of \cref{thm:PHGD-strong}}
\thmPHGDHiddenStrongly*
\begin{proof}
    Let $\init[\state] \in \states$ be some arbitrary initialization of the algorithm, and let $\step \ge \frac{1}{2 \strong}$ be arbitrary. 
    Since the map $\mloss$ is $\strong$-strongly monotone, we have, by the definition of strong monotonicity, that
    \begin{equation}
        \braket[\big]{\vecfield(\curr[\lstate])}{\curr[\lstate] - \sol[\lstate]} \ge \strong \norm{\lstate - \sol[\lstate]}^2
            \quad\text{for all $\lstate \in \latentStates$}.
    \end{equation}
    Next, by \cref{lem:template}, we have that for all $\run \ge \start$:
    \begin{equation}
    \begin{aligned}
        \next[\lyap]
        	&\leq \curr[\lyap]
        		- \curr[\step] \braket{\gvec(\curr[\latent])}{\curr[\latent] - \ltest}
        		+ \curr[\step] \curr[\rand]
        		+ \curr[\step]^2 \curr[\randalt] \\
        \curr[\rand] 
            &= (\jmat(\curr)\pmat(\curr)\curr[\signal] - \gvec(\curr[\latent]))^{\trans} (\curr[\latent] - \ltest) \\
        \curr[\randalt]
            &= \kappa \norm{\curr[\signal]}^{2}, \quad\text{for some $\kappa > 0$}.
    \end{aligned}
    \end{equation}
    %
    That is
    \begin{equation}
    \begin{split}
        \next[\lyap]
            &\le \curr[\lyap] - \strong \curr[\step] \twonorm{\curr[\lstate] - \sol[\lstate]}^2 + \curr[\step] \curr[\rand] + \curr[\step]^2 \curr[\randalt] \\ 
            &= \curr[\lyap] - 2 \strong \curr[\step] \curr[\lyap] + \curr[\step] \curr[\rand] + \curr[\step]^2 \curr[\randalt] \\
            &= (1 - 2 \strong \curr[\step]) \curr[\lyap] + \curr[\step] \curr[\rand] + \curr[\step]^2 \curr[\randalt].
    \end{split}
    \end{equation}
    Let $\curr[\history] = \setdef{\iter[\state]}{\runalt = \start, \dotsc, \run}$ be the history of play up to iteration $\run$.
    In expectation, it follows from the above that, for all $\run \ge \start$:
    \begin{equation}
    \begin{split}
        \exof{\next[\lyap]} 
            &\le (1 - 2 \strong \curr[\step]) \exof{\curr[\lyap]} + \curr[\step] \exof{\curr[\rand]} + \curr[\step]^2 \exof{\curr[\randalt]} \\
            &= (1 - 2 \strong \curr[\step]) \exof{\curr[\lyap]} + \curr[\step] \exof[\Big]{\exof{\curr[\rand] \given \curr[\history]}} + \curr[\step]^2 \exof[\Big]{\exof{\curr[\randalt] \given \curr[\history]}}.
    \end{split}
    \end{equation}
    Note that $\curr[\lstate]$ is $\curr[\history]$-measurable; hence, we have that
    \begin{equation}
    \begin{split}
        \exof{\curr[\rand] \given \curr[\history]}
            &= \pars[\big]{\jmat(\curr)\pmat(\curr)\exof{\curr[\signal] \given \curr[\history]} - \gvec(\curr[\latent])}^{\trans} (\curr[\latent] - \ltest) \\
            &= \pars[\big]{\jmat(\curr)\pmat(\curr)\controlvecfield(\curr[\control]) - \gvec(\curr[\latent])}^{\trans} (\curr[\latent] - \ltest) \\
            &= \pars[\big]{\jmat(\curr)\pmat(\curr) \jmat({\curr[\state]})^{\trans}{\vecfield(\curr[\lstate])} - \gvec(\curr[\latent])}^{\trans} (\curr[\latent] - \ltest) \\
            &= \pars[\big]{\vecfield(\curr[\lstate]) - \gvec(\curr[\latent])}^{\trans} (\curr[\latent] - \ltest) \\
            &= 0.
    \end{split}
    \end{equation}
    Furthermore, since, by \cref{asm:loss}, we have that $\exof{\norm{\curr[\signal]}^2 \given \curr[\history]} \le \sbound^2$, it also holds that
    \begin{equation}
        \exof{\curr[\randalt] \given \curr[\history]}
            = \kappa \exof{\norm{\curr[\signal]}^2 \given \curr[\history]}
            \le \kappa \sbound^2
    \end{equation}
    Next, by some simple substitutions, we have
    \begin{equation}
    \begin{split}
        \exof{\next[\lyap]} 
            &\le (1 - 2 \strong \curr[\step]) \exof{\curr[\lyap]} + \kappa \sbound^2 \curr[\step]^2 \\
            &= \parens[\Big]{1 - \frac{2 \strong \step}{\run}} \exof{\curr[\lyap]} + \frac{\kappa \sbound^2 \step^2}{\run^2} 
    \end{split}
    \end{equation}
    Finally, since $\step > \frac{1}{2 \strong}$, we may apply Chung's lemma \cite{Chu54} to get that, for all $\run \ge \start$:
    \begin{equation}
        \exof{\curr[\lyap]}
            \le \frac{\kappa \sbound^2 \step^2}{2 \strong \step - 1} \cdot \frac{1}{\run} + \bigoh\parens[\Big]{\frac{1}{\run^2} + \frac{1}{\run^{2 \strong \step}}}.
    \end{equation}
\end{proof}

\addcontentsline{toc}{subsection}{Proof of \cref{thm:PHGD-perfect}}
\thmPHGDHiddenStrongPerfect*
\begin{proof}
    Due to the absence of randomness, and therefore the absence of noise, \cref{lem:template} may be simplified to
    \begin{equation}
        \next[\lyap]
            \le \curr[\lyap] - \curr[\step]\braket[\big]{\vecfield(\curr[\lstate])}{\curr[\lstate] - \sol[\lstate]} + \kappa \curr[\step]^2 \norm{\curr[\controlvecfield]}^2
    \end{equation}
    for some constants $\kappa > 0$, where $\curr[\controlvecfield] \defeq (\nabla_{\play} \loss_\play(\curr[\control]))_{\play\in\players}$.

    Let $\lips$ be the modulus of Lipschitz continuity of the gradients $\curr[\controlvecfield] = \jmat({\curr[\state]})^{\trans}{\vecfield(\curr[\lstate])}$, and let $\frac{\strong}{2}$ be the modulus of the strong monotonicity of $\vecfield$.
    Recall that, by \cref{asm:map}, we have that the singular values of the Jacobian $\jmat(\state)$ of the representation map $\latemap$ are bounded from above and below by $\sdevmax < \infty$ and $\sdevmin > 0$ respectively.
    Therefore, since $\curr[\controlvecfield]$ is $\lips$-Lipschitz, it follows that
    \begin{equation}
        \norm{\curr[\controlvecfield] - \sol[\controlvecfield]}^2
            \leq \lips^2 \norm{\curr[\state] - \sol[\state]}^2
            \leq \frac{\lips^2}{\sdevmin^{2}} \norm{\curr[\lstate] - \sol[\lstate]}^2
    \end{equation}
    In conjunction to the above, we then also have
    \begin{equation}
    \begin{split}
        \next[\lyap]
            &\le \curr[\lyap] - \curr[\step]\braket[\big]{\vecfield(\curr[\lstate])}{\curr[\lstate] - \sol[\lstate]} + \kappa \curr[\step]^2 \norm{\curr[\controlvecfield]}^2 \\
            &= \curr[\lyap] - \curr[\step]\braket[\big]{\vecfield(\curr[\lstate])}{\curr[\lstate] - \sol[\lstate]} + \kappa \curr[\step]^2 \norm{\curr[\controlvecfield] - \sol[\controlvecfield]}^2 \\
            &\le \curr[\lyap] - \strong \curr[\step] \frac{1}{2} \norm{\curr[\lstate] - \sol[\lstate]}^{2} + \frac{\kappa \curr[\step]^2 \lips^{2}}{\sdevmin^{2}} \norm{\curr[\lstate] - \sol[\lstate]}^2 \\
            &= \curr[\lyap] - \strong \curr[\step] \curr[\lyap] + \frac{2 \kappa \curr[\step]^2 \lips^{2}}{\sdevmin^{2}} \curr[\lyap] \\
            &= \curr[\lyap] \pars[\Big]{1 - \strong \curr[\step] + \frac{2 \kappa\curr[\step]^2 \lips^{2}}{\sdevmin^{2}}}
    \end{split}
    \end{equation}
    Restricting $\curr[\step]$ such that
    \begin{equation}
        1 - \strong \curr[\step] + \frac{2 \kappa \curr[\step]^2 \lips^2}{\sdevmin^{2}} < 1
            \iff \curr[\step] < \frac{\sdevmin^{2}\strong}{2 \kappa  \lips^{2}},
    \end{equation}
    we finally have that, for the choice of $\curr[\step] = \hat{\step} \defeq \dfrac{\sdevmin^{2}\strong}{8 \kappa \lips^{2}}$, it holds 
    \begin{equation}
        \curr[\lyap] = \bigoh(\rho^\run)
            \quad \text{where $\rho \defeq \pars[\Big]{1 - \dfrac{3\sdevmin^{2}\strong}{32 \kappa  \lips^{2}}} < 1$}.
    \end{equation}
    Leveraging the equivalence of $\err_{\states}(\curr[\state])$ and $\err_{\latentStates}(\curr[\lstate]) = \curr[\lyap]$ completes the proof.
\end{proof}

\section{Experiments}
\label{app:Experiments}

This section demonstrates the method's applicability in a series of different applccations.
Along with revisiting the established examples of \cref{sec:experiments} in greater detail, we also present how the \ac{PHGD} method performs in a couple of additional settings of interest.
We begin with a high-level description of the common test suite, and afterward, we move to the definitive details of each of the presented applications.

In each example, we define a base game $\game$ among two or more players $\players \equiv \setof{1, \ldots, \nPlayers}$.
Each player $\play$ control a $\nStates$-dimensional vector of control variables $\state_\play \in \R^{\nStates}$, which they feed in an individual differentiable preconfigured \ac{MLP} with two hidden layers that act as the player's representation map $\latemap_\play: \R^{\nStates} \to \latentStates_\play$.
The dimensions of each layer of the \acp{MLP} vary among the different examples.
However, the \ac{MLP}'s output $\lstate_\play = \latemap_\play(\state_\play)$ is guaranteed to be a representation of a discrete probability distribution among the player's actions in some normal-form game, \eg a Rock-Paper-Scissors game.
The actual latent game $\hiddenGame$ of $\game$, is given by the ``hidden'' loss functions $\mloss_\play: \latentStates \to \R$, $\play \in \players$.
Each loss function $\mloss_\play$ is a regularized variation of the aforementioned normal-form game, tuned by some $\frac{\strong}{2}$-strongly convex regularizer $\hreg_{\strong}(\lstate) = \frac{\strong}{2} \cdot \pars[\big]{\sum_{\play \in \players} \norm{\lstate_\play - \sol[\lstate]_\play}^{2}}$, where $\sol[\lstate]$ is the normal-form game's unique equilibrium point.
The modulus $\strong$ is a hyper-parameter, which we tune such that the vector field $\vecfield(\lstate) = \pars[\big]{\grad_{\lstate_\play} \mloss_\play(\lstate)}_{\play \in \players}$ to be strongly monotone, and to avoid finite-precision numerical errors, which can potentially arise during the computation of $\lstate$.

\begin{figure}
\centering
    \begin{subfigure}{\textwidth}
    \centering
        \begin{subfigure}{0.4\textwidth}
            \includegraphics[width=\textwidth, page=2]{Figures/MP/MatchingPenniesGame_PGD.pdf}
        \end{subfigure}
        \quad
        \begin{subfigure}{0.4\textwidth}
            \includegraphics[width=\textwidth, page=3]{Figures/MP/MatchingPenniesGame_PGD.pdf}
        \end{subfigure}
        \caption{%
            A trajectory of \ac{PHGD} in a regularized game of Matching Pennies over the sub-levels of the energy function in \eqref{eq:energy}.
            The trajectory is depicted with reference, both, the space of control variables (left), and the space of latent variables (right).
            Notice that the trajectory crosses each sub-level set of the energy function, at most, once, indicating that the function's value is monotone along the trajectory.
        }
        \label{fig:PGDInMatchingPennies}
    \end{subfigure}
    \par\medskip
    \begin{subfigure}{\textwidth}
    \centering
        \begin{subfigure}{0.4\textwidth}
            \includegraphics[width=\textwidth, page=2]{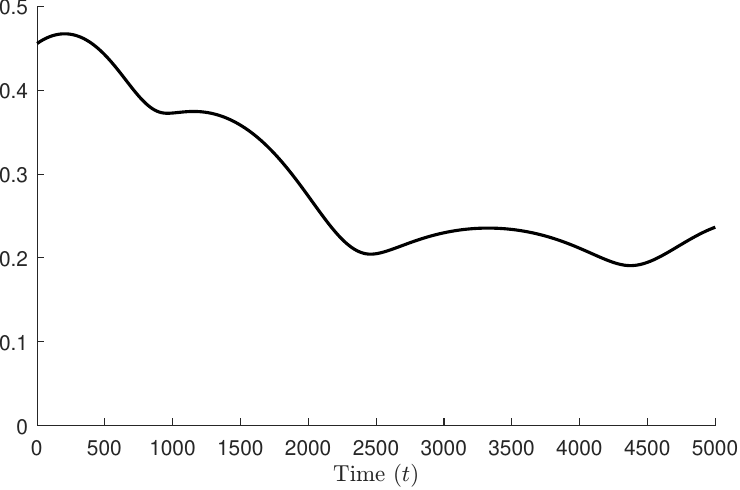}
        \end{subfigure}
        \quad
        \begin{subfigure}{0.4\textwidth}
            \includegraphics[width=\textwidth, page=3]{Figures/MP/MatchingPenniesGame_GD.pdf}
        \end{subfigure}
        \caption{%
            A trajectory of \ac{GD} in a regularized game of Matching Pennies over the sub-levels of the energy function in \eqref{eq:energy}.
            The trajectory is depicted with reference, both, the space of control variables (left), and the space of latent variables (right).
            Notice that the trajectory crosses each sub-level set of the energy function multiple times indicating that the function's value is not monotone along the trajectory.
        }
        \label{fig:GDInMatchingPennies}
    \end{subfigure}
    \par\bigskip
    \begin{subfigure}{\textwidth}
    \centering
        \begin{subfigure}{0.5\textwidth}
            \includegraphics[width=\textwidth, page=1]{Figures/MP/MatchingPenniesGame_GDVsPGD.pdf}
        \end{subfigure}
        \caption{%
            The distance of $\lstate$ to the latent game's equilibrium point $\sol[\lstate] = (\frac{1}{2}, \frac{1}{2})$ along similar trajectories of \ac{PHGD} and \ac{GD} in a regularized game of Matching Pennies.
            The distance is depicted in a logarithmic scale in order to reveal the exponential rate of convergence of \ac{PHGD}.
        }
        \label{fig:GDVsPGDInMatchingPennies}
    \end{subfigure}
\end{figure}
\paragraph{A Hidden Game of Matching Pennies.}
As a first example, we revisit the two-player hidden game of Matching Pennies we introduced in \cref{sec:experiments}.
Here, each player's $\play$ control variables $\state_\play$ are uni-dimensional, and are fed to each player's individual \ac{MLP} given by the representation maps
\begin{equation}
    \latemap_\play(\state_\play)
        = \sigmoid\pars[\big]{\alpha^{(2)}_\play \cdot \celu\pars{\alpha^{(1)}_\play \cdot \state_\play}}
        \quad\text{for all $\play = 1, 2$},
\end{equation}
where $\alpha^{(1)}_\play, \alpha^{(2)}_\play \in [-1, 1]$ are randomly chosen.
Although the definition of the activation functions $\sigmoid: \R \to [0, 1]$, and $\celu: \R \to (-1, \infty)$ are widely common, we give them below for reference; that is:
\begin{subequations}
\begin{align}
    \sigmoid(\point)
        &= \pars[\big]{1 + \exp(-\point)}^{-1} \label{eq:Sigmoid} \\
    \celu(\point) 
        &= \max\setof{0, \point} + \min\setof{0, \exp(\point) - 1} \label{eq:CeLu}.
\end{align}
\end{subequations}
In this, and the following examples, without any loss of the generality, we deliberately set the bias of each of the \ac{MLP}'s hidden layers to zero, in order for the base game's unique equilibrium to lie in $\sol[\state] = \vec{0}$, and to simplify the notation.
Each of the \ac{MLP}'s output $\lstate_\play = \latemap(\state_\play)$ is the $\play$-th player's probability of playing Heads in a regularized game of Matching Pennies, given by the loss functions
\begin{equation}
\label{eq:LossMatchingPennies}
    \mloss_1(\lstate)
    = - \mloss_2(\lstate)
    = - (2 \lstate_1 - 1) \cdot (2 \lstate_2 - 1) + \hreg_{0.75}(\lstate).
\end{equation}
The latent game's unique equilibrium, in this case, is at $\sol[\lstate] = \latemap(\sol[\state]) = (\frac{1}{2}, \frac{1}{2})$.

In this particular example it is possible to visualize the trajectories of \ac{PHGD} and \acl{GD} with reference, both, the space of control variables $\R^{2}$, and the space of latent variables $[0, 1]^{2}$.
In \cref{fig:PGDInMatchingPennies} we depict the trajectory of \ac{PHGD}, with step-size $0.01$, in the above game, initialized at the arbitrary state $(1.25, 2.25)$ with respect to, both, the space of control variables (left), and the space of latent variables (right), over the sub-level sets of the energy function in \eqref{eq:energy}.
A similar trajectory of the \ac{GD} algorithm is depicted in \cref{fig:GDInMatchingPennies}.
Observe that the \ac{PHGD}'s crosses the sub-level sets of the energy function \emph{at most once} until it asymptotically reaches the game's equilibrium point at $\sol[\state]$, as opposed to the trajectory of \ac{GD}, which exhibits an erratic behavior.
In particular, as is depicted in the semi-log plot in \cref{fig:GDVsPGDInMatchingPennies}, \ac{PHGD} converges to the game's equilibrium point at an exponential rate as opposed to the rate of \ac{GD}, which, at best, can be described as linear.

\begin{figure}
\centering
    \begin{subfigure}{\textwidth}
    \centering
        \begin{subfigure}{0.45\textwidth}
            \includegraphics[width=\textwidth, page=2]{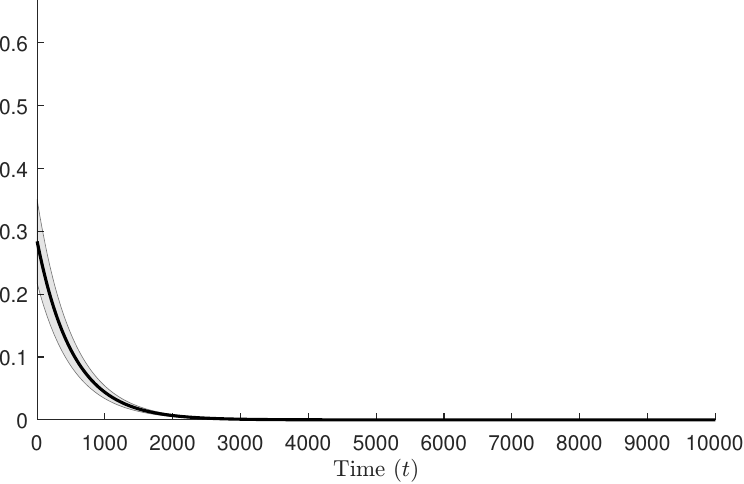}
        \end{subfigure}
        \qquad
        \begin{subfigure}{0.45\textwidth}
            \includegraphics[width=\textwidth, page=3]{Figures/RPS/RockPaperScissorsGame_PGD.pdf}
        \end{subfigure}
        \caption{%
            A trajectory of \ac{PHGD} in a regularized game of Rock-Paper-Scissors as depicted in each player's $2$-dimensional simplex of latent variables.
            Notice that the trajectory converges to the latent game's equilibrium point $\sol[\lstate]$.
        }
        \label{fig:PGDInRockPaperScissors}
    \end{subfigure}
    \par\bigskip
    \begin{subfigure}{\textwidth}
    \centering
        \begin{subfigure}{0.45\textwidth}
            \includegraphics[width=\textwidth, page=2]{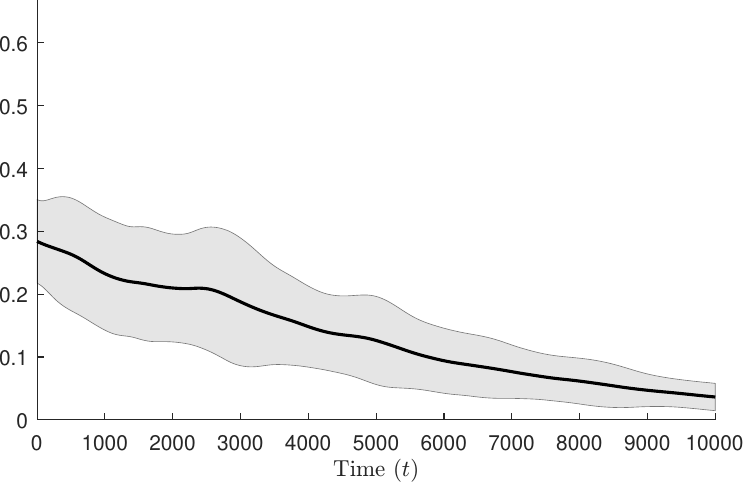}
        \end{subfigure}
        \qquad
        \begin{subfigure}{0.45\textwidth}
            \includegraphics[width=\textwidth, page=3]{Figures/RPS/RockPaperScissorsGame_GD.pdf}
        \end{subfigure}
        \caption{%
            A trajectory of \ac{GD} in a regularized game of Rock-Paper-Scissors as depicted in each player's $2$-dimensional simplex of latent variables.
            Notice that the trajectory exhibits erratic behavior.
        }
        \label{fig:GDInRockPaperScissors}
    \end{subfigure}
    \par\bigskip
    \begin{subfigure}{\textwidth}
    \centering
        \begin{subfigure}{0.6\textwidth}
            \includegraphics[width=\textwidth, page=1]{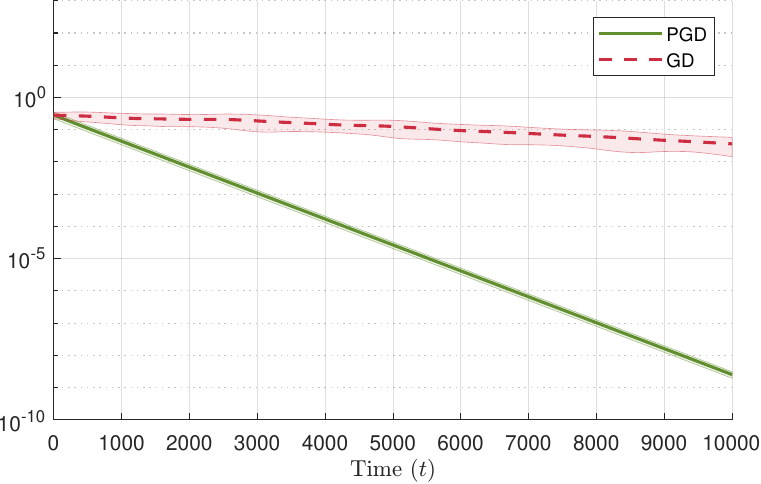}
        \end{subfigure}
        \caption{%
            The mean distance and confidence bound of $\lstate$ to the latent game's equilibrium point $\sol[\lstate]$ along similar trajectories of \ac{PHGD} and \ac{GD} in a regularized game of Rock-Paper-Scissors.
            The distance is depicted in a logarithmic scale in order to reveal the exponential rate of convergence of \ac{PHGD}.
        }
        \label{fig:GDVsPGDInRockPaperScissors}
    \end{subfigure}
\end{figure}

\paragraph{A Hidden Game of Rock-Paper-Scissors.}
In the next example we consider a regularized hidden game of Rock-Paper-Scissors between two players.
In a standard Rock-Paper-Scissors game each player may choose among three strategies, namely, Rock, Paper, or Scissors.
No strategy dominates over the others, with Rock ruling over Scissors, Paper ruling over Rock, and Scissors ruling over Paper.
In this example, each player $\play$ controls a $5$-dimensional vector $\state_\play \in \R^{5}$ of control variables, which, once again, they may feed their individual differentiable preconfigured \ac{MLP}, whose output $\lstate_\play = \latemap_\play(\state_\play)$ lies in the $2$-dimensional simplex that encodes the space of probability distributions among the three strategies of the latent game.
The two \acp{MLP} are given by the following representation maps:
\begin{equation}
    \latemap_\play(\state_\play)
        = \softmax\pars[\big]{A^{(2)}_\play \times \celu\pars{A^{(1)}_\play \times \state_\play}}
        \quad\text{for all $\play = 1, 2$},
\end{equation}
where the matrices $A^{(1)}_\play \in [-1, 1]^{4 \times 5}$, and $A^{(2)}_\play \in [-1, 1]^{3 \times 4}$ are random, and the activation function $\celu: \R \to (-1, \infty)$, given as in \eqref{eq:CeLu}, is applied pair-wise.
The definition of $\softmax: \R^{\nLatentStates} \to \simplex_{\nLatentStates - 1}$, where $\simplex_{\nLatentStates - 1}$ is the $(\nLatentStates - 1)$-dimensional simplex is given, for reference, by
\begin{equation}
    \softmax_\ilstate(\point)
        = \frac{\exp(\point_\ilstate)}{\sum_{\ilstatealt = 1}^{\nLatentStates} \exp(\point_\ilstatealt)}
        \quad\text{for all $\ilstate = 1, \dots, \nLatentStates$}.
\end{equation}
In this case, the latent game's loss functions are given by the following polynomial system of equations, whose unique equilibrium lies at the uniform distributions $\sol[\lstate]_\play = \latemap_\play(\sol[\state]_\play) = (\frac{1}{3}, \frac{1}{3}, \frac{1}{3})$, $\play = 1, 2$:
\begin{equation}
\label{eq:LossRockPaperScissors}
    \mloss_1(\lstate)
        = - \mloss_2(\lstate) 
        = - \lstate_1^\T A \lstate_2 + \hreg_{0.2}(\lstate)
        \quad\text{where $A = \begin{bmatrix*}[r]
             0 & -1 &  1 \\
             1 &  0 & -1 \\
            -1 &  1 &  0
        \end{bmatrix*}$.}
\end{equation}

Although, in this example, it is not possible to visualize the trajectories of \ac{PHGD} and \ac{GD} in the space of control variables due to the large dimensionality of the base game, we may still get a glimpse of the trajectories' behavior in the space of latent variables. 
\cref{fig:PGDInRockPaperScissors} depicts an arbitrary trajectory of \ac{PHGD} with step-size $0.01$, as it evolves in the simplices of the two players.
Notice, that the trajectory, clearly, converges to the equilibrium of the latent game.
A trajectory of \ac{GD}, with the same step size and initialization point, is depicted in \cref{fig:GDInRockPaperScissors}.
In this case, the erratic behavior of \ac{GD} is more apparent than in the previous example.
A more in-depth comparison of the algorithms in the current setup is depicted in the semi-log plot of \cref{fig:GDVsPGDInRockPaperScissors}, where we visualize the mean distance, and confidence bounds, to the equilibrium point across $100$ random trajectories of the two algorithms.
Observe, that \ac{PHGD} exhibits an exponential rate of convergence, as opposed to the linear rate of \ac{GD}.

\begin{figure}[h]
    \centering
    \includegraphics[width=0.6\textwidth, page=1]{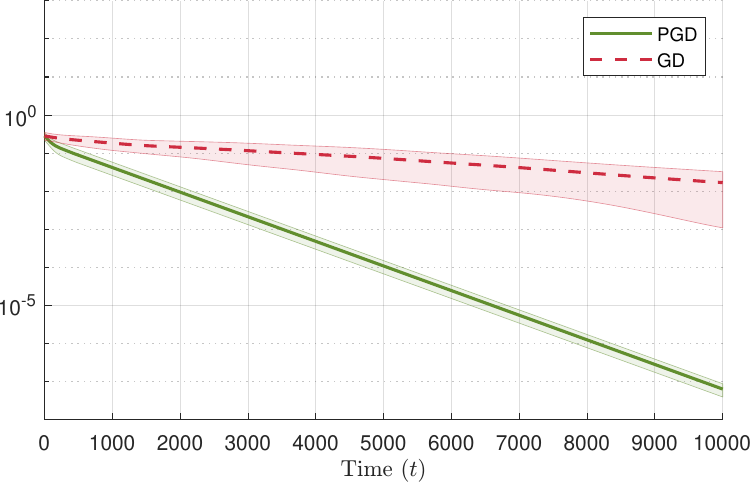}
    \caption{%
        The mean distance and confidence bounds of $\lstate$ to the latent game's equilibrium point $\sol[\lstate]$ along similar trajectories of \ac{PHGD} and \ac{GD} in a regularized Shapley game.
        The distance is depicted in a logarithmic scale in order to reveal the exponential rate of convergence of \ac{PHGD}.
    }
    \label{fig:GDVsPGDInShapley}
\end{figure}
\paragraph{A Hidden Shapley's Game.}
The next game we are interested in is a hidden Shapley's game.
The standard Shapley's game is a two-player normal-form game with payoff matrices:
\begin{equation}
    A = \begin{bmatrix*}[r]
            1 &     0 & \beta \\
        \beta &     1 &     0 \\
            0 & \beta &     1
    \end{bmatrix*}
    \qquad\text{and}\qquad
    B = \begin{bmatrix*}[r]
        -\beta &      1 &      0 \\
             0 & -\beta &      1 \\
             1 &      0 & -\beta
    \end{bmatrix*},
\end{equation}
for some $\beta \in (0, 1)$.
The setup of this example is quite similar to the one in the previous example.
However, there are a few noteworthy differences.
To begin with, the hidden Shapley's game is not a zero-sum game.
Specifically, the latent game's loss functions are given by the following polynomial system of equations:
\begin{equation}
\label{eq:LossShapley}
\begin{aligned}
    \mloss_1(\lstate)
        &= - \lstate_1^\T A \lstate_2 + \hreg_{0.2}(\lstate) \\
    \mloss_2(\lstate)
        &= - \lstate_2^\T B^\T \lstate_1 + \hreg_{0.2}(\lstate).
\end{aligned}
\end{equation}
As in the hidden Rock-Paper-Scissors game, this game's unique equilibrium also lies in $\sol[\lstate]_\play = \latemap_\play(\sol[\state]_\play) = (\frac{1}{3}, \frac{1}{3}, \frac{1}{3})$, $\play = 1, 2$.
A second difference is the small modulus, $\strong = 0.2$, that we choose for the strongly-convex regularizer of this game.
In \cref{fig:GDVsPGDInShapley} we depict a semi-log plot of the mean performance of \ac{PHGD} and \ac{GD} in the above game constructed using the same parameters as in the case of the hidden Rock-Paper-Scissors game.
Notice that although the behaviour of \ac{PHGD} is similar in both games, the confidence bounds of \ac{GD} are drastically larger in the current example.

\begin{figure}[h]
    \centering
    \includegraphics[width=0.6\textwidth, page=1]{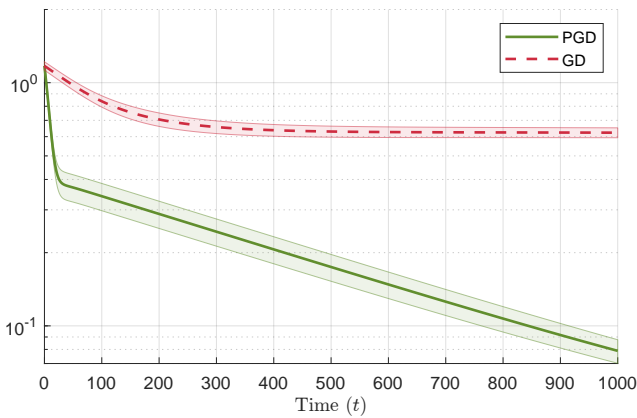}
    \caption{%
        The mean distance and confidence bounds of $\lstate$ to the latent game's equilibrium point $\sol[\lstate]$ along similar trajectories of \ac{PHGD} and \ac{GD} in a regularized El Farol Bar game.
        The distance is depicted in a logarithmic scale in order to reveal the exponential rate of convergence of \ac{PHGD}.
    }
    \label{fig:GDVsPGDInElFarolBar}
\end{figure}
\paragraph{A Hidden El Farol Bar Game.}
As a final example, we are going to revisit and describe the details of the hidden El Farol Bar game we introduced in \cref{sec:experiments}.
The El Farol Bar game is a famous congestion game that is often described as a game between  populations.
However, in this example, we are interested in its atomic $\nPlayers$-playe variant.
In the standard El Farol Bar game each player is given the option of visiting a specific bar in El Farol, and the outcome of the game is determined based on the number of players that decided to visit the bar.
From the perspective of the player, there are three possible situations they may encounter, which carry some respective payoff.
If the player decides to do not to visit the bar, then, independently of the number of bar tenants, the player receives a payoff $S$.
On the other hand, if the player decides to visit the El Farol bar, then depending on how crowded the bar is, they may receive a payoff that is strictly smaller, or strictly larger than $S$.
If the bar is crowded, \ie more than $C \ge 0$ other players have visited the bar at the same time, then each of them receives payoff $B < S$.
However, if the bar is not crowded, \ie the total number of tenants is less than $C$, then they receive payoff $G > S$.
It's not difficult to verify that the El Farol Bar game has a unique mixed Nash equilibrium, where each player chooses to visit the bar with probability $\frac{C}{N}$.

We are going to consider a hidden variant of the above game among $\nPlayers = 30$ players.
Each player $\play$ controls a $5$-dimensional vector of control variables $\state_\play \in \R^{5}$, which feeds to an individual, differentiable, and preconfigured \ac{MLP}.
The \ac{MLP}'s output $\lstate_\play = \latemap_\play(\state_\play)$ is guaranteed to lie in $[0, 1]$ and is interpreted as the probability of player $\play$ visiting the El Farol bar.
Regarding the \ac{MLP} configuration, we follow a similar structure as in the previous examples.
Specifically, the representation maps $\latemap_\play: \R^{5} \to [0, 1]$ are defined as
\begin{equation}
    \latemap_\play(\state_\play)
        = \sigmoid\pars[\big]{\alpha^{(2)}_\play \cdot \celu\pars{A^{(1)}_\play \times \state_\play}}
        \quad\text{for all $\play = 1, 2$},
\end{equation}
where $A^{(1)}_\play \in [-0.85, 0.85]^{4 \times 5}$, and $\alpha^{(2)}_\play \in [-1, 1]^{4}$ are randomly chosen.
The additional restriction to the domain of $A^{(1)}$ reduces the chance for over-flow numerical errors of the $\sigmoid$ activation function.
The activation function $\celu: \R \to (-1, \infty)$ is given as in \eqref{eq:CeLu} and is applied pair-wise to the output of $A^{(1)}_\play \times \state_\play$.
The loss functions of this game follow the standard variant's definitions, namely, we have that
\begin{equation}
    \mloss_\play(\lstate) 
        = S + \lstate_\play \pars[\bigg]{G - S + \prob\pars[\Big]{\sum_{\play \ne \playalt} \lstate_\play \ge C} (B - G)} + \hreg_{0.5}(\lstate)
            \quad\text{for all $\play \in \players$},
\end{equation}
and the latent game's unique equilibrium is at $\sol[\lstate]_\play = \frac{C}{N}$, $\play \in \players$.

The large dimensionality of the above game prohibits the usage of a detailed visualization as opposed to the previous examples.
However, sufficient information about the behavior \ac{PHGD}, and \ac{GD} can be gathered by the performance log-plot across $100$ random trajectories in \cref{fig:GDVsPGDInElFarolBar}.
Observe that, regardless of the increased number of players, the behavior of \ac{PHGD} is unaffected, \ie it converges to the game's equilibrium at an exponential rate.
On the other hand, the \ac{GD} fails to converge in the game's equilibrium.
In fact it eventually maintains a constant distance from it, unable to procceed further; an indication of a cycling behavior.

\bibliographystyle{icml}
\bibliography{bibtex/IEEEabrv,bibtex/Bibliography-PM,bibtex/refs,bibtex/references}

\end{document}